\def\WF{{\text{WF}}}
\def\cI{{\mathcal I}}
\def\cO{{\mathcal O}}
\def\cS{{\mathcal S}}
\def\sM{{\mathsf M}}
\def\sD{{\mathsf D}}
\def\cc{\text{c.c.}}
\def\bC{{\mathbb C}}           
\def\bN{{\mathbb N}} 
\def\bM{{\mathbb M}} 
\def\bR{{\mathbb R}}
\def\mA{\mathscr{A}} 
\def\mD{\mathscr{D}} 
\def\mE{\mathscr{E}} 
\def\mO{\mathscr{O}} 
\def\mS{\mathscr{S}}  
\def\mF{\mathscr{F}}  
\def\mI{\mathscr{I}}  
\def\mM{\mathscr{M}} 
\def\mV{\mathscr{V}}
\def\mN{\mathscr{N}} 
\def\mR{\mathscr{R}} 
\def\mP{\mathscr{P}}
\def\beq{\begin{eqnarray}}
\def\eeq{\end{eqnarray}}
\def\ep{\varepsilon}
\def\la{\lambda}
\def\ph{\varphi}
\def\De{\Delta}
\def\Si{\Sigma}
\def\Id{\mathbb{I}}
\def\supp{\mathrm{supp}}
\def\Floc{{\mathscr{F}_{\mathrm{loc}}}}
\def\Freg{\mathscr{F}_{\mathrm{reg}}}
\newcommand{\Ftloc}[1][]{\mathscr{F}_{T_{#1}\mathrm{loc}}}
\def\Fmloc{{\mathscr{F}_{\mathrm{mloc}}}}
\newcommand{\Fmuc}[1][]{\mathscr{F}_{\mu\textrm{c}_{#1}}}
\def\Areg{\mathscr{A}^{\mathrm{reg}}}
\def\Aon{\mathscr{A}^{\mathrm{on}}}
\def\A0{\mathscr{A}^{0}}
\def\tAreg{{\widetilde{\mathscr{A}}^{\mathrm{reg}}}}
\newcommand{\T}[1][]{\cdot_{T_{#1}}}
\newcommand{\molh}[2]{\mR^\hbar_{#1,#2}}
\newcommand{\molhh}[1]{\mR^\hbar_{#1}}
\def\kms{\omega^\beta}
\def\mink{\bM}
\def\vhlim{\mathrm{vH-}\lim}
\def\qmoller{\mR_{1,Q}}
\def\qcmoller{R_{1,Q}}
\def\bqmoller{\molh{1}{Q}}
\def\chimoller{\mR^\hbar_{V(\chi)}}
\def\chikmoller{\mR^\hbar_{V(h_k\chi)}}
\def\bq{\beta_{1,Q}}
\def\wbq{\widetilde{\beta}_{1,Q}}
\def\vstar{\star_{1,V}}
\def\qstar{\star_{1,Q}}
\def\vast{\ast_{1,V}}
\def\qast{\ast_{1,Q}}
\def\q{Q^{(1)}}
\def\cc{c_{2,1}}
\def\cinv{\frac{1}{\cc}}
\def\bx{\boldsymbol{x}}
\def\by{\boldsymbol{y}}
\def\bz{\boldsymbol{z}}
\def\bp{\boldsymbol{p}}
\def\bk{\boldsymbol{k}}
\def\bm{\boldsymbol{m}}
\newtheorem{theorem}{Theorem}[section]
\newtheorem{proposition}{Proposition}[section]
\newtheorem{lemma}{Lemma}[section]
\theoremstyle{definition}
\newtheorem{definition}{Definition}[section]
\newtheorem{remark}{Remark}[section]
\begin{document} 
%
%
 
\par 
\bigskip 
\LARGE 
\noindent 
{\bf The generalised principle of perturbative agreement and the thermal mass} 
\bigskip 
\par 
\rm 
\normalsize 
 

\large
\noindent 
{\bf Nicol\`o Drago$^{1,a}$}, {\bf Thomas-Paul Hack$^{1,b}$}, {\bf Nicola Pinamonti$^{1,2,c}$} \\
\par
\small
\noindent$^1$ Dipartimento di Matematica, Universit\`a di Genova - Via Dodecaneso 35, I-16146 Genova, Italy. \smallskip

\noindent$^2$ Istituto Nazionale di Fisica Nucleare - Sezione di Genova, Via Dodecaneso, 33 I-16146 Genova, Italy. \smallskip
\smallskip

\noindent E-mail: 
$^a$drago@dima.unige.it, 
$^b$hack@dima.unige.it,
$^c$pinamont@dima.unige.it\\ 

\normalsize

\par 
 
\rm\normalsize 

\rm\normalsize 
 
 
\par 
\bigskip 
 
\rm\normalsize 
\noindent {\small Version of \today}

\par 
\bigskip

\rm\normalsize

\small 

\noindent {\bf Abstract}. The Principle of Perturbative Agreement, as introduced by Hollands \& Wald, is a renormalisation condition in quantum field theory on curved spacetimes. This principle states that the perturbative and exact constructions of a field theoretic model given by the sum of a free and an exactly tractable interaction Lagrangean should agree. We develop a proof of the validity of this principle in the case of scalar fields and quadratic interactions without derivatives which differs in strategy from the one given by Hollands \& Wald for the case of quadratic interactions encoding a change of metric. Thereby we profit from the observation that, in the case of quadratic interactions, the composition of the inverse classical M\o ller map and the quantum M\o ller map is a contraction exponential of a particular type. Afterwards, we prove a generalisation of the Principle of Perturbative Agreement and show that considering an arbitrary quadratic contribution of a general interaction either as part of the free theory or as part of the perturbation gives equivalent results. Motivated by the thermal mass idea, we use our findings in order to extend the construction of massive interacting thermal equilibrium states in Minkowski spacetime developed by Fredenhagen \& Lindner to the massless case. In passing, we also prove a property of the construction of Fredenhagen \& Lindner which was conjectured by these authors.

\normalsize

\vskip .3cm

\section{Introduction}
In the last twenty years a solid conceptual framework of perturbative algebraic quantum field theory on curved spacetimes has been established \cite{BFK,BF00,HW01, HW02, BFV, HW05,Hollands:2007zg,  BDF,FredenhagenRejzner3}. This was possible thanks to the seminal work \cite{Radzikowski}, which introduced the powerful tools of microlocal analysis to algebraic quantum field theory. In particular, a number of axioms for the time--ordered products of Wick--polynomials on curved spacetime have been proposed in \cite{HW01, HW05}, based on earlier work in \cite{BFK,BF00}. In the former works it has been argued that these axioms are the renormalisation conditions which should be satisfied by any regularisation scheme on curved spacetimes compatible with general covariance and the principles of quantum field theory. In \cite{HW01,HW02,HW05} it has been proven that renormalisation schemes satisfying these conditions actually exist and the renormalisation freedom compatible with these renormalisation conditions has been classified.

One of these axioms, introduced in \cite{HW05}, is the Principle of Perturbative Agreement (PPA). This principle states that the perturbative and exact constructions of a field theoretic model given by the sum of a free and an exactly tractable interaction Lagrangean should agree. To explain this in more detail for  the example of a quadratic interaction, let us consider a quadratic action $\cS_1$ and a quadratic interaction potential $Q$ with at most two derivatives, which, in the case of a real scalar field, can quantify a change of the mass, a change of the coupling to the scalar curvature, a change of the metric and a change of the external current. We may construct the free exact algebras of observables $\mA_1$ and $\mA_2$ corresponding to the actions $\cS_1$ and $\cS_1 + Q$, as well as the perturbative algebra of interacting observables $\mA_{1,Q}$ corresponding the free action $\cS_1$ and the perturbation $Q$. The classical M\o ller map $\qmoller$ intertwines the classical dynamics of $\cS_1+Q$ and $\cS_1$ and is an isomorphism between $\mA_2$ and $\mA_1$, whereas the quantum M\o ller map $\bqmoller$, which enters the construction of the perturbative algebra $\mA_{1,Q}$, may be thought of as an isomorphism between $\mA_{1,Q}$ and a (subalgebra) of $\mA_{1}$. The PPA as stated and proved in \cite{HW05}, see also \cite{Zahn} for the case of higher spin fields and variations of a background gauge field, requires that $\bq\doteq \qmoller^{-1}\circ \bqmoller$ intertwines the time--ordered products corresponding to $\cS_1$ and $\cS_1 + Q$ and thus imposes renormalisation conditions on the time--ordered products of both theories. In fact, the validity of the PPA implies that $\qmoller$ is an isomorphism between $\mA_{1,Q}$ and a (subalgebra of) $\mA_2$. In \cite{HW05} several important physical implications of the validity of the PPA are discussed. In particular it is proven that a number of identities valid in a classical field theory hold in the corresponding quantum theory as well. 

In this work, we generalise the proof of the Principle of Perturbative Agreement for arbitrary quadratic $Q$ to the setting where an additional, not necessarily quadratic, interaction potential $V$ is present. We prove that, provided the time--ordered product satisfies the PPA, the classical M\o ller map extends to an isomorphism between the algebras $\mA_{2,V}$ and $\mA_{1,Q+V}$, i.e. the perturbative algebra corresponding to the free action $\cS_1 + Q$ and perturbation $V$, and the perturbative algebra corresponding to the free action $\cS_1$ and the perturbation $Q+V$, respectively. To this avail, we develop a proof of the PPA for quadratic $Q$ without derivatives which differs in strategy from the proof of \cite{HW05} for $Q$ encoding a change of metric and is based on the observation that $\bq\doteq \qmoller^{-1}\circ \bqmoller$ is, in a certain sense, a contraction exponential w.r.t. the difference of the (chosen) Feynman propagators $\De^F_{1+Q}-\De^F_1$ of the two quadratic models. In particular, we show that $\bq$ is a deformation in the sense of deformation quantization when applied to sufficiently regular functionals. We note that the proof of the PPA for quadratic $Q$ without derivatives has not been explicitly spelt out in \cite{HW05}, although the proof of the PPA for $Q$ encoding changes of the metric given in \cite{HW05} can be extended to general quadratic $Q$ without much effort. 

A further independent proof of the PPA for scalar fields and quadratic interactions without derivatives has been given in \cite{GHP}, where an explicit analytic regularisation scheme on curved spacetime is developed and shown to satisfy the PPA for this class of quadratic perturbations.

After proving the generalised PPA, we construct as an application interacting thermal equilibrium states for massless Klein--Gordon fields on Minkowski spacetime with an arbitrary interaction $V$. We accomplish this by combining the construction of such states in the massive case developed in \cite{FredenhagenLindner,Lindner:2013ila} with the idea of the thermal mass, which is an effective mass term appearing in $\phi^4$--theory upon changing the normal--ordering prescription from normal--ordering with respect to the free vacuum to normal--ordering with respect to the free thermal state. However, our construction also applies to the case of zero temperature, because we introduce a temperature--independent positive virtual mass whose magnitude turns out to be inessential for the construction. In passing, we also prove that the construction of \cite{FredenhagenLindner,Lindner:2013ila} is, in the adiabatic limit, independent of the temporal cut--off which enters this construction. This property was conjectured in \cite{FredenhagenLindner} and proved up to the strong clustering property which we discuss in Appendix \ref{sec:clustering}.

Our paper is organised as follows. We begin by reviewing the functional approach to perturbative algebraic quantum field theory on curved spacetimes in Section \ref{sec:pAQFT}, as this is the framework in which we shall work throughout. In Section \ref{sec:PPA}, we first review the precise formulation of the Principle of Perturbative Agreement introduced in \cite{HW05} for the case of at most quadratic interactions and sketch the strategy of the alternative proof of its validity given in the present work. After elaborating this proof of the PPA for quadratic interactions in several steps, where the final steps are only given for the case of quadratic interactions without derivatives, we prove the generalisation to the case where an additional general interaction is present in Section \ref{sec:gPPA}. In Section \ref{sec:KMS}, we review the construction of massive interacting KMS states on Minkowski spacetime developed in \cite{FredenhagenLindner, Lindner:2013ila} and use our results to generalise this construction to the massless case. The appendix contains the proofs of several subsidiary results.

\section{Functional Approach to Quantum Field Theory}
\label{sec:pAQFT}
In this section we briefly recall the functional approach to perturbative algebraic quantum field theories on curved spacetimes \cite{BDF,FredenhagenRejzner,FredenhagenRejzner2}, which is the setting of our work. For simplicity we will consider only the case of the real Klein--Gordon field. However, this framework can be applied to more general field theories including theories with local gauge symmetries, see e.g. \cite{FredenhagenRejzner3}. We shall consider only spacetimes $(\mM,g)$ which are globally hyperbolic throughout, see e.g. \cite{BGP} for a definition and properties. Moreover, we set $\mE(\mM^n)\doteq C^\infty(\mM^n,\bR)$, $\mE_\bC(\mM^n)\doteq C^\infty(\mM^n,\bC)$, $\mD(\mM^n)\doteq C_0^\infty(\mM^n,\bR)$, $\mD_\bC(\mM^n)\doteq C_0^\infty(\mM^n,\bC)$ and denote by $J^\pm(\mO)$ the causal future/past of $\mO\subset \mM$.

The functional approach can be thought of as to provide a concrete realisation of the abstract Borchers--Uhlmann algebra of quantum fields and its extension to include Wick polynomials and time--ordered products thereof, see \cite{HW01, HW02}. In this approach, the off--shell observables are described by complex--valued {\bf functionals} $F\in\mF$ over real--valued {\bf off-shell field configurations} $\phi\in\mE(\mM)$. Physically interesting observables can detect only local perturbations of the field configuration, hence physically relevant functionals are required to be supported in {\bf compact regions} of spacetime only. Here the support of a functional $F$ is defined as
\begin{align}
\supp\, F\doteq\{ & x\in \mM\,|\,\forall \text{ neighbourhoods }U\text{ of }x\ \exists \phi,\psi\in\mE(\mM),\,\supp\,\psi\subset U,
\\ & \text{ such that }F(\phi+\psi)\not= F(\phi)\}\,.\nonumber
\end{align}

Moreover, in order to be able to compute products among observables, certain regularity conditions are necessary. 
In particular, it is assumed that functionals representing observables are {\bf smooth}, namely  that, for each $\ph,\psi\in\mE(\mM)$, the function $\la\to F(\ph+\la\psi)$ has to be infinitely often differentiable with the $n$--th functional derivative at $\la=0$ represented by a compactly supported symmetric distribution $F^{(n)}(\ph)\in\mE_\bC^\prime(\mM^n)$ in the sense that
\begin{gather}\label{eq: smooth functional}
\left.\frac{d^n}{d\la^n}F(\ph+\la\psi)\right|_{\la=0}=
\left\langle F^{(n)}(\ph),\psi^{\otimes n}\right\rangle.
\end{gather}
We further restrict the possible distributions appearing as functional derivatives by demanding that their wave front set $\WF\left(F^{(n)}\right)$ has a particular form. This defines the set of {\bf microcausal functionals}, which is the maximal set of functionals we shall need.
\begin{gather}\label{def:microcausal functionals}
\Fmuc\doteq
\left\{
F\in\mF\left|\ F\textrm{ smooth, compactly supported, } \WF\left(F^{(n)}\right)\cap\left(\overline{V}^n_+\cup\overline{V}^n_-\right)=\emptyset
\right.\right\},
\end{gather}
here $V_{+/-}$ is a subset of the cotangent space formed by the elements whose covectors are contained in the future/past light cones and $\overline{V}_{+/-}$ denotes is closure.
This set contains the set $\Floc$ of {\bf local functionals} formed by the microcausal functionals whose $n$-th functional derivatives $F^{(n)}$ are supported only on the diagonal $\sD_n\subset \mM^n$ and satisfy $\WF(F^{(n)})\perp T\sD_n$, as well as the {\bf regular functionals} $\Freg$, which are the subset of $\Fmuc$ consisting of functionals with $\WF\left(F^{(n)}\right)=\emptyset$ for all $n$. A typical element of $\Floc$ is a smeared field polynomial
\begin{gather}\label{eq:smearedpolynomial}
F_{m,f}(\phi)\doteq
\int_\mM f\phi^m\;d\mu_g\qquad f\in \mD_\bC(\mM)\,, \qquad \phi\in\mE(\mM)\,,
\end{gather}
where $d\mu_g$ is the canonical volume form induced by the metric $g$. The case $m=1$, i.e. a linear functional 
\begin{gather}\label{def:linear functionals}
F_f(\phi)\doteq
\int_\mM f\phi\;d\mu_g\qquad f\in \mD_\bC(\mM)\,, \qquad \phi\in\mE(\mM)\,,
\end{gather}
is a special case of an element of $\Freg$.

\subsection{Algebras of observables in linear and affine theories}

All these sets of functionals are linear spaces, which we can endow with an involution $*$, defined as the complex conjugation $F^*(\phi)\doteq\overline{F(\phi)}$. Functionals encoding physical observables have to satisfy $F^*=F$. We may further give the linear involutive space $\Fmuc$ and its linear subspaces the structure of an algebra by equipping them with a product which encodes the quantum commutation relations. 
Whenever the equation of motion of the model we are going to quantize is an affine Klein--Gordon--type equation, i.e.
\beq\label{def:KGeqn}
P \phi = j\qquad P\doteq-\Box_g+M\,,
\eeq
where $\Box_g$ is the d'Alembert operator associated to $g$ and $M, j\in\mE(\mM)$, the construction of the above--mentioned product can be made explicit by formally deforming the pointwise product 
\[
(F\cdot G) (\phi) \doteq \sM(F\otimes G)(\phi) \doteq F(\phi)G(\phi)
\]
to a non--commutative product which we shall indicate by $\star$.

To this end, we consider a distribution $\De^+\in\mD_\bC^\prime(\mM^2)$ which is of the form
\[
\De^+ = \De^S + \frac{i}{2}\De
\]
where $\De^S$ is real and symmetric and $\De$ is the real and antisymmetric {\bf causal propagator} corresponding to the normally hyperbolic Klein--Gordon operator $P$ \eqref{def:KGeqn}. We recall that the causal propagator is uniquely defined as 
$\De = \De^R-\De^A$,
the retarded--minus--advanced fundamental solution of the linear part of \eqref{def:KGeqn}, (see e.g. \cite{BGP} for their rigorous unique construction in every globally hyperbolic spacetime), whereas $\De^S$ is non--unique. Moreover, we demand that $\De^+$ satisfies the {\bf Hadamard condition} in its microlocal form \cite{Radzikowski}, i.e. that the wave front sent $\WF(\De^+)$ reads
\beq\label{def:muc}
\WF(\De^+)=\mathscr{V}^+\doteq \{
(x,x^\prime,\xi,-\xi^\prime)\in T^*M^2\backslash\{0\}\,|\,(x,\xi)\sim(x^\prime,\xi^\prime)\textrm{ and }\xi\rhd 0
\},
\eeq
where $(x,\xi)\sim(x^\prime,\xi^\prime)$ means that there exists a null geodesic $\gamma$ connecting $x$ to $x^\prime$ such that $\xi$ is coparallel and cotangent to $\gamma$ at $x$ and $\xi^\prime$ is the parallel transport of $\xi$ from $x$ to $x^\prime$ along $\gamma$, whereas $\xi\rhd 0$ indicates that $\xi$ is future directed.

Given a distribution $\De^+$ with these properties, we define the $\star$--product on $\Fmuc$ as
\beq\label{eq:exp-product}
F \star G = \sM \circ \exp \hbar\,\Gamma_{\De^+} (F \otimes G)  , \qquad \Gamma_{\De^+} = \int_{\mM^2}\De^+(x,y) \frac{\delta}{\delta \phi(x)} \otimes \frac{\delta}{\delta \phi(y)} \;d\mu_g(x)\,d\mu_g(y)\,,
\eeq
explicitly,
\begin{equation}\label{eq:exp-product explicit}
F\star G\doteq F\cdot G+
\sum_{n\geq 1}\frac{\hbar^n}{n!}\left\langle \left(\De^+\right)^{\otimes n},F^{(n)}\otimes G^{(n)}\right\rangle.
\end{equation}
This product is well--defined on account of the regularity properties of microcausal functionals as well as the Hadamard property of $\De^+$ and has to be understood in the sense of formal power series in $\hbar$ for functionals which do not have only finitely many non--vanishing functional derivatives \cite{BDF,FredenhagenRejzner,FredenhagenRejzner2}. By construction, the $\star$--product is compatible with canonical commutation relations among linear fields
\beq\label{eq:commutation}
[F_f,F_g]_\star\doteq F_f\star F_g-F_g\star F_f= i\hbar \De(f,g)\doteq i\hbar\langle f,\De g\rangle
\eeq
and with the involution $*$
\beq\label{eq:involution}
(F\star G)^* = G^* \star F^*\,.
\eeq
Altogether we arrive at the following definition.
\begin{definition}\label{def:free_algebra_off} The {\bf off--shell algebra of observables} corresponding to the field theory defined by \eqref{def:KGeqn} is the involutive algebra
\beq
\mA := \left(\Fmuc,\star,*\right)\,.
\eeq
\end{definition}

The Hadamard condition for $\De^+$ determines $\De^S$ only up to a smooth (and symmetric) part, however, algebras constructed with different choices of $\De^+$ are isomorphic. Given two $\De^+$, $\De^{+\prime}$ with the above--mentioned properties and the related algebras $\mA = \left(\Fmuc,\star,*\right)$, ${\mA}^\prime = \left(\Fmuc,\star^\prime,*\right)$, the isomorphism between the algebra $\mA$ and ${\mA}^\prime$ is given by
\beq
\label{def:alpha}
\alpha_w: \mA^\prime\to  {\mA}\qquad 
F\mapsto\alpha_w(F)\doteq\exp \hbar\,\Gamma_{w}(F)=
\sum_{n\geq 0}\frac{\hbar^n}{n!}\left\langle w^{\otimes n},F^{(2n)}\right\rangle\qquad
w\doteq \De^+-\De^{+\prime}\,.
\eeq
Note that $\alpha_w$ maps $\Floc$ into itself and that $w$ is real by definition. We now define the on--shell version of $\mA$.

\begin{definition}\label{def:free_algebra} The {\bf on--shell algebra of observables} corresponding to the field theory defined by \eqref{def:KGeqn} is the quotient
\beq
\Aon := \mA/\mI\,,
\eeq
where, whenever the $\star$--product is implemented by a $\De^+$ satisfying $P\circ \De^+ = \De^+ \circ P=0$, $\mI$ is the (closed) $\ast$--ideal in $\mA$ generated by functionals $F\in\Fmuc$ of the form
\begin{align*}
F(\phi)&=\int_{\mM^n}d\mu_g(x_1)\ldots d\mu_g(x_n)\big\{\left(P_{x_1}T(x_1,\ldots,x_n)\right)\phi(x_1)\ldots\phi(x_n)-\\
&\qquad\qquad-T(x_1,\ldots,x_n)j(x_1)\phi(x_2)\ldots \phi(x_n)\big\}\,,
\end{align*}
with $T$ being an arbitrary symmetric distribution which satisfies $\WF(T)\cap\left(\overline{V}^n_+\cup\overline{V}^n_-\right)=\emptyset$. If a $\star$--product $\star^\prime$ is implemented by a general $\De^{+\prime}$ of Hadamard form, then the corresponding ideal $\mI^\prime\subset \mA^\prime=(\Fmuc,\star^\prime,\ast)$ is defined as $\mI^\prime\doteq\alpha^{-1}_{w}(\mI)=\alpha_{-w}(\mI)$, where $\alpha_w:\mA^\prime\to\mA$ \eqref{def:alpha} is the $*$--isomorphism between $\mA^\prime$ and an algebra $\mA=(\Fmuc,\star,\ast)\supset \mI$, in which the $\star$--product is implemented by a $\De^+$ satisfying $P\circ \De^+ = \De^+ \circ P=0$.
\end{definition}

Note that the off--shell $\mA$ does not depend on the source term $j$ in \eqref{def:KGeqn}. We shall remain off--shell in the following as this is necessary for a consistent discussion of perturbation theory.

\begin{remark}\label{rem:bu}
 The subalgebra $\Areg$ of $\mA$ defined as 
\[
\Areg \doteq (\Freg,\star,*)
\]
is generated by the identity and linear functionals and is a concrete representation of the ``normal--ordered'' Borchers--Uhlmann algebra, see e.g. \cite{HW01}. The usual Borchers--Uhlmann algebra is obtained from $\Areg$ as $\alpha_{-\De^S}(\Areg)$, where $\alpha$ is defined as in \eqref{def:alpha}, and is characterised by a $\star$--product defined only in terms of the causal propagator $\Delta$.
\end{remark}

\subsection{Algebras of interacting observables in general non--linear theories}
\label{sec:interactingobservables}

Theories whose dynamics are governed by general non--linear equations are usually constructed perturbatively over linear (or affine) theories, as this is often the only available possibility. Consequently, interacting observables are identified as formal power series in the perturbation with coefficients in $\mA$ by means of a suitable map. In order to explicitly construct this map, we need to introduce a new product among elements of $\Floc$, the time--ordered product $\T$.

The time--ordered product is a product characterised by symmetry and 
\begin{gather}\label{def:time ordered product}
F \T G\doteq F\star G\qquad\textrm{if } F\gtrsim G,
\end{gather}
where $F\gtrsim G$ means that there exists a Cauchy surface (see \cite{BGP} for a definition) $\Si$ such that $\supp(F)\subseteq J^+(\Si)$ and $\supp(G)\subseteq J^-(\Si)$. We may first consider the time--ordered product on $\Freg$. In this case, the time--ordered product can be written by a ``contraction exponential'' similar to the one defining the $\star$--product, namely as
\beq\label{def:timeordered2}
F\T G=F\cdot G+ 
\sum_{n\geq 1}\frac{\hbar^n}{n!}\left \langle\left(\De^F\right)^{\otimes n} , F^{(n)}\otimes  G^{(n)}\right\rangle\qquad F,G\in \Freg,
\eeq
where $\De^F=\Delta^++i\Delta^A$ is the {\bf Feynman propagator} associated to $\De^+$. 

The product defined in \eqref{def:timeordered2} is not well--defined on generic elements of $\Fmuc$. One reason for this is the fact that the wave front set of the integral kernel of $\Delta^A$ contains the wave front set of the $\delta$--distribution because $P\Delta^A = \Id$ with $P$ as in \eqref{def:KGeqn}. Consequently, the product \eqref{def:timeordered2} is ill--defined on $\Floc$ because pointwise powers of $\Delta^F$ are ill--defined. To overcome this problem it is convenient to consider the time--ordered product of local functionals as a multilinear map from multilocal functionals to $\mA$
\beq\label{def:Fmloc} T:\Fmloc \to \mA\,,\qquad \Fmloc \doteq \bigoplus^\infty_{n=0}\Floc^{\!\!\!\!\!\!\!\otimes n}\eeq
satisfying a set of axioms \cite{BF00, HW01,HW02,HW05, BDF}:

\begin{enumerate}
\item $T(F)=F$ for all constant and linear functionals $F$, symmetry and the causal factorisation property:
$$T(F_1,\ldots,F_n)=T(F_1,\ldots,F_k)\star T(F_{k+1},\ldots,F_n)\,,$$
if the supports $\supp F_i, i = 1,\ldots,k$ of the first $k$ entries do not intersect the past of the supports $\supp F_j,j = k+1,\ldots,n$ of the last $n-k$ entries,
\item  the product rule for functional derivatives (also called $\phi$--independence):
\beq\label{eq:phi_independence_of_T}
 T\left(F_1,\ldots, F_n\right)^{(1)}=\sum^n_{j=1} T(F_1,\ldots, F^{(1)}_j,\ldots, F_n)\,,\eeq
\item suitable locality and covariance properties w.r.t. isometric and causality preserving embeddings $(\mM_1,g_1)\to (\mM_2,g_2)$,
\item the microlocal spectrum condition, which is a remnant of translation invariance:
let us consider $n$ local functionals $F_{m_i,f_i}$, $i\in\{1,\ldots,n\}$ of the form \eqref{eq:smearedpolynomial}, i.e. $n$ smeared field polynomials, where also polynomials in derivatives of the field are allowed and let us set 
$$\omega_n(f_1,\ldots,f_n)\doteq \left.T\left(F_{m_1,f_1},\ldots,F_{m_n,f_n}\right)\right|_{\phi=0}\,.$$
This defines a distribution $\omega_n\in\mD^{\prime}_\bC(\mM^n)$ and one demands $\WF(\omega_n)\subset \mV^T_n$, where $\mV^T_n\subset T^*\mM^n\setminus\{0\}$ is defined as follows. We define ``decorated graphs'' $G\subset (\mM,g)$ as graphs embedded in $\mM$ whose 
vertices are points $x_1, \dots, x_n \in \mM$ and whose edges $e$ are oriented null--geodesics $\gamma(e)$. 
We denote by $p_e$ the coparallel and cotangent covectorfield of $\gamma(e)$ and denote by $s(e)=i$ and $t(e)=j$ the source and target of an edge connecting the points $x_i$, $x_j$ with $i<j$. Moreover, if $x_{s(e)} \notin J^\pm(x_{t(e)})$, then $p_e$ is required to be future/past--directed. We may now define $\mV^T_n$ as
\begin{align*}
\label{gamtdef}
\mV^T_n \doteq &\;
\Big\{(x_1, \ldots, x_n, \xi_1,\ldots\xi_n) \in T^*\mM^n\setminus\{0\}\mid 
\exists \,\, \text{decorated graph $G$ with vertices} \nonumber\\
& \quad\text{$x_1, \dots, x_n$ s.t.
$\xi_i = \sum\limits_{e: s(e) = i} p_e(x_i) - \sum\limits_{e: t(e) = i} p_e(x_i) 
\quad \forall i$} \Big\}. 
\end{align*}
\item unitarity
\beq\label{eq:unitarity}
T\left(F_1,\ldots,F_n\right)^* = \sum_{\mP=I_1\uplus\ldots\uplus I_j}(-1)^{n+j}\prod^{\star}_{I\in\mP}T\left(\bigotimes_{i\in I}F^*_i\right)\,,
\eeq
where $\mP=I_1\uplus\ldots\uplus I_j$ denotes a partition of $\{1,\ldots,n\}$ into $j$ pairwise disjoint, non--empty subsets $I_i$,
\item the Leibniz rule (also called action ward identity)
\beq\label{def:Leibniz}
T(F_1,\ldots, F_n) = 0\quad\text{if}\quad F_i(\phi) = \int_\mM d B(\phi)\quad \text{for at least one $i\in\{1,\ldots,n\}$}\,
\eeq
where $B(\phi)$ is a compactly supported three-form,
\item suitable scaling properties, cf. \cite{HW01}, and and a suitable smooth or analytic dependence on the metric, see \cite{HW01}, however also \cite{Khavkine:2014zsa},
\item the Principle of Perturbative Agreement for at most quadratic interactions with at most two derivatives, cf. \cite{HW05} and Section \ref{sec:PPA}.
\end{enumerate}

Following the approach of Epstein and Glaser \cite{EG}, causal factorisation property is used to construct a solution to these axioms by an induction over the number of factors. In particular, at each induction step certain expressions of $\De^F$, which are a priori defined only outside the total diagonal of $\mM^n$, are extended to the full $\mM^n$ by using methods similar to the ones introduced by Steinmann \cite{Steinmann}, see e.g. 
\cite{BF00, HW02,HW05,BDF,FredenhagenRejzner,FredenhagenRejzner2} for details. The resulting time--ordered map is not unique but the freedom left is local and the freedom of the time--ordered products appearing in the perturbative construction of renormalisable interacting models may be absorbed by a redefinition of the parameters of the interacting model under consideration (ibid.). 

For the special case of local functionals, the time--ordered map $T$ corresponds to the definition of local and covariant Wick polynomials as elements in $\mA$ (ibid.). In the case of several factors, one can define a time--ordered product $\T$ induced by the map $T$ as
\beq \label{def:timeorderedproductmap}F_1 \T \ldots \T F_n \doteq T\left(T^{-1}(F_1),\ldots,T^{-1}(F_n)\right).\eeq
It has been proven in \cite{FredenhagenRejzner3}, that, as suggested by the notation, the resulting $\T$ is indeed an iterated binary and associative operation on a subset of $\Fmuc$ which contains the space of functionals consisting of formal series of time--ordered products of local and/or regular functionals

\beq\label{def:Ftloc}
\Ftloc \doteq \left\{F\in\Fmuc\,|\, F=\sum_{n\ge 0} F_{n,1}\T \,\ldots \,\T F_{n,n}\,, F_{k,l}\in\Floc\cup \Freg \right\}\,.
\eeq
Here, the time--ordered product of a regular functional $F$ with an arbitrary $G\in\Ftloc$ is uniquely defined by \eqref{def:timeordered2} also for non--regular $G$. 
For later purposes we define a subalgebra $\A0\subset\mA$ as
\beq\label{def:A0}
\A0\doteq \text{the algebra $\star$--generated by $F\in\Ftloc$}\,.
\eeq
\begin{remark}\label{remark:TonMUC}
The renormalised time--ordered product constructed as explained above fails to be well--defined on the full space of microcausal functionals $\Fmuc$. This may be seen by naively evaluating the expression $F \T (G\star H)$, where $F$, $G$ and $H$ are non--linear local functionals, by means of \eqref{def:timeordered2} and by observing that non--local divergencies occur in the evaluation of this expression. Notwithstanding, the domain of $\T$ is sufficiently large for perturbation theory.
\end{remark}

Once the renormalised time--ordered product is constructed, it is possible to define interacting observables as formal power series with coefficients in $\mA$. To this avail we consider a model defined by the action
\[
\cS = \cS_1 + V\,
\]
where $\cS_1$ is an action functional corresponding to Euler--Lagrange equations of the form \eqref{def:KGeqn} and $V\in\Floc$ is arbitrary, but usually considered to be real $V^*=V$. For an arbitrary functional $F\in\Ftloc$, we define the {\bf$S$-matrix} $S_{1,F}\in\Ftloc$ by 
\beq\label{def: S-matrix}
S_{1,F}\doteq
\exp_{T_1}\left(\frac{i}{\hbar}F\right)\doteq
\sum_{n\geq 0}\frac{i^n}{n!\hbar^n}\underbrace{F\T[1] \ldots\T[1] F}_{n\textrm{ times}}\,,
\eeq
and we define the {\bf relative $S$-matrix} $\mS_{1,V}(F)$ associated to $\cS_1$, $V$ and an arbitrary functional $F\in\Ftloc$ as
\beq\label{def:relativeSmatrix}
\mS_{1,V}(F)\doteq S^{-1}_{1,V}\star_1 S_{1,V+F}\,.
\eeq
Here, $\star_1$ and $\T[1]$ are $\star$-- and time--ordered products related to $\cS_1$ and $S^{-1}_{1,V}$ is the inverse of $S_{1,V}$ w.r.t. $\star_1$.  We may then define the {\bf (retarded) quantum M\o ller map} via the Bogoliubov formula as
\beq
\label{def:quantum moller operator}
\molh{1}{V}(F)\doteq \left.\frac{\hbar}{i}\frac{d}{d\lambda}\mS_{1,V}(F)\right|_{\lambda=0}=S_{1,V}^{-1}\star_1 (S_{1,V}\T[1] F)\qquad F\in\Ftloc[1]\,,
\eeq
where $\Ftloc[1]$ is defined by means of $\T[1]$ as in \eqref{def:Ftloc}, which is a formal power series in $V$ (and its functional derivatives). As the name suggests, the quantum M\o ller map at zeroth order in $\hbar$ equals, in the sense of formal power series in $V$, the classical M\o ller map which we shall discuss in the next section \cite{DF2}. Note that $\molh{1}{V}(F)$ is a formal power series with values in $\A0_1$, cf. \eqref{def:A0}.

By means of the quantum M\o ller map we can define the algebra of interacting observables $\mA_{1,V}$ and its regular version $\Areg_{1,V}$ corresponding to the base theory $\cS_1$ and the interaction $V$.

\begin{definition}\label{def:interactingalgebra}
The {\bf off--shell algebra of interacting observables} $\mA_{1,V}$ is the $*$--algebra that is $\star_1$--generated by the functionals $\molh{1}{V}(F)$ with $F\in \Ftloc[1]$. The {\bf off--shell algebra of regular interacting observables} $\Areg_{1,V}$ is the $*$--algebra that is $\star_1$--generated by the functionals $\molh{1}{V}(F)$ with $F\in\Freg$. The {\bf on--shell algebra of interacting observables} $\Aon_{1,V}$ is the quotient $\mA_{1,V}/\mI_{1,V}$ where, considering $\mA_{1,V}$ as a subalgebra of $\mA_1$, the $\ast$--ideal $\mI_{1,V}$ is defined as $\mI_{1,V}\doteq\mI_1\cap \mA_{1,V}$, where $\mI_1$ is defined as in Definition \ref{def:free_algebra}.
\end{definition}

If one considers a regular interaction $V\in\Freg$, and corresponding regular interacting observables, one may define 
their algebra in a direct manner as 
\beq
\tAreg_{1,V}\doteq(\Freg,\vstar,\vast)\eeq 
with the interacting $\star$--product $\vstar$ and the interacting involution $\vast$ defined as
\beq\label{def:interactingalgebradirect}
\label{operations on the interacting algebra}
F\vstar G\doteq(\molh{1}{V})^{-1}\left(\molh{1}{V}(F)\star_1\molh{1}{V}(G)\right),\qquad
F^{\vast}\doteq (\molh{1}{V})^{-1}\left(\molh{1}{V}(F)^*\right)\,.
\eeq
Here, the inverse quantum M\o ller map is given explicitly by
\beq
(\molh{1}{V})^{-1}(F)=S_{1,-V}\T[1] (S_{1,V}\star_1 F)\,.
\eeq
However, $\vstar$, $\vast$ are not well--defined for general $F,G\in\Ftloc$ and $V\in\Floc$ because the inverse M\o ller map is not well--defined on a sufficiently large domain on account of Remark \ref{remark:TonMUC}, see \cite{KasiaThesis} for details. Consequently, \eqref{def:interactingalgebradirect} and the naively defined $\widetilde{\mA}_{1,V}\doteq (\mF_{1,V},\vstar,\vast)$, where $\mF_{1,V}$ is the space of functionals $\vstar$--generated by $\Ftloc[1]$, are in general ill--defined for $V\in\Floc$. In spite of this fact, one may always think of $\widetilde{\mA}_{1,V}$ as being represented via the quantum M\o ller map on the well--defined algebra $\mA_{1,V}$, because formally $\molh{1}{V}$ is a $*$--isomorphism between $\widetilde{\mA}_{1,V}$ and $\mA_{1,V}$. This point of view is in general sufficient for perturbation theory. However, we shall demonstrate in the following section that $\tAreg_{1,V}$ is well--defined at least if $V$ is a quadratic local functional. Note that the interacting involution $\vast$, provided it is well--defined, in general differs from the simple complex conjugation $*$ if $V$ is non--linear.

We close our brief review of the functional approach to perturbative quantum field theory by the simple, but powerful observation that the time--ordered product associated to $\vstar$, provided the latter is well--defined, is $\T[1]$, see e.g. \cite{Lindner:2013ila}.

\begin{lemma}\label{pr:timeorderedprodof1Q}
For all $V,F,G\in \Ftloc[1]$ such that $F\gtrsim G$
$$\molh{1}{V}\left(F\T[1]G\right)=\molh{1}{V}(F)\star_1\molh{1}{V}(G)$$
\end{lemma}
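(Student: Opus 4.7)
The plan is to reduce the identity to the causal factorisation axiom of the time--ordered product by decomposing $V$ relative to a Cauchy surface separating $F$ and $G$. Since $F\gtrsim G$, pick a Cauchy surface $\Sigma$ with $\supp F\subset J^+(\Sigma)$ and $\supp G\subset J^-(\Sigma)$, and write $V=V_++V_-$ with $\supp V_\pm\subset J^\pm(\Sigma)$. For local $V$ this decomposition is achieved via a smooth partition of unity subordinate to $\{J^+(\Sigma),J^-(\Sigma)\}$; the extension to general $V\in\Ftloc[1]$ follows by multilinearity from the fact that $\Ftloc[1]$ is generated by $\T[1]$--products of local and regular functionals.

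Expanding the $T_1$--exponentials in \eqref{def: S-matrix} via the multinomial theorem and applying causal factorisation term--by--term to $V_+^{\T[1]k}\T[1]V_-^{\T[1](n-k)}$ (legitimate since $V_+\gtrsim V_-$) yields the factorisation of $S$--matrices
\begin{align*}
S_{1,V}=S_{1,V_+}\star_1 S_{1,V_-}\,, \qquad S_{1,V}^{-1}=S_{1,V_-}^{-1}\star_1 S_{1,V_+}^{-1}\,.
\end{align*}
Using the symmetry and associativity of $\T[1]$ to reorder the factors and then invoking causal factorisation once more to separate arguments supported in $J^+(\Sigma)$ from those supported in $J^-(\Sigma)$, one obtains
\begin{align*}
S_{1,V}\T[1]F &= (S_{1,V_+}\T[1]F)\star_1 S_{1,V_-}\,,\\
S_{1,V}\T[1]G &= S_{1,V_+}\star_1 (S_{1,V_-}\T[1]G)\,,\\
S_{1,V}\T[1]F\T[1]G &= (S_{1,V_+}\T[1]F)\star_1 (S_{1,V_-}\T[1]G)\,,
\end{align*}
where the third identity uses $F\gtrsim G$ together with $\supp V_\pm\subset J^\pm(\Sigma)$.

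Substituting these expressions into the definition \eqref{def:quantum moller operator} of both sides of the claimed identity, one finds that $\molh{1}{V}(F)\star_1\molh{1}{V}(G)$ contains the product
$$S_{1,V_-}\star_1 S_{1,V}^{-1}\star_1 S_{1,V_+}=S_{1,V_-}\star_1 S_{1,V_-}^{-1}\star_1 S_{1,V_+}^{-1}\star_1 S_{1,V_+}=\Id$$
in its middle, so that both sides collapse to
$$S_{1,V}^{-1}\star_1(S_{1,V_+}\T[1]F)\star_1(S_{1,V_-}\T[1]G)\,,$$
which proves the lemma. The principal technical obstacle lies in justifying the decomposition $V=V_++V_-$ in such a way that all intermediate $T_1$--exponentials and $\star_1$--products remain well--defined within $\Ftloc[1]$; once this is settled (which is immediate for local $V$), the remainder of the argument is purely formal algebraic manipulation.
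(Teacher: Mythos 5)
Your overall route is the same as the paper's: the identity is reduced to a factorisation property of the ($S$- or relative $S$-)matrices, which is in turn obtained by splitting $V=V_++V_-$ across a Cauchy surface separating $F$ from $G$; your final algebraic collapse of $S_{1,V_-}\star_1 S_{1,V}^{-1}\star_1 S_{1,V_+}$ to the identity is correct and matches what the paper does in its Appendix on the factorisation of the $S$--matrix. However, there is a genuine gap precisely at the step you dismiss as ``immediate for local $V$'': the decomposition $V=V_++V_-$ with $V_+\gtrsim V_-$ does not exist for smooth $V_\pm$. A smooth partition of unity necessarily has a transition region of nonempty interior, so $\supp(V_+)\cap\supp(V_-)$ contains an open collar around $\Sigma$ and cannot be contained in any single Cauchy surface; hence $V_+\gtrsim V_-$ fails whenever $\supp(V)$ meets a neighbourhood of $\Sigma$ (the generic and relevant case, e.g.\ in the adiabatic limit), and the causal factorisation axiom cannot be applied to $V_+^{\T[1] k}\T[1] V_-^{\T[1](n-k)}$. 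If one instead uses a sharp (Heaviside) cut, the $V_\pm$ are no longer local functionals in the admissible class, and the time--ordered products $S_{1,V_+}\T[1] S_{1,V_-}$ etc.\ are only defined up to ambiguous contributions supported on $\Sigma$.

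This is exactly the obstruction the paper resolves in its Appendix \ref{sec:factorisation}: a lemma shows that for functionals whose supports overlap only on $\Sigma$ the time--ordered product equals the $\star$--product plus a (non-unique) remainder $R_\Sigma$ supported on $\Sigma$, and the remainder is then argued to vanish because the left-hand side of the factorisation identity is independent of the (largely arbitrary) choice of $\Sigma$. Your proof needs this additional ingredient --- or some substitute for it --- before the chain of identities for $S_{1,V}\T[1]F$, $S_{1,V}\T[1]G$ and $S_{1,V}\T[1]F\T[1]G$ can be asserted. Once that is supplied, the rest of your manipulation is sound, and is essentially an unpacked version of the paper's shorter argument via the causal factorisation of the relative $S$--matrix $\mS_{1,V}(F+G)=\mS_{1,V}(F)\star_1\mS_{1,V}(G)$ followed by differentiation in the couplings.
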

\begin{proof} We first note that
\[
\molh{1}{V}(F\T[1] G) =  \left.\frac{\hbar^2}{i^2}\frac{d}{d\lambda} \frac{d}{d\mu} \mathscr{S}_{1,V}(\lambda F+\mu G) \right|_{\lambda,\mu=0}.
\]
By the properties of the time--ordered product $\T[1]$, $F\gtrsim G$ implies $\mathscr{S}_{1,V}(F+G)=\mathscr{S}_{1,V}(F)\star_1\mathscr{S}_{1,V}(G)$, i.e. the causal factorisation property of the relative $S$--matrix, see Appendix \ref{sec:factorisation}. Using this, we find
\[
\molh{1}{V}(F\T[1] G) =  \left.\frac{\hbar^2}{i^2}\frac{d}{d\lambda} \frac{d}{d\mu} 
\mathscr{S}_{1,V}(\lambda F) \star_1 \mathscr{S}_{1,V}(\mu G) \right|_{\lambda,\mu=0} = 
\molh{1}{V}(F) \star_1  \molh{1}{V}(G).
\]
\end{proof}

\section{The Principle of Perturbative Agreement}
\label{sec:PPA}
We are now in position to start our discussion of the Principle of Perturbative Agreement (PPA). This principle has been introduced in \cite{HW05} as an axiom to fix the freedom arising from the extension of the time--ordered product from regular to local functionals. To discuss this principle in the case considered in this work (cf. Remark \ref{rem:PPAother} for the remaining cases), we consider two at most quadratic action functionals $\cS_1$, $\cS_2$, i.e. two actions of the form
\beq\label{def:quadraticactions}
\cS_i(\phi)=\cS_i^*(\phi) \doteq \frac12\int_\mM f\left(g_i^{\mu\nu} (\nabla_\mu \phi) \nabla_\nu \phi + M_i\phi^2  - 2j_i \phi \right)d\mu_{g_i}\,.
\eeq
Here $M_i, j_i\in \mE(\mM)$ and $g_i$ are two Lorentz metrics such that the spacetimes $(\mM,g_i)$ are globally hyperbolic. Moreover, we need the technical condition \cite{Brunetti:2012ar,Khavkine:2012jf} that 
\beq\label{def:metriccondition}
g_2 > g_1\qquad \Leftrightarrow\qquad J_1^\pm(p)\subset J_2^\pm(p)\; \forall\;p\in\mM.
\eeq
$f$ is a test function introduced to achieve $\cS_i\in \Floc$, however, the Euler--Lagrange equations of $\cS_i$, $\cS_i^{(1,i)}(\phi)=0$, do not depend on $f$ if we choose $f$ equal to 1 on the region of spacetime of interest, which we shall do implicitly throughout. Consequently, we set
\beq\label{def:euler-lagrange}
\cS_i^{(1,i)}(\phi)\doteq P_i\phi - j_i\,,\qquad P_i\doteq -\Box_{g_i}+M_i\,,\qquad i\in\{1,2\}\,.
\eeq

Here and in the following we stress with our notation that the functional derivative in the sense of an integral kernel of a distribution depends on the background metric. I.e. we define for a smooth functional $F$ and $h\in\mD(\mM)$
\beq\label{def:funcderivmetric}
\left\langle F^{(1,i)}(\phi),h\right\rangle_i \doteq \frac{d}{d\lambda} F(\phi+\lambda h)|_{\lambda=0}\,,\qquad \langle f ,h\rangle_i\doteq \int_\mM f h\, d\mu_{g_i}\,.
\eeq
In particular, it holds
\beq\label{def:funcderivmetric2}
F^{(1,1)}(\phi) = \cc F^{(1,2)}(\phi)\,,\qquad \cc\doteq \frac{\sqrt{|\det g_2|}}{\sqrt{|\det g_1|}}\,.
\eeq

 A term of the form $\phi A^\mu\nabla_\mu \phi$ has been excluded from $\cS_i$ because it is equivalent to $-\phi^2 \nabla^\mu A_\mu$ which can be subsumed under $M_i$. We may consider the difference of the two actions as a perturbation $Q(\phi)\doteq \cS_2(\phi)-\cS_1(\phi)$ and write it w.r.t. the volume measure of $g_1$ as
\beq\label{quadratic potential}
Q(\phi)=Q^*(\phi)=\frac12\int_\mM \left(G^{\mu\nu} (\nabla_\mu \phi) \nabla_\nu \phi + M\phi^2 - 2j \phi \right)d\mu_{g_1}
\eeq
where now $M$, $j$ and $G$ are smooth and compactly supported so that $Q\in\Floc$. We note that $M$ does not only quantify a perturbative mass correction, but also a perturbative correction to the coupling of the Klein--Gordon field to the scalar curvature. Moreover, $M$ and $j$ also quantify a change of the metric because e.g. $M=\cc M_2 - M_1$.

\subsection{Formulation of the Principle of Perturbative Agreement}
\label{sec:formulation_PPA}

One can now proceed in two ways to construct the algebra of observables corresponding to the model defined by $\cS_2$. On the one hand we can construct the (off--shell) algebra of observables $\mA_2\doteq (\Fmuc[2],\star_2)$ (we shall suppress the involutions in the following) directly, where $\star_2$ is a $\star$--product corresponding to $P_2$ and $\Fmuc[2]$ is defined as in \eqref{def:microcausal functionals}, but with respect to $g_2$. Alternatively, we can construct it perturbatively over the exact algebra $\mA_1\doteq (\Fmuc[1],\star_1)$, where $Q$ plays the role of the interaction and $\Fmuc[1]$ is defined as in \eqref{def:microcausal functionals}, but with respect to $g_1$. We have argued in the previous section that the algebra $\widetilde{\mA}_{1,Q}$, generated by $\qstar$--products of $F\in\Ftloc[1]$ and with $\qstar$ defined as in \eqref{operations on the interacting algebra}, is ill--defined, but may be considered as represented on the well--defined algebra $\mA_{1,Q}$, defined in Definition \ref{def:interactingalgebra}, via the quantum M\o ller map $\molh{1}{Q}$. Notwithstanding, we shall first discuss the PPA heuristically in terms of the algebra $\widetilde{\mA}_{1,Q}$ in order to outline the essential idea of this principle. 

By the covariance axiom, the time--ordered maps $T_i:\Fmloc\to \mA_i$ \eqref{def:Fmloc} have to be understood as $T_i \doteq T(g_i,M_i,j_i,\star_i)$, i.e. as particular evaluations of a map $T(g,M,j,\star)$, which for an arbitrary but fixed choice of renormalisation constants depends on the background fields $(g,M,j)$ via the terms multiplying these constants. Moreover, we spell out the dependence of $T$ on $\star$ because we consider the free algebras $\mA$ always as algebras with a concrete product $\star$ defined by a fixed $\Delta^+$ of Hadamard form rather than as abstract algebras defined only up to isomorphy.

The motivation which lead the authors of \cite{HW05} to formulate the PPA originates from the point of view that this map $T(g,M,j,\star)$ ought to satisfy covariance conditions w.r.t. the background fields $(g,M,j)$ which are stronger than the locality and covariance conditions formulated in \cite{HW01, BFV}. As we shall see, the PPA enforces coherence conditions between $T(g,M,j,\star)$ evaluated at two arbitrary but fixed different sets of background fields and thus severely restricts the renormalisation freedom of the time--ordered maps. We recall that this implies that the renormalisation freedom of both time--ordered products and Wick polynomials is restricted, because the time--ordered map encodes both types of renormalisation freedom.

With these points in mind, one may formally say that the PPA demands that the time--ordered map is defined in such a way that the perturbatively defined algebra $\widetilde{\mA}_{1,Q}$ is $*$--isomorphic to the subalgebra $\A0_2\subset \mA_2$ given in \eqref{def:A0}. Moreover, recalling that the time--ordered product formally corresponding to $\qstar$ is $\T[1]$ by Lemma \ref{pr:timeorderedprodof1Q}, one may further demand that the $*$--isomorphism intertwines not only $\qstar$ and $\star_2$, but also the corresponding time--ordered products $\T[1]$ and $\T[2]$ and even the full time--ordered maps $T_1$ and $T_2$. All these statements are meant in the sense of formal power series in $Q$ and its functional derivatives with values in $\mA_1$, because $\widetilde{\mA}_{1,Q}$ is defined only perturbatively. For this reason it is necessary to represent the algebra $\mA_2$ on the base algebra $\mA_1$. As we shall discuss in detail in the following section, this may be achieved by means of the {\bf classical M\o ller map} $\qmoller$. We shall define this map precisely in what follows, but for the sake of this discussion the reader may think of $\qmoller$ as the $\hbar\to 0$ limit of $\bqmoller$. Consequently, we may heuristically describe the PPA in terms of the following commutative diagram, where dashed arrows indicate that the source of the arrows and thus the full arrow is formal.

\beq\label{diagramheuristic}
\begin{tikzcd}[column sep=large, row sep=huge] \mA_1   & \widetilde{\mA}_{1,Q} \arrow[dashed]{l}[above]{\text{\normalsize $\bqmoller$}} \arrow[dashed]{dl}{\text{\normalsize $\bq \doteq \qmoller^{-1}\circ \bqmoller$}}  \\
 \mA_2  \arrow{u}{\text{\normalsize $\qmoller$}}
 \end{tikzcd}
\eeq

As this diagram anticipates, and as we shall prove in the following section, the classical M\o ller map $\qmoller$ is a well--defined exact isomorphism between $\mA_2$ and (a subalgebra of) $\mA_1$ if $\star_2$ is defined as the pullback of $\star_1$ via $\qmoller$. Moreover, $\bqmoller$ maps $\widetilde\mA_{1,Q}$ to $\mA_{1,Q}$ and thus effectively to $\mA_{1,Q}$ in the sense of formal power series, as we have already discussed. The candidate for the heuristic isomorphism between $\widetilde\mA_{1,Q}$ and a suitable subalgebra of $\mA_{2}$ may be thus read off from the diagram to be $\bq\doteq \qmoller^{-1}\circ \bqmoller$, which one may think of as being the identity plus the ``pure quantum part of $\bqmoller$''. 

After this heuristic discussion of the PPA, we may now state this principle, as introduced in \cite{HW05}, precisely and rigorously. As our previous discussion implies, the following formulation of the PPA is essentially the strongest condition one can require on the comparison between the perturbative and exact constructions of models with quadratic interactions because $\beta_{1,Q}$ can not be a proper isomorphism between  $\widetilde\mA_{1,Q}$ and $\mA_2$ since the former algebra is ill--defined. On the other hand, since by Lemma \ref{pr:timeorderedprodof1Q} the time--ordered product corresponding to $\vstar$, provided the latter is well--defined, is given by $\T[1]$, one may consider a functional $F\in\Ftloc[1]$, e.g. $F_1 \T[1] \cdots \T[1] F_n$ with $F_i\in \Floc$, heuristically as an element of $\widetilde{\mA}_{1,Q}$. Moreover, as we shall discuss in the subsequent part of the paper, $\beta_{1,Q}$ is well--defined on $\Ftloc[1]$.

 \begin{definition} \label{def:PPA} Consider two Lorentz metrics $g_2>g_1$, cf. \eqref{def:metriccondition}, on $\mM$, two corresponding at most quadratic actions $\cS_1$ and $\cS_2$ of the form \eqref{def:quadraticactions} and $Q=\cS_2-\cS_1$. Moreover, let $\bqmoller$ be the quantum M\o ller map defined in \eqref{def:quantum moller operator}, let $\qmoller$ be the classical M\o ller map\footnote{The classical M\o ller map is defined in \eqref{eq:definition r}, we recall that for the purpose of Definition \ref{def:PPA} it is sufficient to note that $\qmoller=\lim_{\hbar\to 0}\bqmoller$ in the sense of formal power series in $Q$, cf. Proposition \ref{pr:classicallimit}.}, and set $\bq\doteq \qmoller^{-1}\circ \bqmoller$. Finally, let $\star_1$ be a $\star$--product corresponding to $\cS_1$ and let $\star_2$ be the $\star$--product induced by $\star_1$ via $\qmoller$ as $F\star_2 G\doteq \qmoller^{-1}\left(\qmoller(F)\star_1\qmoller(G)\right)$ for arbitrary $F,G\in\Fmuc[2]$. The time ordered map $T$, considered as map $T(g,M,j,\star)$, is said to satisfy the {\bf Principle of Perturbative Agreement (PPA)} if, for $T_i\doteq T(g_i,M_i,j_i,\star_i)$, $i=1,2$, 
\beq\label{eq:PPA}
T_2 = \bq \circ T_1.
\eeq
on $\Fmloc$.
\end{definition}

\begin{remark}\label{rem:PPAequivalent}
Note that, by \eqref{def:timeorderedproductmap}, one may formulate the PPA equivalently by demanding that  
for all $n\in\bN$ and all $F_0, F_1, \ldots, F_n\in\Floc$,
$$T_2(F_0) = [\bq \circ T_1](F_0)\,,\qquad F_1\T[2]\ldots \T[2] F_n = \bq\left(\bq^{-1}(F_1)\T[1]\ldots\T[1]\bq^{-1}(F_n)\right)\,.$$
The first of these two conditions has the following physical interpretation. Local observables in QFT may be expressed in terms of local and covariant Wick polynomials. The time--ordered maps $T_i$ evaluated on local functionals identify local and covariant Wick polynomials as particular elements of the algebras $\mA_i$. Consequently, one may interpret the first of the above conditions by stating that the PPA demands in particular that $\bq$ is effectively the identity on local observables.
\end{remark}

The authors of \cite{HW05} prove that the PPA can be satisfied and demonstrate that the PPA implies that several essential identities from classical field theory hold also in the quantum case, in particular the interacting quantum stress--energy tensor for an arbitrary, not necessarily quadratic interaction is automatically conserved if the PPA holds \cite{HW05}. 

Our strategy to prove the validity of the PPA for quadratic interactions without derivatives differs somewhat from the proof strategy of \cite{HW05} (see also \cite{Zahn} for a proof for higher spin fields and variations of a background gauge field). In fact, we shall first prove a rigorous version of the heuristic commutative diagram \eqref{diagramheuristic}, namely, 

\beq\label{diagramexact}
\begin{tikzcd}[column sep=large, row sep=huge] \Areg_1   & \tAreg_{1,Q} \arrow{l}[above]{\text{\normalsize $\bqmoller$}} \arrow{dl}{\text{\normalsize $\bq \doteq \qmoller^{-1}\circ \bqmoller$}}  \\
 \Areg_2  \arrow{u}{\text{\normalsize $\qmoller$}}
 \end{tikzcd}
\eeq
where $\Areg_1\doteq(\Freg,\star_1)$, $\Areg_2\doteq(\Freg,\star_2)$, and $\tAreg_{1,Q}=(\Freg,\qstar)$, with $\qstar$ defined as in \eqref{operations on the interacting algebra}. To this end, we first first discuss the left arrow in \eqref{diagramexact} (in the stronger sense as it appears in \eqref{diagramheuristic}) in the following Section \ref{sec:exactrelations}, and the top arrow in the subsequent Section \ref{sec:perturbativerelations}, where we establish in particular that $\tAreg_{1,Q}=(\Freg,\qstar)$ is a well--defined algebra and that $\bq$ is a deformation of the type \eqref{def:alpha} and thus a homomorphism for both the $\star$--product and the time--ordered product. These results will hold for arbitrary $Q$ of the form \eqref{def:striclyquadratic}. Finally, in Section \ref{sec:extensionnonlinear} we extend the action of $\bq$ from regular functionals to $\Ftloc[1]$ and prove that the PPA for strictly quadratic $Q$ without derivatives can be satisfied. 

\begin{remark}\label{rem:linear}
We shall prove most results only for purely quadratic $Q$ in order to avoid the distinction between strictly quadratic $Q$ and linear $Q$. However, the PPA for linear $Q$ may be proven along the same lines. In particular, one may show that, if $Q$ is linear, $\star_2=\star_1$ and $\bqmoller = \qmoller$, and thus $\bq$ is the identity on $\Fmuc[1]=\Fmuc[2]$ for linear $Q$.
\end{remark}

\begin{remark}\label{rem:PPAother}
The PPA as reviewed above is only part of the PPA as introduced in \cite{HW05}. Namely, the authors of  \cite{HW05} also require that the perturbative and exact constructions of a field theoretic model given by the sum of a free Lagrangean and an interaction Lagrangean which is of arbitrary order in the field, but a pure divergence, should agree. This leads to the Leibniz rule for time--ordered products, cf. Axiom 6. in the list reviewed in Section \ref{sec:interactingobservables}.
\end{remark}

For later reference, we explicitly state the form of purely quadratic $Q$ for which we shall prove most results, where we recall \eqref{def:metriccondition} and \eqref{def:funcderivmetric2}

\begin{gather}\label{def:striclyquadratic}
Q=Q^*\in\Floc\,,\qquad Q(\phi)\doteq\frac12\int_\mM \left(G^{\mu\nu} (\nabla_\mu \phi) \nabla_\nu \phi + M\phi^2\right)d\mu_{g_1}\,,\\
G^{\mu\nu}=\cc \,g_2^{\mu\nu}-g_1^{\mu\nu}\,,\qquad g_2>g_1\notag\,.
\end{gather}

\subsection{The classical M\o ller map and relations between the exact algebras \texorpdfstring{$\mA_1,\mA_2$}{A1, A2}}
\label{sec:exactrelations}

The elements of $\mA_1=(\Fmuc[1],\star_1)$ and $\mA_2=(\Fmuc[2],\star_2)$ are functionals over field configurations endowed with products pertaining to the actions $\cS_1$ and $\cS_2$. Hence, a mapping of field configurations which intertwines the equations of motions associated to $\cS_1$ and $\cS_2$ and satisfies suitable further properties may be expected to induce by pullback a $*$--homomorphism between $\mA_2$ and $\mA_1$. A candidate for such a map is the classical M\o ller map on configurations. In the following, quantities indexed by ``1'' and ``2'' shall always denote quantities associated to $\cS_1$ and $\cS_2$. Results which are essentially analogous to the ones stated in this section have already been obtained in \cite{HW05, Zahn}.

\begin{definition}\label{def:definition r} Consider two Lorentz metrics $g_2, g_1$ on $\mM$, such that $(\mM,g_1)$ and $(\mM,g_2)$ are globally hyperbolic, two corresponding at most quadratic actions $\cS_1$ and $\cS_2$ of the form \eqref{def:quadraticactions} and $Q=\cS_2-\cS_1\in\Floc$. The {\bf (retarded) classical M\o ller map on configurations} $\qcmoller:\mE(\mM)\to\mE(\mM)$ is defined by the conditions
\beq\label{eq:definition r}
\cS_2^{(1,1)}\circ \qcmoller = \cS_1^{(1,1)}\,,\qquad \left.\qcmoller(\phi)\right|_{\mM\setminus J_2^+(\supp(Q))}=\phi|_{\mM\setminus J_2^+(\supp(Q))}\,,\eeq
where $J_2^+$ indicates the causal future with respect to the metric $g_2$ and where we recall that $F^{(1,1)}$ indicates the first functional derivative of $F$ w.r.t. $g_1$ in the sense of \eqref{def:funcderivmetric}.
\end{definition}

\begin{remark}\label{rem:operatordistributionconventions}
In the following, we shall deal with linear operators and their integral kernels, where the integral kernel of an operator depends on the metric via the covariant volume measure. I.e. denoting by $[A]_i$ the $g_i$--integral kernel, we have $[A]_1(x,y)=[A]_2(x,y)\cc(y)$. We shall use the following convention regarding integral kernels throughout the remaining part of work. Integral kernels of operators pertaining to $\cS_i$ shall always be considered w.r.t. $g_i$, i.e. $\Delta^\sharp_i(x,y)\doteq [\Delta^\sharp_i]_i(x,y)$ for $\sharp\in\{+,-,F,R,A\}$ and $i=1,2$. Correspondingly, all contraction formulae such as \eqref{eq:exp-product explicit} and \eqref{def:alpha} are considered to be defined by means of a fixed distribution kernel and are thus independent of the chosen metric, because e.g.
$$F\star_1 G=\sum^\infty_{n=0}\frac{\hbar^n}{n!}\left\langle \Delta^{\otimes n}_1,F^{(n,1)}\otimes G^{(n,1)}\right\rangle_1 = \sum^\infty_{n=0}\frac{\hbar^n}{n!}\left\langle \Delta^{\otimes n}_1,F^{(n,2)}\otimes G^{(n,2)}\right\rangle_2\,.$$
For the remainder of the paper, we will thus mostly suppress the dependence of the functional derivative and canonical pairing on the metric where it is understood that whenever this dependence is omitted, the $g_1$--functional derivative and $g_1$--pairing are implied.
\end{remark}

We now show that the precise form of $\qcmoller$ is an expression of Yang--Feldman type, where we recall \eqref{def:funcderivmetric2}.
\begin{proposition}\label{pr:rq-inverse}
The unique solution of the conditions \eqref{eq:definition r} is the map $\qcmoller:\mE(\mM)\to\mE(\mM)$
\[
  \qcmoller \doteq \Id-\De^R_{2}\cinv\circ \q  
\]
which moreover satisfies
\[
\qcmoller\circ \left(\Id+\De^R_{1}\circ \q\right)= \left(\Id+\De^R_{1}\circ \q\right)\circ\qcmoller=\Id
\]
on $\mE(\mM)$. Here, the functional derivative $\q$ defines an affine map on $\mE(\mM)$ with formally selfadjoint linear part, which we denote by the same symbol.
\end{proposition}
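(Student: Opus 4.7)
The plan: First, compute $\q(\phi) = L\phi - j$ where $L \doteq -\nabla_\mu G^{\mu\nu}\nabla_\nu + M$, noting that $L$ is formally selfadjoint because $G^{\mu\nu}$ is symmetric; this establishes the final claim about $\q$ being affine with formally selfadjoint linear part. Then extract the operator identity $\cc P_2 = P_1 + L$ (with partner $\cc j_2 = j_1 + j$) by matching linear and inhomogeneous parts in $\cS_2^{(1,1)} = \cS_1^{(1,1)} + \q$, using that $\cS_2^{(1,1)}(\phi) = \cc(P_2\phi - j_2)$ and $\cS_1^{(1,1)}(\phi) = P_1\phi - j_1$. These relations, which can also be checked directly from $G^{\mu\nu} = \cc g_2^{\mu\nu} - g_1^{\mu\nu}$ and $M = \cc M_2 - M_1$, serve as the algebraic engine of the proof.

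Existence: substituting $\qcmoller(\phi) = \phi - \De^R_2\cinv\q(\phi)$ into $\cS_2^{(1,1)}$ and interpreting $\De^R_2$ as the retarded inverse operator of $P_2$ (so that $\cc P_2\De^R_2\cinv = \Id$), one gets $\cc P_2\qcmoller(\phi) = \cc P_2\phi - \q(\phi)$, and then, via the bookkeeping identities,
\[
\cS_2^{(1,1)}(\qcmoller(\phi)) = \cc P_2\phi - \q(\phi) - \cc j_2 = (P_1+L)\phi - (L\phi - j) - (j_1+j) = P_1\phi - j_1 = \cS_1^{(1,1)}(\phi)\,.
\]
The support condition is immediate from $\supp\q(\phi) \subseteq \supp Q$ and the retarded support of $\De^R_2$ in $(\mM, g_2)$. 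For uniqueness, the difference $\psi$ of any two solutions satisfies $P_2\psi = 0$ by the first condition in \eqref{eq:definition r} (its linear part being $\cc P_2$) and vanishes outside $J^+_2(\supp Q)$; the standard uniqueness for retarded solutions with past-compact support then forces $\psi = 0$.

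Inverse: applying $P_1 = \cc P_2 - L$ to $(\Id + \De^R_1\q)\qcmoller(\phi)$ and using $P_1\De^R_1 = \Id$ together with the bookkeeping identities reduces the result to $P_1\phi$. On the other hand, the difference $(\Id + \De^R_1\q)\qcmoller(\phi) - \phi = -\De^R_2\cinv\q(\phi) + \De^R_1\q(\qcmoller(\phi))$ is supported in $J^+_2(\supp Q)$ --- using $g_2 > g_1$ to obtain $J^+_1(\supp Q) \subseteq J^+_2(\supp Q)$ --- and is thus past-compact. Uniqueness of retarded $P_1$-solutions then gives $(\Id + \De^R_1\q)\circ\qcmoller = \Id$. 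The reverse composition $\qcmoller\circ(\Id + \De^R_1\q) = \Id$ follows by an analogous computation, applying $\cc P_2$ to $\qcmoller\circ(\Id + \De^R_1\q)(\phi) - \phi$ and appealing to the same retarded uniqueness.

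The main subtlety is the careful bookkeeping of the metric-dependent density $\cc$: since $\cc$ does not commute with the differential operator $P_2$, the identity $\cc P_2 = P_1 + L$ must be treated as a genuine operator equality on $\mE(\mM)$, and $\De^R_2\cinv$ must be parsed so that $\cc P_2\De^R_2\cinv$ acts as the identity. Once these conventions are pinned down, the remainder of the argument is essentially algebraic.
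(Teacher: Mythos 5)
Your argument is correct in substance and follows the same basic strategy as the paper: the Yang--Feldman ansatz, verification of the intertwining condition by direct substitution using $\cc P_2\circ\De^R_2\cinv=\Id$, and uniqueness via the well-posedness of the hyperbolic Cauchy problem with data prescribed in the past of $\supp Q$. Your bookkeeping identities $\cc P_2=P_1+L$, $\cc j_2=j_1+j$ are exactly the content of $\cS_2^{(1,1)}=\cS_1^{(1,1)}+\q$ and are used implicitly in the paper's computation \eqref{eq:classmoellerproof} (which, incidentally, has sign typos that you have fixed). Where you genuinely deviate is the verification that $\Id+\De^R_1\circ\q$ inverts $\qcmoller$: the paper expands the composition and cancels the three correction terms directly, using $\De^R_2\circ P_2=P_1\circ\De^R_1=\Id$ on past-compactly supported functions; you instead apply $P_1$ (resp.\ $\cc P_2$) to the composite, reduce to $P_1\phi$ (resp.\ $\cc P_2\phi$), and invoke uniqueness of solutions with past-compact support. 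Both are valid, and yours is arguably cleaner. The one caveat: your support argument for the inverse invokes $g_2>g_1$ to get $J^+_1(\supp Q)\subseteq J^+_2(\supp Q)$, whereas the paper's proof deliberately establishes the required past-compactness of $\supp(\De^R_1\q\,\phi)$ w.r.t.\ $g_2$ for \emph{arbitrary} causal relations between the two globally hyperbolic metrics (via the two-Cauchy-surface argument). This extra generality is not cosmetic: the subsequent Lemma \ref{pr: delta intertwined}, which rests on this proposition, explicitly drops the hypothesis $g_2>g_1$. So your proof covers the setting in which the proposition is chiefly applied, but you should either add the hypothesis $g_2>g_1$ or supply the paper's more careful support argument to obtain the statement in full generality.
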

\begin{proof}
First of all we recall that $\De^R_{2}$ and  $\De^R_{1}$ map smooth functions with past--compact support to smooth functions with past--compact support continuously in the topology of $\mE(\mM)$ \cite{BGP}. Here $\mO\subset (\mM,g_i)$ is called past--compact, if for each compact $K\subset \mM$, $\mO\cap J_i^{-}(K)$ is compact. Hence, $\qcmoller$ is well--defined on $\mE(\mM)$ due to the compactness of the support of $Q$. Moreover, the fact that the affine map induced by the first functional derivative of $Q$ has a formally selfadjoint linear part follows from the fact that $Q^{(2)}$ is symmetric. 

$\qcmoller$ manifestly satisfies the second condition in \eqref{eq:definition r}. To check that it satisfies the first condition, we set $P_i:= -\Box_{g_i} + M_i$ and may compute for an arbitrary $\phi\in\mE(\mM)$
\beq\label{eq:classmoellerproof}
(\cS_2^{(1)}\circ\qcmoller)(\phi)=\cS_2^{(1)}\left(\phi-\De^R_2\cinv\q(\phi)\right)= \cc P_2 \phi + \cc j_2 - \q(\phi) = P_1 \phi + j_1\,,
\eeq
where we have used that $P_2\circ \De^R_2=\Id$ on smooth functions with past--compact support. Uniqueness follows by setting $\psi = \qcmoller(\phi)$ and realising that $\psi$ is the unique solution to the normally hyperbolic Cauchy problem 
$$P_2 \psi = \frac{P_1 \phi + j_1}{\cc} - j_2\,,\qquad  \psi|_{\mM\setminus J_2^+(\supp(Q))}=\phi|_{\mM\setminus J_2^+(\supp(Q))}\,,$$ see \cite[Corollary 5]{Ginoux} for normally hyperbolic Cauchy problems with sources of arbitrary support.

In order to prove that $\left(\Id+\De^R_{1}\circ \q\right)$ is the right inverse of  $\qcmoller$ we may compute
$$
\left(\Id-\De^R_{2}\cinv\circ\q\right)\circ \left(\Id+\De^R_{1}\circ \q\right) =$$$$= \Id -\De^R_{2}\cinv\circ\q + \De^R_{1}\circ\q - \De^R_2\cinv\circ (\cc P_2-P_1) \circ \De^R_1\circ \q
$$
and observe that the last three terms cancel if, for all $\phi\in\mE(\mM)$, $\supp(\De^R_1\q \phi)$ is past--compact w.r.t. $g_2$, because $\De^R_2\circ P_2 = P_1 \circ \De^R_1 = \Id$ on smooth functions with past--compact support. To show the former condition for arbitrary causal relations between $g_2$ and $g_1$, we consider an arbitrary compact set $K$ and an arbitrary $\phi\in\mE(\mM)$ and have to prove that $J^-_2(K)\cap \supp(\De^R_1\q \phi)$ is compact. To this avail, we consider two non--intersecting Cauchy surfaces $\Sigma_1$ and $\Sigma_2$ of $(\mM,g_1)$ s.t. $\Sigma_2\subset J^+_1(\Sigma_1)$ and $(K\cup\supp(Q))\subset (J^+_1(\Sigma_1)\cap J^-_1(\Sigma_2))$; this implies in particular that $\Sigma_1$ and $\Sigma_2$ are also Cauchy surfaces of $(\mM,g_2)$. We then set $A\doteq \supp(\De^R_1\q \phi)\cap J_1^-(\Sigma_2)$ and $B\doteq J^-_2(K)\cap J^+_2(\Sigma_1)$ and note that both $A$ and $B$ are compact. Finally, we observe that $J^-_2(K)\cap \supp(\De^R_1\q \phi)=A\cap B$ is compact. By a similar computation one can show that  $\left(\Id+\De^R_{1}\circ \q\right)$ is the left inverse of  $\qcmoller$.
\end{proof}
Note that, at least in the case when $Q$ does not contain two derivatives, $\qcmoller$ is well--defined and satisfies the above properties also for $Q$ which have past--compact support w.r.t. to $g_1=g_2$.

In the following, we shall deal exclusively with strictly quadratic $Q$. In this case, $\q$ and $\qmoller$ are linear maps. Denoting by $T^\dagger$ the adjoint of the operator $T$ with respect to the canonical pairing $\langle f , h\rangle_1 = \int_\mM f g \,d\mu_{g_1}$, Proposition \ref{pr:rq-inverse} then implies $\qcmoller^\dagger =  \left(\Id-{\q}\circ \cc^{-1}\De^A_2\right)$. Furthermore, $\qcmoller^\dagger$ maps solutions of $P_2\phi=0$ to solutions of $P_1\phi=0$ and we note that $P_1$ and $\cc P_2$ are formally selfadjoint w.r.t. $\langle \cdot,\cdot\rangle_1$. With this in mind, we shall now discuss the relation between the advanced and retarded Green's operators of $P_1$ and $P_2$.

\begin{lemma}\label{pr: delta intertwined} If $Q\in\Floc$ is of the form \eqref{def:striclyquadratic}, but $g_2>g_1$ is not necessarily true, the advanced and retarded Green's operators of $P_1$ and $P_2$ are related as
\[
\De^R_2\cinv=\qcmoller \circ \De^R_1\,,
\qquad
\De^A_2\cinv =\De^A_1 \circ \qcmoller ^\dagger\,,
\]
whereas their causal propagators satisfy
\[
\Delta_2\cinv = \qcmoller  \circ \Delta_1 \circ \qcmoller^\dagger \,.
\]
\end{lemma}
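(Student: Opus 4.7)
The plan is to verify the three identities directly from the explicit formula $\qcmoller = \Id - \De^R_2\cinv\circ\q$ of Proposition \ref{pr:rq-inverse}. The key input is the infinitesimal intertwining $\cc P_2\circ \qcmoller=P_1$, which follows by a one-line computation in the strictly quadratic case: $\cc P_2(\Id - \De^R_2\cinv\q)=\cc P_2 - \q = P_1$, where the last equality uses $\q = \cc P_2 - P_1$ (the operator form of $Q^{(1,1)}$ for $Q$ of the form \eqref{def:striclyquadratic}). As a preliminary geometric observation I would record that the form \eqref{def:striclyquadratic} of $Q$ forces $g_2=g_1$ outside the compact set $\supp Q$, even without the assumption $g_2>g_1$: where $G^{\mu\nu}=0$, the identity $\cc\,g_2^{\mu\nu}=g_1^{\mu\nu}$ combined with the definition $\cc=\sqrt{|\det g_2|}/\sqrt{|\det g_1|}$ forces $\cc=1$ and $g_2=g_1$ (in four dimensions). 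Consequently any $g_1$-causal curve emanating from a compact set $K$ is $g_2$-causal once it leaves $\supp Q$, yielding the containment $J^+_1(K)\subset J^+_2(K\cup\supp Q)$.

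For the first identity I would take $f\in\mD(\mM)$ and apply the intertwining: $\cc P_2\,\qcmoller\De^R_1 f=P_1\De^R_1 f=f=\cc P_2\,\De^R_2\cinv f$, so that $u\doteq\qcmoller\De^R_1 f-\De^R_2\cinv f$ is a global $P_2$-solution. For the support, $\supp(\De^R_2\cinv f)\subset J^+_2(\supp f)$ by construction, while the decomposition $\qcmoller\De^R_1 f=\De^R_1 f-\De^R_2\cinv\,\q\De^R_1 f$ together with $\supp(\q\psi)\subset\supp Q$ for all $\psi\in\mE(\mM)$ and the preliminary observation give $\supp(\qcmoller\De^R_1 f)\subset J^+_1(\supp f)\cup J^+_2(\supp Q)\subset J^+_2(\supp f\cup\supp Q)$. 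Hence $\supp u\subset J^+_2(\supp f\cup\supp Q)$, which is past-compact with respect to $g_2$ by global hyperbolicity of $(\mM,g_2)$, and uniqueness for the $P_2$-Cauchy problem forces $u=0$.

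The second identity follows by taking the $\langle\cdot,\cdot\rangle_1$-adjoint of the first. A direct calculation exploiting $d\mu_{g_1}=\cinv\,d\mu_{g_2}$ gives $(\De^R_2\cinv)^\dagger=\De^A_2\cinv$, while $(\De^R_1)^\dagger=\De^A_1$ is standard and $\q^\dagger=\q$ follows from the symmetry of $Q^{(2)}$. The third identity is then a purely algebraic consequence of the first two: expanding
\[
\qcmoller\Delta_1\qcmoller^\dagger = \qcmoller\De^R_1\qcmoller^\dagger - \qcmoller\De^A_1\qcmoller^\dagger = \De^R_2\cinv\,\qcmoller^\dagger - \qcmoller\,\De^A_2\cinv
\]
and substituting $\qcmoller^\dagger=\Id-\q\De^A_2\cinv$ and $\qcmoller=\Id-\De^R_2\cinv\q$, the two correction terms both equal $-\De^R_2\cinv\,\q\,\De^A_2\cinv$ and cancel, leaving $\De^R_2\cinv-\De^A_2\cinv=\Delta_2\cinv$.

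The main obstacle is the support analysis in the first identity: without the ordering $g_2>g_1$ the inclusion $J^+_1\subset J^+_2$ is generally false, and one must exploit the compactness of $\supp Q$ together with the metric agreement $g_1=g_2$ off $\supp Q$ to confine the $g_1$-causal spread within a $g_2$-past-compact region. Once this containment is secured, the remainder of the proof is bookkeeping with the intertwining relation and its adjoint.
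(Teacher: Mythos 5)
Your proof is correct, but it reaches the two nontrivial identities by a genuinely different route than the paper. For the retarded identity, the paper verifies the full retardation property $\supp(\qcmoller\De^R_1\cc f)\subseteq J^+_2(\supp f)$ by a cutoff--function argument (two Cauchy surfaces placed to the future of $\supp f\cup\supp Q\cup\{x\}$, a commutator with $-\Box_{g_1}\chi$, and the support properties of $\De^{R}_{1/2}$) and then invokes uniqueness of the \emph{retarded Green's operator}; you instead show that the difference $u=\qcmoller\De^R_1 f-\De^R_2\cinv f$ is a homogeneous $P_2$--solution with $g_2$--past--compact support and invoke uniqueness of the \emph{Cauchy problem}. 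Your route needs only past--compactness rather than the sharper inclusion $J^+_2(\supp f)$, and it makes explicit a geometric fact the paper uses only implicitly (when asserting that its Cauchy surfaces for $g_1$ are also Cauchy surfaces for $g_2$): the form \eqref{def:striclyquadratic} forces $\cc=1$ and $g_2=g_1$ off $\supp Q$, which is what confines the $g_1$--causal spread inside a $g_2$--past--compact set. One small wrinkle: the inclusion $J^+_1(K)\subset J^+_2(K\cup\supp Q)$ has a boundary subtlety at the last exit point of a causal curve from $\supp Q$ (the tangent there need only be $g_1$--causal); this is harmless for your purposes since replacing $\supp Q$ by a compact neighbourhood preserves past--compactness, but it deserves a sentence. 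The advanced identity by duality matches the paper's remark that it follows "by duality/analogously". For the causal propagator, the paper works on the "1"--side, using $r=-\De^R_1\circ\q$, the relations $\qcmoller-\Id=\qcmoller\circ r=r\circ\qcmoller$ and the cancellation $\De^R_1\circ r^\dagger=r\circ\De^A_1$, whereas you substitute the "2"--side formulas $\qcmoller=\Id-\De^R_2\cinv\q$, $\qcmoller^\dagger=\Id-\q\De^A_2\cinv$ and cancel the two cross terms $\De^R_2\cinv\q\De^A_2\cinv$ directly; the two computations are equivalent in substance, with yours being slightly shorter and the paper's having the advantage of staying entirely within quantities defined from the background theory $\cS_1$.
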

\begin{proof}
Let us indicate $\qcmoller \circ \De^R_1\cc$ by $\De_{R,I}$ and $\De^A_1 \circ \qcmoller^\dagger\cc$ by $\De_{A,I}$. Both $\De_{R/A,I}$ satisfy $P_2\circ \De_{R/A,I}=\De_{R/A,I}\circ P_2=\Id$ as can be shown either directly or by duality with respect to the standard pairing 
$\langle \cdot, \cdot \rangle_1$. We would now like to show $\supp(\De_{R/A,I}f)\subseteq J_2^\pm(\supp(f))$ for all $f\in\mD(\mM)$ independent of the causal relations between $g_2$ and $g_1$; this would imply $\De_{R/A,I}=\De^{R/A}_{2}$ by uniqueness of retarded/advanced fundamental solutions. If $g_2>g_1$, the statement follows immediately. In the remaining cases, we consider an arbitrary $f\in\mD(\mM)$ and assume that there is a point $x\in\mM$ s.t. $x\notin\supp\left(\De^R_2 f\right)$, but $x\in\supp\left(\qcmoller\De^R_1 \cc f\right)$. We now consider two non--intersecting Cauchy surfaces $\Sigma_2$, $\Sigma_1$ of $(\mM,g_1)$ s.t. $\Sigma_2\subset J^+_1(\Sigma_1)$, $J^+_1(\Sigma_1)\cap \left(\supp(Q)\cup \supp(f)\cup\{x\}\right)=\emptyset$; this implies in particular that these Cauchy surfaces are also Cauchy surfaces for $(\mM,g_2)$. We then consider a smooth function $\chi$ which equals $1$ on $J^-_1(\Sigma_1)$ and $0$ on $J^+_1(\Sigma_2)$. We may then compute
\begin{align*}
\qcmoller\De^R_1 \cc f &= \left(\Id - \De^R_2\cinv\circ \q\right)\chi\De^R_1 \cc f + \left(\Id - \De^R_2\cinv\circ \q\right)(1-\chi)\De^R_1 \cc f \\
&= \De^R_2\cinv P_1\chi\De^R_1 \cc f + \left(\Id - \De^R_2\cinv\circ \q\right)(1-\chi)\De^R_1 \cc f\\
&= \De^R_2\cinv (-\Box_{g_1}\chi)\De^R_1 \cc f +\De^R_2\chi f+ \left(\Id - \De^R_2\cinv\circ \q\right)(1-\chi)\De^R_1 \,,
\end{align*}
where we used that $\supp(\chi\De^R_1 \cc f)$ is compact. The retardation properties of $\De^R_{1/2}$ and the support properties of $\chi$ and $f$ now imply that sufficiently small neighbourhoods of $x$ can not be contained in the support of either of the terms in the last expression. The advanced case can be shown analogously.

By Proposition \ref{pr:rq-inverse}, the linear operator $\qcmoller $  is the left-- and right--inverse of $(\Id - r)$ with $r\doteq -\De^R_1 \circ \q$. Hence, by direct computation $\qcmoller -\Id=\qcmoller \circ r=r \circ \qcmoller $. 
Consider now $\qcmoller \circ \De_1 \circ \qcmoller ^\dagger$. Using the previous relation and its dual, this operator be can factorized as
\begin{gather*}
\qcmoller \circ\De_1\circ\qcmoller^\dagger=
\qcmoller \circ\De^R_1-
\De^A_1\circ\qcmoller ^\dagger+
\qcmoller \circ\De^R_1\circ r^\dagger \circ\qcmoller^\dagger -
\qcmoller \circ r \circ\De^A_1\circ\qcmoller^\dagger.
\end{gather*}
Using the formal selfadjointness of $\q$, we notice that 
the last two summands cancel each other because $\De^R_1\circ r^\dagger =  - \De^R_1\circ \q \circ\De^A_1 = r \circ \De^A_1$. We may now conclude the proof by using the first two equalities of the present lemma
\begin{gather*}
\qcmoller\circ \De_1 \circ\qcmoller^\dagger=
\qcmoller\circ \De^R_1-
\De^A_1\circ\qcmoller^\dagger = \left(\De^R_2-
\De^A_2\right)\cinv= \De_2\cinv\,.
\end{gather*}
\end{proof}

We shall now demonstrate that the classical M\o ller map on configurations preserves the Hadamard condition and maps Gaussian states to Gaussian states.

\begin{proposition}\label{pr:H intertwiner}
The following statements hold for a $Q\in\Floc$ of the form \eqref{def:striclyquadratic}, omitting the condition $g_2>g_1$.
\begin{itemize}
\item[(i)] If $\Delta^+_1$ is an operator whose integral kernel is of Hadamard type for the theory $\cS_1$, then 
\[
\Delta^+_2\doteq \qcmoller\circ \Delta^+_1\circ \qcmoller^\dagger \cc
\]
is an operator whose integral kernel is of Hadamard type for the theory $\cS_2$.
\item[(ii)] Moreover if $\Delta^+_1$ is an operator whose integral kernel is the two--point function of a Hadamard state on $\mA_1$, then $\Delta^+_2$ is an operator whose integral kernel is the two--point function of a Hadamard state on $\mA_2$.
\end{itemize}

\end{proposition}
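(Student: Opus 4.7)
My strategy is to verify that $\Delta^+_2$ satisfies the three defining properties of a Hadamard bidistribution (for (i)) and then obtain positivity by pullback (for (ii)). The bisolution and antisymmetric part conditions are direct algebraic consequences of Lemma \ref{pr: delta intertwined}. Since $Q$ in \eqref{def:striclyquadratic} is strictly quadratic, the condition $\cS_2^{(1,1)}\circ\qcmoller=\cS_1^{(1,1)}$ reduces to the operator identity $\cc P_2\circ\qcmoller=P_1$. Taking the formal adjoint with respect to $\langle\cdot,\cdot\rangle_1$, and noting that both $P_1$ and $\cc P_2$ are selfadjoint w.r.t.\ this pairing (since $\langle f,\cc P_2 g\rangle_1=\langle P_2 f,g\rangle_2=\langle \cc P_2 f,g\rangle_1$), yields the dual relation $\qcmoller^\dagger\circ \cc P_2=P_1$. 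Combined with $P_1\Delta^+_1=\Delta^+_1 P_1=0$ this gives at once $P_2\Delta^+_2=\cinv P_1\Delta^+_1\qcmoller^\dagger\cc=0$ and $\Delta^+_2 P_2=\qcmoller\Delta^+_1 P_1=0$. The antisymmetric part reads $\Delta^+_2-\overline{\Delta^+_2}=\qcmoller(\Delta^+_1-\overline{\Delta^+_1})\qcmoller^\dagger\cc=i\,\qcmoller\Delta_1\qcmoller^\dagger\cc=i\Delta_2$ by Lemma \ref{pr: delta intertwined}, while reality of $\qcmoller$ and $\cc$ forces the symmetric part $\Delta^+_2-\frac{i}{2}\Delta_2$ to be real.

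The delicate step is the microlocal condition $\WF(\Delta^+_2)=\mathscr{V}^+_2$. I would first observe that, since $G^{\mu\nu}=\cc g_2^{\mu\nu}-g_1^{\mu\nu}$ is compactly supported, outside $\supp Q$ the inverse metrics are proportional; comparing determinants with the defining relation $\cc=\sqrt{|\det g_2|/|\det g_1|}$ forces $\cc=1$ and hence $g_1=g_2$ on $\mM\setminus\supp Q$ (in dimension at least three; in dimension two one only obtains conformal equivalence, which still suffices because the null geodesic structure underlying the Hadamard form is conformally invariant). Since $\supp Q$ is compact and $(\mM,g_2)$ is globally hyperbolic, I can select a Cauchy surface $\Sigma$ of $(\mM,g_2)$ entirely contained in $\mM\setminus J^+_2(\supp Q)$, e.g.\ as a level set of a $g_2$-Cauchy time function chosen to the past of $\supp Q$, together with an open neighborhood $U$ of $\Sigma$ on which the support condition in \eqref{eq:definition r} forces $\qcmoller|_U=\Id$. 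Then $\Delta^+_2|_{U\times U}=\Delta^+_1|_{U\times U}$ lies in $\mathscr{V}^+_1|_{U\times U}=\mathscr{V}^+_2|_{U\times U}$ by the hypothesis on $\Delta^+_1$. Finally I would invoke propagation of singularities for the bisolution $\Delta^+_2$ of the normally hyperbolic operator $P_2$ (Radzikowski's theorem) to extend this local Hadamard form to all of $\mM\times\mM$. This microlocal step is the main obstacle of the argument; everything else reduces to algebraic rearrangements built on Lemma \ref{pr: delta intertwined}.

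For (ii), once (i) guarantees that $\Delta^+_2$ is of Hadamard type, the $\star_2$-product implemented by it on $\Fmuc[2]$ coincides --- by Lemma \ref{pr: delta intertwined} applied termwise in the contraction exponential \eqref{eq:exp-product explicit} --- with the pullback of $\star_1$ along $\qcmoller$ stipulated in Definition \ref{def:PPA}. Hence the map $\qmoller:\mA_2\to\mA_1$, $\qmoller(F)(\phi)\doteq F(\qcmoller\phi)$, is a $*$-isomorphism, and the pullback $\omega_2\doteq\omega_1\circ\qmoller$ of the state $\omega_1$ is automatically a positive state on $\mA_2$. A direct computation on linear functionals using the explicit form $\qcmoller=\Id-\Delta^R_2\cinv\q$ together with Lemma \ref{pr: delta intertwined} identifies the two-point function of $\omega_2$ with $\Delta^+_2$, completing the proof.
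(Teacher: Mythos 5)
Your proof is correct and follows essentially the same route as the paper: for (i) the bisolution and antisymmetric-part properties via the intertwining relations of Lemma \ref{pr: delta intertwined}, plus the observation that $\qcmoller=\Id$ (and $g_1=g_2$) outside $J_2^+(\supp Q)$ so that $\Delta^+_2=\Delta^+_1$ near a past Cauchy surface, followed by propagation of singularities; for (ii) positivity reduces to $\langle\overline{\qcmoller^\dagger\cc f},\Delta^+_1\qcmoller^\dagger\cc f\rangle_1\ge 0$, which the paper checks directly while you phrase it equivalently as pullback of the state along the $*$--morphism induced by $\qcmoller$. The only cosmetic difference is that the paper states the bisolution property "up to smooth functions" for a general Hadamard-type $\Delta^+_1$, which your argument accommodates without change.
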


\begin{proof}Proof of (i): First of all, we notice that, since $Q$ is has compact support and since the integral kernel of $\Delta^+_1$ satisfies the Hadamard condition, $\Delta^+_2\doteq \qcmoller \circ \Delta^+_1 \circ\qcmoller^\dagger\cc$ is a well--defined operator from compactly supported smooth functions to smooth functions, hence, by the Schwartz kernel theorem, it gives rise to a well--defined distribution. To prove the statement, it thus suffices to show that $WF(\De^+_2)=\mV_2^+$ with $\mV_2^+$ defined as in \eqref{def:muc} by means of the causal structure induced by $g_2$.
We already know by the previous Lemma that the antisymmetric part of $\De^+_2$ is the causal propagator $\Delta_2$. Furthermore, $\De^+_2$ is a weak bisolution of the Klein--Gordon equation up to smooth functions because the same holds for $\De^+_1$ and because $\cc P_2\circ \qcmoller = P_1$. 
The statement now follows by standard arguments including the propagation of singularity theorem from the observation that $\qcmoller$ is the identity outside of the causal future of $\supp(Q)$, and thus $\De^+_2 = \De^+_1$ there.

Proof of (ii): We need to check positivity of $\De^+_2$, i.e. we have to check that $\langle \overline{f},\De^+_2 f\rangle_2\ge 0$ for all $f\in\mD_\bC(\mM)$. 
Using the definition of $\De^+_2$ and recalling \eqref{def:funcderivmetric} and \eqref{def:funcderivmetric2}, we may compute,  
\[
\left\langle \overline{f},\De^+_2 f\right\rangle_2 = \left\langle \cc \overline{f},\De^+_2 f\right\rangle_1 = \left\langle \overline{\qcmoller^\dagger \cc f},\Delta^+_{1}  \qcmoller^\dagger \cc f\right\rangle_1\ge 0 \,,
\]
where we have used the fact that $Q$ is real and thus $\qcmoller$ commutes with complex conjugation, and where we note that $\qcmoller^\dagger$ maps smooth functions to smooth functions of compact support.\end{proof}

With the map $\qcmoller$ between configurations at our disposal, we may construct by pullback a map on functionals.

\begin{definition}\label{def:classicalmoller} For an arbitrary $F\in\Fmuc[2]$, where $\Fmuc[2]$ is defined as in \eqref{def:microcausal functionals}, but with respect to $g_2$, and $Q=\cS_2-\cS_1\in\Floc$ of the form \eqref{quadratic potential}, we  define the {\bf (retarded) classical M\o ller map} by
$$\qmoller(F) \doteq F\circ \qcmoller$$
\end{definition}

The map $\qmoller$ may be thought of as being the off--shell version of the map $\tau^{\text{ret}}$ in \cite{HW05}, which is defined on--shell. Since $\qcmoller$ is invertible, its pullback is invertible as well. The next theorem shows that $\qmoller$ is in fact a $*$--isomorphism between $\mA_2$ and a subalgebra of $\mA_1$.

\begin{theorem}\label{th:classical homomorphism}
Let $Q\in\Floc$ be of the form \eqref{def:striclyquadratic}, and let $\Delta^+_1$ be a linear operator whose integral kernel is of  Hadamard form w.r.t. the Klein--Gordon operator $P_1$. Moreover, let $\De^+_2\doteq \qcmoller\circ \Delta^+_1 \circ \qcmoller^\dagger \cc$ and let $\mA_1\doteq (\Fmuc[1],\star_1)$, $\mA_2\doteq (\Fmuc[2],\star_2)$, where $\star_1$ and $\star_2$ are the $\star$--products constructed by means of $\Delta^+_1$ and $\Delta^+_2$, respectively, and where $\Fmuc[i]$ is defined as in \eqref{def:microcausal functionals}, but with respect to $g_i$. Then, $\qmoller$ satisfies the following properties.
\begin{itemize}
\item[(i)] The inverse of $\qmoller$, defined by $\qmoller^{-1}(F)\doteq F\circ \qcmoller^{-1}$, is well--defined on $\Fmuc[1]$ and maps $\Fmuc[1]$ to $\Fmuc[1]$.
\item[(ii)] $\qmoller$ is well--defined on $\Fmuc[2]$ and maps $\Fmuc[2]$ injectively to $\Fmuc[2]\subset \Fmuc[1]$.
\item[(iii)] $\qmoller$ restricts to a bijective map between $\qmoller:\Freg\to\Freg$.
\item[(iv)]$\qmoller$ induces a $*$--isomorphism $\qmoller:\mA_2\to\qmoller(\mA_2)\subset\mA_1$, which restricts to a $*$--isomorphism between $\Areg_2$ and $\Areg_1$ and descends to a $*$--isomorphism between the on--shell algebras $\Aon_2$ and $\qmoller(\Aon_2)\subset\Aon_1$ constructed as in Definition \ref{def:free_algebra}.
\end{itemize}
\end{theorem}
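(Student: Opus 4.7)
The plan is to compute the functional derivatives of $\qmoller(F)=F\circ\qcmoller$ and $\qmoller^{-1}(F)=F\circ\qcmoller^{-1}$ via the chain rule and then verify the required support, smoothness, microlocal and algebraic properties one by one. Since $Q$ is strictly quadratic as in \eqref{def:striclyquadratic}, Proposition \ref{pr:rq-inverse} gives that $\qcmoller$ and $\qcmoller^{-1}$ are linear bijections on $\mE(\mM)$, so the differential of each coincides with the map itself. The $g_1$-adjoints read $\qcmoller^\dagger=\Id-\q\circ\cinv\Delta^A_2$ and $(\qcmoller^{-1})^\dagger=\Id+\q\circ\Delta^A_1$, and the chain rule yields
\begin{equation*}
\qmoller(F)^{(n)}(\phi)=(\qcmoller^\dagger)^{\otimes n}F^{(n)}(\qcmoller\phi),
\end{equation*}
and analogously for $\qmoller^{-1}$.

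For (i) and (ii), compact support of these derivatives is preserved because $\qcmoller^\dagger$ and $(\qcmoller^{-1})^\dagger$ coincide with the identity outside $\supp Q$ (the differential operator $\q$ has coefficients supported in $\supp Q$), while their correction terms combine a compactly supported differential operator with an advanced Green's operator, hence map compactly supported distributions to compactly supported ones. Smoothness of $\qmoller(F)$ in $\phi$ is immediate from smoothness and linearity of $\qcmoller$. The crucial point is the wavefront-set condition: applying $(\qcmoller^\dagger)^{\otimes n}$ to a distribution with $\WF$ outside $\overline V^{n,(g_2)}_+\cup\overline V^{n,(g_2)}_-$ must produce one with $\WF$ outside $\overline V^{n,(g_1)}_+\cup\overline V^{n,(g_1)}_-$, and likewise for the inverse. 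This is handled by H\"ormander's calculus: the only new singularities introduced by $\qcmoller^\dagger$ come from $\Delta^A_2$, whose wavefront set lies on $g_2$-null bicharacteristics; propagation of singularities together with the cone relation coming from $g_2>g_1$ yields the required estimate. For (iii), if $F\in\Freg$ each $F^{(n)}$ lies in $\mD_\bC(\mM^n)$, and $\qcmoller^\dagger$ preserves this space: $\Delta^A_2$ maps $\mD(\mM)$ to smooth functions of past-compact support, and the compactly supported differential operator $\q$ restores compact support; hence $\qmoller(F)\in\Freg$, with bijectivity following from the same argument for $\qmoller^{-1}$.

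For (iv), the $*$-property $\qmoller(F^*)=\qmoller(F)^*$ follows from $Q^*=Q$, which makes $\qcmoller$ commute with complex conjugation. The homomorphism $\qmoller(F\star_2 G)=\qmoller(F)\star_1\qmoller(G)$ is a direct computation: expanding the right-hand side via the chain-rule formula and transferring the adjoints from the derivatives onto the contraction kernel produces, in the $g_1$-pairing, the composition $\qcmoller\circ\Delta^+_1\circ\qcmoller^\dagger$, which after accounting for the volume-form conventions of Remark \ref{rem:operatordistributionconventions} is precisely $\Delta^+_2$ as given in Proposition \ref{pr:H intertwiner}, reproducing the $\star_2$-product of $F$ and $G$. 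The restriction to a $*$-isomorphism $\Areg_2\to\Areg_1$ is then immediate from (iii). To descend to the on-shell algebras, one verifies $\qmoller(\mI_2)\subset\mI_1$: the defining relation $\cS_2^{(1,1)}\circ\qcmoller=\cS_1^{(1,1)}$ converts each generator of $\mI_2$ (built from $P_2\phi-j_2$ smeared by a distribution $T$) into a generator of $\mI_1$ (built from $P_1\phi-j_1$ smeared by $\qcmoller^{\dagger\otimes n}T$, up to factors of $\cc$), and the wavefront condition on the smearing is preserved by the same microlocal argument as in (i). The principal difficulty throughout is the microlocal bookkeeping in (i)--(ii), requiring careful use of propagation of singularities for $\Delta^A_2$ together with the cone relation induced by $g_2>g_1$.
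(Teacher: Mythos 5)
Your proposal is correct and follows essentially the same route as the paper: pull back along the linear bijection $\qcmoller$, compute the functional derivatives via the chain rule, control supports and wave front sets with H\"ormander's composition theorem and the cone relation from $g_2>g_1$, derive the homomorphism property from $\De^+_2=\qcmoller\circ\De^+_1\circ\qcmoller^\dagger\cc$, and descend on--shell via $\cc P_2\circ\qcmoller=P_1$. If anything, your argument for (iii) is slightly more careful than the paper's, since it is indeed $\qcmoller^\dagger$ (not $\qcmoller$) that preserves $\mD_\bC(\mM)$, which is what the chain--rule formula for $\qmoller(F)^{(n)}$ actually requires.
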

\begin{proof} Proof of (i). The statement that $\qmoller^{-1}$ is well--defined on $\Fmuc[1]$ and maps $\Fmuc[1]$ to itself follows by an application of \cite[Theorem 8.2.14]{Hormander} about compositions of distributions and from $\qcmoller^{-1}=\Id+\De^R_1\circ \q$.

Proof of (ii). We first observe that $\Fmuc[2]\subset\Fmuc[1]$ follows directly from $g_2>g_1$. The statement that $\qmoller$ is a well--defined on $\Fmuc[2]$ and maps $\Fmuc[2]$ to itself follows by a further application of \cite[Theorem 8.2.14]{Hormander}. Finally, injectivity follows from $\qmoller^{-1}\circ\qmoller=\Id$ on $\Fmuc[2]$, which in turn follows from (i) and $\Fmuc[2]\subset\Fmuc[1]$.

Proof of (iii). This statement follows from the fact that $\qcmoller:\mD_\bC\to\mD_\bC$ is a bijection on account of the compact support of $Q$.

Proof of (iv). Since $Q=Q^*$, $\qmoller$ is real--linear and preserves the $*$--operation on $\mA_1$ and $\mA_2$ defined by complex conjugation. Thus, on account of (ii) and (iii), the off--shell part of the statement is proven if we show that $\qmoller$ intertwines $\star_1$ and $\star_2$. By the definition of these products and of $\qmoller$ as a pullback of $\qcmoller$, this follows from
$$\qmoller(F)\star_1\qmoller(G)=\sum_{n\geq 0}\frac{\hbar^n}{n!}\left\langle \left(\De^+_1\right)^{\otimes n},(F\circ \qcmoller)^{(n)}\otimes (G\circ \qcmoller)^{(n)}\right\rangle = $$
$$=\qmoller\left(\sum_{n\geq 0}\frac{\hbar^n}{n!}\left\langle\left(\qcmoller \circ \De^+_1\circ \qcmoller^\dagger \right)^{\otimes n}, F^{(n)}\otimes G^{(n)}\right\rangle\right)=\qmoller(F \star_2 G)
$$
for arbitrary $F,G\in\Fmuc[2]$, where we note that $\qmoller(F)\star_1\qmoller(G)$ is well--defined because $\qmoller(F)\subset \Fmuc[1]$ and where we recall Remark \ref{rem:operatordistributionconventions}. The on--shell part of the statement follows from the fact that $\cc P_2\circ \qcmoller = P_1$, which implies $\qmoller(\mI_2)\subset\mI_1$ for the on--shell ideals $\mI_2\subset \mA_2$ and $\mI_1\subset \mA_1$.
\end{proof}

\begin{remark} If $g_2\neq g_1$ and $g_2>g_1$, then $\Fmuc[1]\not\subset \Fmuc[2]$. An example of $F\in\Fmuc[1]$, $F\notin\Fmuc[2]$ may be given by choosing a coordinate system on a patch $U\subset \mM$ and setting $F(\phi)\doteq \langle h,\phi\rangle$, $h\doteq  f/(x^\mu v_\mu + i \epsilon)$ where $\supp(f)\subset U$ is compact and $v$ is a covector which is time--like w.r.t. $g_2$ but space--like w.r.t. $g_1$. Since we have $P_1\circ \qcmoller^{-1} =\cc P_2$, we may bound the wave front set of $ \qcmoller^{-1} h = (\qmoller^{-1}(F))^{(1)}$ as $\WF(P_2 h)\subset \WF(\qcmoller^{-1} h)\subset \WF(h)\cup \text{Char}(P_1)$, which proves that $\qmoller^{-1}(F)\notin \Fmuc[2]$. Consequently, $\qmoller$ can not be a bijection between $\Fmuc[2]$ and $\Fmuc[1]$ and a $*$--isomorphism between $\mA_2$ and $\mA_1$ if $g_2\neq g_1$. Moreover, one can check $\qmoller(\A0_2)\not\subset \A0_1$ for the subalgebras $\A0_i\subset \mA_i$, $\star_i$--generated by $\Ftloc[i]$, cf. \eqref{def:A0}. However, presumably $\qmoller$ is a $*$--isomorphism between suitable topological completions of $\A0_2$ and $\A0_1$. Finally, by the time--slice--axiom and the fact that $\qmoller$ is the identity for functionals which are supported outside of $J^+_2(\supp(Q))$, one can indeed demonstrate that $\qmoller$ is a $*$--isomorphism between the on--shell versions of $\mA_2$ and $\mA_1$, which is proven in \cite{HW05}.
\end{remark}

In the subsequent part of the paper we need to view $\qmoller$ for $Q$ as in \eqref{def:striclyquadratic}, and all quantities pertaining to $\cS_2$, as a formal power series in $Q$ and its functional derivatives with coefficients given in terms of quantities associated to $\cS_1$. By means of Proposition \ref{pr:rq-inverse}, this is achieved by writing $\qcmoller$ as the Neumann series
\begin{gather}\label{eq:Neumann series}
\qcmoller=
\left(\Id+\De^R_1\circ \q\right)^{-1}=
\sum_{n\geq 0}\underbrace{r\circ\,\ldots\, \circ r}_\text{$n$ times}\,,\qquad r\doteq -\De^R_1\circ\q\,.
\end{gather}
Recalling $\De^R_2 = \qcmoller \circ \De^R_1\cc$, $\De^A_2 = \De^A_1\circ \qcmoller^\dagger\cc$ and $\Delta^{(+)}_2 = \qcmoller \circ \Delta^{(+)}_1 \circ \qcmoller^\dagger\cc$, we may view these linear maps, as well as $\star_2$, as formal power series in in $Q$ and its functional derivatives as well, which shall be our point of view throughout the remaining part of this work. While the Neumann series \eqref{eq:Neumann series} is in general formal, which is sufficient for our needs, Lemma \ref{pr:neumann series} in the appendix shows that this series converges in the special case of a pure Minkowski background with a pure mass perturbation. The series does not converge if $Q$ contains second derivatives and thus encodes a perturbative change of the background metric because $R_{1,Q}$ has causal properties pertaining to $g_2$, while the causal properties of the Neumann series \eqref{eq:Neumann series} are determined by $g_1$.

\begin{remark}\label{rem:Hadamardform}
We recall that  the integral kernels of $\De^+_1$ and $\De^+_2$ w.r.t. the metrics $g_1$ and $g_2$ respectively are locally of the form \cite{Radzikowski}
\beq\label{def:Hamardplus}
\De^+_i(x,y)= \lim_{\epsilon\downarrow 0}\frac{1}{8\pi^2}\left(\frac{U_i(x,y)}{\sigma^{\epsilon +}_i(x,y) }+V_i(x,y) \log\left( \lambda^2\sigma^{\epsilon +}_i(x,y)\right)\right)+W_i(x,y)\,,
\eeq
where $\sigma^{\epsilon +}_i(x,y) \doteq \sigma_i(x,y) + i\epsilon (t_i(x)-t_i(y))+\epsilon^2/2$, $t$ is an arbitrary time function, $U_i$ and $V_i$ are the Hadamard coefficients corresponding to the models $\cS_1$ and $\cS_2$, $2\sigma_i$ is the squared geodesic distance corresponding to the metric $g_i$, $\lambda$ is a dimensionful constant and $W_i$ is a smooth and symmetric function which is not uniquely determined by the Hadamard condition. $\De^+_2$ has this local form in the exact sense on account of Proposition \ref{pr:H intertwiner}. However considered as a formal series in $Q$, it still has this form where now $\sigma_2$, $U_2$, $V_2$ and $W_2$ are taken as formal series in $Q$.
\end{remark}

We close the discussion of the classical M\o ller map by stating the already anticipated important result proved in \cite{DF2} (for arbitrary local interactions) that $\qmoller$ is the classical limit of $\molh{1}{Q}$ in the sense of formal power series in $Q$. For this, it is essential that $\hbar$ appears in the correct place in the definition of the $S$--matrix \eqref{def: S-matrix} and the products $\star_1$ and $\T[1]$.

\begin{proposition}\label{pr:classicallimit}
Let $Q\in\Floc$ be of the form \eqref{def:striclyquadratic} and let $\Ftloc[1]$ and $\molh{1}{Q}$ be defined by \eqref{def:Ftloc} and \eqref{def:quantum moller operator} respectively. Then for all $F\in\Ftloc[1]$
$$\lim_{\hbar \to 0}\molh{1}{Q}(F)=\qmoller(F)$$
in the sense of formal power series in $Q$ and its functional derivatives, i.e. for $\qmoller$ considered as the pullback of the Neumann series \eqref{eq:Neumann series}. In particular,
$$\molh{1}{Q}(F)=\qmoller(F)+\hbar\, G\,,$$
where $ G$ is an in general non--vanishing formal power series in $Q$ and its functional derivatives with coefficients in $\Fmuc[1]$.
\end{proposition}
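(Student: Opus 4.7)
The plan is to expand $\molh{1}{Q}(F)$ as a formal power series in $Q$, extract the $\hbar^{0}$ contribution at each order, and recognise the result as the Neumann series \eqref{eq:Neumann series} representing $\qcmoller$ by Proposition \ref{pr:rq-inverse}. The Bogoliubov formula \eqref{def:quantum moller operator} combined with the $\T[1]$--exponential defining the $S$--matrix \eqref{def: S-matrix} gives
$$\molh{1}{Q}(F) \;=\; S_{1,Q}^{-1}\star_1\bigl(S_{1,Q}\T[1] F\bigr) \;=\; \sum_{n\geq 0}\frac{1}{n!}\left(\frac{i}{\hbar}\right)^{\!n} R_n(Q^{\otimes n};F),$$
where each totally retarded product $R_n(Q^{\otimes n};F)$ is a linear combination of iterated $\star_1$-- and $\T[1]$--contractions of $F$ with $n$ copies of $Q$. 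Since every such contraction carries one power of $\hbar$, the prefactor $\hbar^{-n}$ is absorbed and the right--hand side reorganises itself as a formal power series in non--negative powers of $\hbar$.

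The heart of the argument is the identification of the $\hbar^{0}$ part of $R_n(Q^{\otimes n};F)$. A standard graph--theoretic analysis shows that only tree diagrams contribute at this order, and that on the leg attached to $F$ the causally ordered combinations of $\De^F_1$, $\overline{\De^F_1}$ and $\De^+_1$ arising from $\T[1]$, $\widetilde T_1$ and $\star_1$ assemble, via $\De^F_1 - \De^+_1 = i\De^A_1$, into a single retarded propagator $\De^R_1$; the contractions between pairs of $Q$--vertices similarly collapse to iterated compositions with $-\De^R_1\circ\q$. Because $Q$ is strictly quadratic, $\q$ is a $\phi$--independent linear operator, so the order--$n$ tree sum can be resummed to $F^{(1)}(\phi)$ paired with $(-\De^R_1\q)^n\phi$, with the analogous inhomogeneous terms accounting for the affine part of $Q^{(1)}$. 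Comparing with the Neumann series \eqref{eq:Neumann series} yields
$$\lim_{\hbar\to 0}\molh{1}{Q}(F)(\phi)\;=\;F\Bigl(\sum_{n\geq 0}(-\De^R_1\q)^n\phi\Bigr)\;=\;F\bigl(\qcmoller(\phi)\bigr)\;=\;\qmoller(F)(\phi).$$
The decomposition $\molh{1}{Q}(F) = \qmoller(F) + \hbar\,G$ then follows by collecting all contributions of order $\hbar^{\geq 1}$, which are generically non--vanishing because loop contractions through $\De^+_1$ do appear as soon as $F$ is non--linear. The extension from $F\in\Floc$ to $F\in\Ftloc[1]$ is routine, since every element of $\Ftloc[1]$ is a $\T[1]$--series of local functionals and the classical limit of $\T[1]$ is the pointwise product, which is precisely what the pull--back $F\mapsto F\circ\qcmoller$ preserves.

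The principal obstacle is the combinatorial bookkeeping behind the $\hbar^{-n}$ cancellation: one must verify that the $n$ contractions saturating this prefactor assemble into the specific retarded composition $(-\De^R_1\q)^n$, and not into some other combination involving $\De^A_1$ or $\De^+_1$. This is handled inductively using the support properties of the totally retarded products together with the identity $\De^F_1 = \De^+_1 + i\De^A_1$, and is essentially the content of the classical--limit theorem of D\"utsch--Fredenhagen \cite{DF2} for general local interactions, which applies without modification in the present quadratic setting.
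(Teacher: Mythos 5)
The paper gives no proof of this proposition at all: it is stated as a result imported from D\"utsch--Fredenhagen \cite{DF2} (see the sentence immediately preceding it), so there is no in--paper argument to compare yours against step by step. Your sketch is a correct outline of the mechanism behind that cited result --- the $\hbar^{-n}$ from the $S$--matrix is saturated exactly by the $n$ contractions of a connected tree, whose propagator combinations collapse via $\De^F_1-\De^+_1=i\De^A_1$ to retarded compositions that resum on configurations to the Neumann series $(\Id+\De^R_1\circ\q)^{-1}$ --- and since you, like the paper, ultimately defer the combinatorial core (tree dominance in the loop expansion) to \cite{DF2}, the two amount to essentially the same approach.
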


\subsection{Characterisation of the perturbative algebra \texorpdfstring{$\tAreg_{1,Q}$}{A1,Q,reg} and perturbative agreement between \texorpdfstring{$\Areg_2$}{A2,reg} and \texorpdfstring{$\tAreg_{1,Q}$}{A1,Q,reg}}
\label{sec:perturbativerelations}

For the remainder of this paper, we shall work exclusively in the following setting, unless explicitly mentioned otherwise. We consider actions $\cS_1$ and $\cS_2$ of the form \eqref{def:quadraticactions} where $\Floc\ni Q=\cS_2-\cS_1$ is of the form \eqref{def:striclyquadratic}. We consider an arbitrary but fixed $\De^+_1$ of Hadamard form w.r.t. $\cS_1$ and the induced algebras $\mA_1=(\Fmuc[1],\star_1)$ and $\Areg_1=(\Freg,\star_1)$. Moreover, we consider the algebras $\mA_2=(\Fmuc[2],\star_2)$, $\Areg_2=(\Freg,\star_2)$ and propagators $\De^+_2$, $\De^{R/A}_2$ and $\De_2$ related to $\cS_2$ which are uniquely induced by the same quantities related to $\cS_1$ via $\qmoller$ as discussed in the previous section, in particular $\De^+_2\doteq \qcmoller\circ\De^+_1\circ \qcmoller^\dagger\cc$. Given an arbitrary but fixed prescription for the time--ordered map $T(g,M,j,\star)$ we set $T_1 \doteq T(g_1,M_1,0,\star_1)$, define $\T[1]$ on local functionals via \eqref{def:timeorderedproductmap} and recall that $\T[1]$ is in fact well--defined on $\Ftloc[1]$ and unambiguous if one of the factors is regular. Analogously the time--ordered product $\T[2]$ corresponding to $\star_2$ is unambiguously given on regular functionals by 
\beq\label{def:t2} 
F\T[2] G=
\sum_{n\geq 0}\frac{\hbar^n}{n!}\left \langle\left(\De_2^F\right)^{\otimes n}, F^{(n)}\otimes G^{(n)}\right\rangle\,,\qquad F,G\in\Freg\,,\qquad \De^F_2\doteq \De^+_2 + i \De^A_2\,.
\eeq

The aim of this section is to prove the PPA for regular functionals, i.e. to prove that \eqref{diagramexact} is a commutative diagram with $\bq$ intertwining $\T[1]$ and $\T[2]$. To this avail, we first show that $\tAreg_{1,Q}\doteq (\Freg,\qstar)$, with $\qstar$ as in \eqref{operations on the interacting algebra}, is well--defined and analyse the precise form of $\qstar$. As $\tAreg_{1,Q}\doteq (\Freg,\qstar)$ is defined in such a way that $\bqmoller:\tAreg_{1,Q}\to\Areg_1$ is a $*$--isomorphism, we automatically get that $\bq\doteq \qmoller^{-1}\circ \bqmoller:\tAreg_{1,Q}\to\Areg_2$ is a $*$--isomorphism as well. The PPA for regular functionals then follows from the observation that $\bq$ is a deformation and the identity on linear functionals.

\begin{remark}\label{rem:Qrenfreedom} In perturbative QFT on curved spacetimes one would like to work only with interactions corresponding to local and covariant observables. Consequently, given a quadratic interaction functional $Q$, one should rather consider the local Wick polynomials $T_1(Q)$ and correspondingly the quantum and classical M\o ller maps $\molh{1}{T_1(Q)}$ and $\mR_{1,T_1(Q)}$. However, the axioms for local Wick polynomials imply that $T_1(Q)-Q$ is a constant functional for quadratic $Q$ and thus $\molh{1}{T_1(Q)}=\bqmoller$ and $\mR_{1,T_1(Q)}=\qmoller$.
\end{remark}

We begin by demonstrating that $\tAreg_{1,Q}\doteq (\Freg,\qstar)$ is well--defined and isomorphic to $\mA_2$.

\begin{proposition}\label{pr:qmoellermapsreg}
The following statements hold.
\begin{itemize}
\item[(i)] $\bqmoller$ and $(\bqmoller)^{-1}$ are well--defined on $\Freg$ and map regular functionals to formal power series in the functional derivatives of $Q$ with values in $\Freg$.
\item[(ii)] The actions of $\bqmoller$ and $(\bqmoller)^{-1}$ on $\Freg$ are independent of the renormalisation freedom of the time--ordered product. 
\item[(iii)] The interacting $\star$--product $\qstar$ and the interacting involution $*_{1,Q}$, defined as in \eqref{operations on the interacting algebra}, are well--defined on $\Freg$. Consequently
$$\tAreg_{1,Q}\doteq(\Freg,\qstar,*_{1,Q}) $$
is well--defined.
\item[(iv)] $\bq\doteq \qmoller^{-1}\circ \bqmoller:\tAreg_{1,Q}\to\Areg_2$ is a $*$--isomorphism.  
\end{itemize}
\end{proposition}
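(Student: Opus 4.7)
The approach is to work directly from the explicit formulas $\bqmoller(F)=S_{1,Q}^{-1}\star_1(S_{1,Q}\T[1] F)$ and, dually, $(\bqmoller)^{-1}(F)=S_{1,-Q}\T[1](S_{1,Q}\star_1 F)$, expanded as formal power series in $Q$ and its functional derivatives. For part (i), the key observation is the one recorded after \eqref{def:Ftloc}: whenever one of the factors is regular, both $\T[1]$ and $\star_1$ reduce to the contraction--exponential formulas with $\De^F_1$ and $\De^+_1$, respectively. Since $F\in\Freg$ has smooth compactly supported derivatives $F^{(n)}$, every contraction involving $F$ in the expansion of $\bqmoller(F)$ amounts to smearing an a priori singular distribution against a smooth compactly supported function, and therefore yields a regular functional. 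Carrying this through order by order in $Q$, the remaining contractions between derivatives of $Q$ and of $F$ preserve regularity at every order, and the same argument applies verbatim to $(\bqmoller)^{-1}$.

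For part (ii), I would compare two renormalisation prescriptions by writing $T_1'=T_1\circ Z$ for a St\"uckelberg--Petermann transformation $Z$, so that $S'_{1,Q}=S_{1,Z(Q)}$. For a purely quadratic $Q$ the freedom encoded by $Z$ is tightly constrained by locality, covariance, scaling and the field--independence axiom \eqref{eq:phi_independence_of_T}: modulo a possible readjustment of the parameters already present in $Q$ (which would just correspond to a relabelling of $Q$ itself), the difference $Z(Q)-Q$ collapses to central contributions. Such central terms commute with every functional for both $\star_1$ and $\T[1]$, and they therefore cancel in the combination $S_{1,Q}^{-1}\star_1(S_{1,Q}\T[1] F)$. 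Hence $\bqmoller(F)$ and $(\bqmoller)^{-1}(F)$ are insensitive to the chosen prescription when $F\in\Freg$.

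Given (i) and (ii), part (iii) is essentially formal: for $F,G\in\Freg$ the regular formal series $\bqmoller(F)$ and $\bqmoller(G)$ have a $\star_1$--product that is again a regular formal series (as the $\star_1$--product of two regular functionals is regular), and applying $(\bqmoller)^{-1}$ returns a regular functional, so $F\qstar G$ is well--defined on $\Freg$; the analogous reasoning handles the interacting involution $*_{1,Q}$. For (iv), Theorem \ref{th:classical homomorphism}(iii)--(iv) already identifies $\qmoller$ as a $*$--isomorphism $\Areg_2\to\Areg_1$, so $\qmoller^{-1}:\Areg_1\to\Areg_2$ is available, and by the very definition of $\qstar$ and $*_{1,Q}$ in \eqref{operations on the interacting algebra}, $\bqmoller:\tAreg_{1,Q}\to\Areg_1$ is a $*$--isomorphism onto its image, hence the composition $\bq=\qmoller^{-1}\circ\bqmoller:\tAreg_{1,Q}\to\Areg_2$ is a $*$--isomorphism. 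I expect the genuinely delicate step to be part (ii): a fully rigorous proof requires a careful identification of precisely which local counterterms can appear under St\"uckelberg--Petermann redefinitions for a quadratic $Q$, and of why only central contributions survive when paired with a regular insertion; once this is in hand, the remaining parts amount to bookkeeping.
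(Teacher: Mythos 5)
Your conclusions are correct and parts (iii) and (iv) coincide with the paper's argument, but part (i) as written has a genuine gap: you only discuss contractions \emph{involving} $F$, i.e.\ between derivatives of $Q$ and derivatives of $F$, whereas the dangerous terms in $S^{-1}_{1,Q}\star_1(S_{1,Q}\T[1] F)$ are the self--contractions among the $Q$'s (inside $S_{1,Q}$ via $\T[1]$ and between $S^{-1}_{1,Q}$ and $S_{1,Q}$ via $\star_1$), which leave $\phi$--legs of $Q$ uncontracted. That these do not spoil regularity is \emph{exactly} where the quadraticity of $Q$ enters, and the statement is in fact false without it: already for $Q=\int f\phi^4$ one finds $\bqmoller(F_g)=F_g+\bqmoller\bigl(\langle -\Delta^A_1 , \q\otimes g\rangle\bigr)$ with $\langle\Delta^A_1,\q\otimes g\rangle\propto\int f\,\phi^3\,(\Delta^A_1 g)\,d\mu_g$, a local but non--regular functional. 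Because $Q$ is quadratic each vertex has valence two, so every connected sub--diagram attached to $F$ is a chain whose kernel becomes smooth and compactly supported once one end is smeared against $F^{(k)}$, and closed chains are constants. The paper encodes this by isolating $S_{1,Q}\T[1]F-S_{1,Q}\star_1 F$, showing it equals $S_{1,Q}\T[1]C$ with $C$ a \emph{regular} functional of strictly higher order in $Q$ (this uses that the $B_k$ built from $\q$ are $\mD_\bC(\mM^k)$--valued regular functionals, i.e.\ quadraticity), and concluding $\bqmoller(F)=F+\bqmoller(C)$ by induction; you need some version of this bookkeeping, not just the $Q$--$F$ contractions.

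For part (ii) you take a genuinely different route from the paper: you invoke the St\"uckelberg--Petermann action $S'_{1,Q}=S_{1,Z(Q)}$ and argue $Z(Q)-Q$ is central for quadratic $Q$ (which is consistent with the paper's Remark \ref{rem:Qrenfreedom} and does follow from the valence--two loop counting, since all loop diagrams built from quadratic vertices are vacuum diagrams, hence constant functionals that cancel in the Bogoliubov formula). This is a legitimate and arguably more conceptual argument, but you leave its key input unproven, and it requires the full main theorem of renormalisation. The paper's argument is more elementary: from the explicit recursion one reads off that $\bqmoller(F)$ for regular $F$ is built solely from contractions of $\Delta^F_1$, $\Delta^+_1$ and $\Delta^A_1$ against smooth compactly supported kernels, so no extension of distributions to a diagonal --- and hence no renormalisation choice --- ever occurs. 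Either route works once completed; the paper's buys (ii) for free from the structure established in (i).
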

\begin{proof}Proof of (i). It is sufficient to prove the statement for a regular functional $F$ of order $n$ in $\phi$. To this avail, we compute
\beq\label{eq:regproof0}
\bqmoller(F) = S^{-1}_{1,Q}\star_1 \left(S_{1,Q}\T[1] F\right) = F  + S^{-1}_{1,Q}\star_1\left(S_{1,Q}\T[1] F-S_{1,Q}\star_1 F\right)\,.
\eeq
We now recall that $S_{1,Q}\star_1 F$ is given by an exponential contraction formula of the form \eqref{eq:exp-product explicit}, and the same holds for $S_{1,Q}\T[1] F$, because \eqref{def:timeordered2} is well--defined without renormalisation also for $G_1\T[1] G_2$ where $G_1\in\Ftloc[1]$ and $G_2\in\Freg$. Consequently, $S_{1,Q}\T[1] F-S_{1,Q}\star_1 F$ contains at least one functional derivative of $S_{1,Q}$ and may be computed as
\beq
S_{1,Q}\T[1] F-S_{1,Q}\star_1 F=\sum^n_{k= 1}\frac{\hbar^k}{k!}\left\langle \left(\De^F_1-\De^+_1\right)^{\otimes k} ,(S_{1,Q}\T[1]B_k)\otimes F^{(k)}\right\rangle\,,
\eeq
where
\beq\label{eq:regproof1} B_k = i^k \underbrace{\q\T[1]\,\ldots\,\T[1]\q}_{\text{$k$ times}}+\,i^{k-1}\begin{pmatrix}k\\k-2\end{pmatrix}Q^{(2)}\T[1]\underbrace{\q\T[1]\,\ldots\,\T[1]\q}_{\text{$k-2$ times}}+\ldots\eeq
and where we have used the $\phi$--independence of the time--ordered product, cf.  \eqref{eq:phi_independence_of_T}. Note that the sum in \eqref{eq:regproof0} stops at $n$ because $F$ is of $n$--th order in $\phi$. We would now like to write this sum as $S_{1,Q}\T[1] C$ with a suitable functional $C$. That this holds is not obvious at first glance, because the terms in the aforementioned sum for $k<n$ contain implicitly the pointwise product $(S_{1,Q}\T[1] B_k)\cdot F^{(k)}= S_{1,Q}^{(k)}\cdot F^{(k)}$ as $F^{(k)}$ is a regular functional of order $n-k$ in $\phi$ (with values in $\mD_\bC(\mM^k)$). Nevertheless, we can prove that the sum in \eqref{eq:regproof1} is of the wanted form in the following way. We observe 
$$S_{1,Q}^{(k)}\cdot F^{(k)} = S_{1,Q}^{(k)}\T[1] F^{(k)}-\sum^{n-k}_{j=1}\frac{\hbar^j}{j!}\left\langle \left(\De^F_1\right)^{\otimes j},S_{1,Q}^{(k+j)}\otimes F^{(k+j)}\right\rangle\,$$
where the functional derivatives of $F$ in the second term of this sum are regular functionals of at most $n-k-1$--th order in $\phi$. Iterating this procedure, we can write $S_{1,Q}^{(k)}\cdot F^{(k)}$ as a finite sum of time--ordered products and thus obtain
\beq\label{eq:regproof2}\bqmoller(F)=F + \bqmoller(C)\,,\eeq
where $C$ is a sum of terms of the form $\langle B_k, \Delta^{\otimes k}_\sharp F^{(k)}\rangle$ with  $\Delta_\sharp$ being either $\De^F_1$ or $\De^F_1 - \De^+_1$, and $k\ge 1$. Since $Q$ is quadratic, $B_k$ is a regular functional with values in $\mD_\bC(\mM^k)$ for all $k\ge 1$. Consequently, $C$ is a regular functional which is at least of first order in the functional derivatives of $Q$. The statement for $\bqmoller$ now follows from \eqref{eq:regproof2} by an induction over the order in perturbation theory.

To show the corresponding statement for $(\bqmoller)^{-1}$, we consider again an arbitrary $F\in\Freg$ of $n$--th order in $\phi$ and recall that $(\bqmoller)^{-1}$, provided it is well defined, is of the form
$$\left(\bqmoller\right)^{-1}(F)=S_{1,-Q}\T[1]\left(S_{1,Q}\star_1 F\right)$$
Using the results from the proof of the statement for $\bqmoller$, we may compute
$$\left(\bqmoller\right)^{-1}(F)=S_{1,-Q}\T[1]\left(S_{1,Q}\star_1 F\right) = F + S_{1,-Q}\T[1]\left(S_{1,Q}\star_1 F-S_{1,Q}\T[1] F\right) = F - C\,,$$
where $C$ is a functional of the type appearing in \eqref{eq:regproof2}; this concludes the proof.

Proof of (ii). This assertion follows by an induction over the order of perturbation theory from the form of $B_k$ in \eqref{eq:regproof1} and the fact that $Q$ is quadratic.

Proof of (iii). This statement follows immediately from (i).

Proof of (iv). $\bqmoller:\tAreg_{1,Q}\to\Areg_1$ is a $*$--isomorphism by construction. Moreover, by Theorem \ref{th:classical homomorphism}, $\qmoller:\Areg_2\to \Areg_1$ is a $*$--isomorphism as well. Consequently, $\bq\doteq \qmoller^{-1}\circ \bqmoller:\tAreg_{1,Q}\to\Areg_2$ is a $*$--isomorphism as a composition of two such morphisms.
\end{proof}

We would now like to compute the form of $\qstar$ on regular functionals explicitly. To this avail it is convenient to analyse the form of the commutator w.r.t. $\qstar$ among linear functionals. The following observation will prove to be useful in this respect.

\begin{proposition}\label{pr:beta-on-linear-fields}
The action of $\qmoller$ and $\bqmoller$ on linear functionals coincides in the sense of formal power series in $Q$ and its functional derivatives, in particular $\bq\doteq \qmoller^{-1}\circ \bqmoller$ is the identity on linear functionals.
\end{proposition}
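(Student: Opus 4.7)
The plan is to verify the claim directly, namely to show that $\bqmoller(F_f)=\qmoller(F_f)$ for every linear functional $F_f\in\Freg$ with $f\in\mD_\bC(\mM)$; the second assertion $\bq(F_f)=F_f$ then follows at once by composition with $\qmoller^{-1}$. The key simplification I would exploit is that, for linear $F_f$, one has $F_f^{(1)}=f$ and $F_f^{(k)}=0$ for every $k\geq 2$, so that only a single first--order contraction survives in the contraction exponentials \eqref{eq:exp-product explicit} and \eqref{def:timeordered2}. A direct application of the Bogoliubov formula \eqref{def:quantum moller operator} then reduces the problem to understanding the single quantity
\[
S_{1,Q}\T[1] F_f - S_{1,Q}\star_1 F_f \;=\; i\hbar\,\langle \De^A_1,\,S_{1,Q}^{(1)}\otimes f\rangle \,,
\]
where I have used $\De^F_1-\De^+_1=i\De^A_1$.

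Next I would apply the $\phi$--independence axiom \eqref{eq:phi_independence_of_T} to the time--ordered exponential defining $S_{1,Q}$ in order to rewrite $S_{1,Q}^{(1)}=(i/\hbar)\,S_{1,Q}\T[1]\q$. Substituting this back, pulling the contraction with $\De^A_1 f$ inside the time--ordered product, and invoking the formal selfadjointness of $Q^{(2)}$ (which is the content of the symmetry of $Q$ in \eqref{def:striclyquadratic}), the right--hand side is recast as $-S_{1,Q}\T[1] F_{Q^{(2)}\De^A_1 f}$, where $F_{Q^{(2)}\De^A_1 f}$ is again a \emph{linear} functional. Multiplying by $S_{1,Q}^{-1}\star_1$ from the left produces the recursive identity
\[
\bqmoller(F_f) \;=\; F_f - \bqmoller\!\left(F_{Q^{(2)}\De^A_1 f}\right),
\]
to be read as an equality of formal power series in $Q$ and its functional derivatives.

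Iterating this recursion yields the Neumann series $\bqmoller(F_f)=F_{(\Id+Q^{(2)}\De^A_1)^{-1}f}$. I would match this with the classical side by taking the formal adjoint of $\qcmoller=(\Id+\De^R_1\circ\q)^{-1}$ from Proposition \ref{pr:rq-inverse} with respect to $\langle\cdot,\cdot\rangle_1$; using $(\De^R_1)^\dagger=\De^A_1$ and the selfadjointness of $\q=Q^{(2)}$ one obtains $\qcmoller^\dagger=(\Id+Q^{(2)}\De^A_1)^{-1}$. Since $\qmoller(F_f)(\phi)=F_f(\qcmoller\phi)=\langle\qcmoller^\dagger f,\phi\rangle_1=F_{\qcmoller^\dagger f}(\phi)$, the two sides coincide, which is the first assertion.

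The main subtlety I anticipate is the rigorous interpretation of the distributional object $\De^A_1 f$ (which is not smooth in general) and the step in which the functional derivative $\q$ is pulled inside the pairing with $\De^A_1 f$ to identify the result as the linear functional $F_{Q^{(2)}\De^A_1 f}$. Working order by order in the formal expansion in $Q$ bypasses any convergence issue, and the compact support of $Q$ together with the wave--front properties of $\De^A_1$ ensure that the pairings arising at each order are well defined by microlocal analysis; these are the same kind of estimates that underlie the proof of Proposition \ref{pr:qmoellermapsreg}.
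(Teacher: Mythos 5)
Your proposal is correct and follows essentially the same route as the paper's own proof: both apply the Bogoliubov formula to $F_f$, use the $\phi$--independence axiom to write $S_{1,Q}^{(1)}=\tfrac{i}{\hbar}S_{1,Q}\T[1]\q$, obtain the Yang--Feldman recursion $\bqmoller(F_f)=F_f+\bqmoller(F_{-\q\De^A_1 f})$, iterate it into the Neumann series with $r^\dagger=-\q\De^A_1$, and match this against the expansion of $\qcmoller^\dagger$. The only cosmetic difference is that you spell out $\qcmoller^\dagger=(\Id+\q\circ\De^A_1)^{-1}$ explicitly (and your worry about $\De^A_1 f$ is unnecessary, since $\De^A_1 f$ is smooth and $\q\De^A_1 f$ is compactly supported because $Q$ has compact support), but the argument is the same.
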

\begin{proof}
Applying the Bogoliubov formula to an arbitrary linear functional $F_f$ we find
\begin{align*}
\bqmoller(F_f) &= S^{-1}_{1,Q} \star_1\left( S_{1,Q}\T[1] F_f \right)  \\
 &=S^{-1}_{1,Q} \star_1\left( S_{1,Q}\star_1 F_f \right) + 
S^{-1}_{1,Q} \star_1\left\langle \Delta^F_1 , S_{1,Q}^{(1)}\otimes f\right\rangle -
S^{-1}_{1,Q} \star_1\left\langle \Delta^+ , S_{1,Q}^{(1)}\otimes f\right\rangle   \\  
 &=F_f - S^{-1}_{1,Q} \star_1\left( S_{1,Q}\T[1]\left\langle \Delta^A,\q\otimes f\right\rangle\right)\,,
\end{align*}
where $S_{1,Q}^{(1)}=iS_{1,Q}\T[1]\q$ follows from the $\phi$--independence of the time--ordered product, see \eqref{eq:phi_independence_of_T}. However, $-\left\langle\Delta^A, \q\otimes f\right\rangle=F_{-{\q}\Delta_A f}$ is again a linear functional, where we recall that $\q$ defines a linear map on $\mE(M)$ which we denote by the same symbol.  Consequently, we obtain the Yang--Feldman type equation
\[
\bqmoller(F_f) = F_f + 
\bqmoller\left(F_{-{\q}\Delta_A f}\right).
\]
Applying this recursively, we have that 
\[
\bqmoller(F_f) = F_f +  F_{r^\dagger f} +  F_{r^\dagger r^\dagger f} +    \ldots 
\]
where $r^\dagger \doteq   -{\q}\Delta_A$. Recalling $F_f(\phi)=\int_M \phi f d\mu_{g_1}$ and the formal expansion of $\qmoller$ via \eqref{eq:Neumann series} we conclude that 
$\bqmoller(F_f)=\qmoller(F_f)$.
\end{proof}

Using the previous result in conjunction with Proposition \ref{pr:qmoellermapsreg} (iii), we may directly  compute the $\qstar$--commutator of two arbitrary linear functionals $F_f$ and $F_g$ as
\beq\label{eq:star_1q_commutator}
[F_f, F_g]_{\qstar} = \bq^{-1}\left([\bq(F_f),\bq(F_g)]_{\star_{2}}\right) =  i\hbar\bq^{-1}(\Delta_2(f,g))=i\hbar\Delta_2(f,g),
\eeq
where $\Delta_2(f,g)\doteq \langle\Delta_2, f\otimes g\rangle_1$ and where we recall Remark \ref{rem:operatordistributionconventions}. The next proposition ensures that the last observation together with the fact that, by Lemma \ref{pr:timeorderedprodof1Q}, the time--ordered product of regular functionals corresponding to $\qstar$ is $\T[1]$, already determine $\qstar$ uniquely. In this context, we note that, while the time--ordered product on regular functionals is uniquely determined by the corresponding $\star$--product, the inverse holds true only if one takes into account the involution on the algebra under consideration. Since the interacting involution $\qast$ differs in general from the free involution $*$, it is possible that the interacting $\star$--product $\qast$ and the free $\star$--product $\star_1$ have the same time--ordered product on regular functionals although they differ themselves.

\begin{proposition}\label{pr:structure-result} We recall Remark \ref{rem:operatordistributionconventions} and set
$\Delta^+_{1,Q}(f,g)\doteq \langle \Delta^+_{1,Q} , f \otimes g\rangle_1$, $\Delta^{\sharp}_i(f,g)\doteq \langle \Delta^\sharp_i,f\otimes  g\rangle_1$ for $\sharp\in\{+,-,R,A,F\}$ and $i=1,2$. The product $\qstar$ on $\Freg$ is uniquely determined by the following two conditions:
\begin{itemize}
\item[(i)] $[F_f, F_g]_{\qstar}=i\hbar\Delta_2(f,g)$, i.e. the antisymmetric part of the $\qstar$--product of two linear functionals $F_f, F_g$ is $\frac{i}{2}\hbar \Delta_2(f,g)$.
\item[(ii)] For arbitrary $F,G\in\Freg$ with $F\gtrsim G$, $F\T[1] G = F\qstar G$.
\end{itemize}
These two conditions imply that $\qstar$ is given by the usual exponential contraction formula \eqref{eq:exp-product explicit}, where the bi--distribution defining the contraction is the integral kernel w.r.t. $d\mu_{g_1}$ of the linear operator
\begin{equation}\label{eq:delta+1q}
\Delta^+_{1,Q} \doteq \De^F_1-i\De^A_2\cinv =\De^+_1 + i\Delta^A_1 - i \Delta^A_2\cinv=  \De^+_2\cinv + \De^F_1-\De^F_2\cinv.
\end{equation}
In particular, $$\Delta^+_{1,Q}(x,y) = \De^+_2(x,y) + \De^F_1(x,y) -  \De^F_2(x,y)\,.$$
\end{proposition}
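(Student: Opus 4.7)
The plan is to verify that the exponential contraction star product with kernel $\Delta^+_{1,Q} = \De^F_1 - i\De^A_2\cinv$ satisfies conditions (i) and (ii), to confirm the three algebraically equivalent forms, and then to establish that (i) and (ii) determine $\qstar$ uniquely on $\Freg$. The equality of the three displayed expressions is routine rewriting via $\De^F_i = \De^+_i + i\De^A_i$ (composed with $\cinv$ where needed): the second form uses this identity for $i=1$, while the third uses $\De^+_2\cinv - \De^F_2\cinv = -i\De^A_2\cinv$. For condition (i), the symmetry of $\De^F_1$ and the relation $\De^A_2(y,x) = \De^R_2(x,y)$ yield an antisymmetric part $i(\De^R_2 - \De^A_2) = i\Delta_2$, so that the commutator of two linear functionals under the exponential product is $i\hbar\Delta_2(f,g)$, as required.

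For condition (ii) I would use that $\Delta^+_{1,Q}(x,y) = \De^F_1(x,y)$ exactly on the region where $\De^A_2(x,y) = 0$, i.e.\ where $x \notin J^-_2(y)$. When $F \gtrsim G$ with separating Cauchy surface chosen in the $g_2$-causal structure, every pair $(x,y) \in \supp(F^{(n)}) \times \supp(G^{(n)})$ lies in this region, and the exponential contractions for $F \qstar G$ and $F \T[1] G$ agree term by term. For uniqueness, I would first argue that $\qstar$ on $\Freg$ is necessarily of exponential contraction form with some kernel $u$; this follows from the explicit expressions for $\bqmoller^{\pm 1}$ on regular functionals derived in the proof of Proposition~\ref{pr:qmoellermapsreg}, which reduce $F \qstar G = \bqmoller^{-1}(\bqmoller(F) \star_1 \bqmoller(G))$ to a contraction-type series whose kernel can be read off. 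Condition (i) then pins down the antisymmetric part $u - u^T = i\Delta_2$. Lemma~\ref{pr:timeorderedprodof1Q} together with condition (ii) identifies $\T[1]$ as the time-ordered product associated to $\qstar$ on $\Freg$; but the time-ordered product of any exponential star product with kernel $u$ satisfying $u - u^T = i\Delta_2$ is the exponential with symmetric Feynman-type kernel $u + i\De^A_2$, the unique symmetric bi-distribution agreeing with $u$ wherever $\De^A_2$ vanishes. Equating $u + i\De^A_2 = \De^F_1$ yields $u = \De^F_1 - i\De^A_2 = \Delta^+_{1,Q}$.

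The main obstacle is the dual step of extracting the kernel from $\qstar$ and matching the Feynman-type kernel of its time-ordered product with $\De^F_1$. The first part relies on the combinatorial analysis of $\bqmoller$ and $\bqmoller^{-1}$ on regular functionals already carried out in the proof of Proposition~\ref{pr:qmoellermapsreg}; the second is a general statement about exponential star products, namely that causal factorisation with a symmetric Feynman kernel forces the two kernels to coincide wherever the advanced propagator vanishes, which fixes the Feynman kernel uniquely as $u + i\De^A$. A more concrete alternative would be to use the $*$--isomorphism $\bq: \tAreg_{1,Q} \to \Areg_2$ of Proposition~\ref{pr:qmoellermapsreg}(iv) together with Proposition~\ref{pr:beta-on-linear-fields} (which asserts $\bq = \mathrm{Id}$ on linear functionals) to compute $F_f \qstar F_g$ directly and read off the kernel on pairs of linear functionals, whereafter the time-ordered product identification fixes the full kernel. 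The remaining steps are straightforward propagator manipulations.
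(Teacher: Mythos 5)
Your verification that the kernel $\De^F_1-i\De^A_2\cinv$ satisfies (i) and (ii), and your derivation of its value off the diagonal from those two conditions, match the paper's computation. The route to uniqueness, however, has a genuine gap. You take as input that $\qstar$ on $\Freg$ is an exponential contraction product with \emph{some} kernel $u$, claiming this can be read off from the analysis of $\bqmoller^{\pm1}$ in Proposition \ref{pr:qmoellermapsreg}. That proposition only establishes well--definedness of $\bqmoller^{\pm 1}$ as formal power series of multi--differential operators acting on $\Freg$; it does not show that the composition $(\bqmoller)^{-1}\left(\bqmoller(F)\star_1\bqmoller(G)\right)$ collapses to a single contraction exponential, i.e.\ that the $n$--th order contribution is $\hbar^n\langle u^{\otimes n},F^{(n)}\otimes G^{(n)}\rangle/n!$ with the \emph{same} bidistribution $u$ at every order. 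That combinatorial statement is precisely the nontrivial content of the proposition, and the paper obtains it without assuming it: it uses only that $\qstar$ is associative (being the pullback of $\star_1$ under an isomorphism) and then determines $\qstar$ on $n$--fold pointwise products of linear functionals by induction in $n$, using (i) and associativity to compute the commutators $[F_1,F_2\cdots F_k]_{\qstar}$ and (ii) to fix the value when the supports are causally ordered; the exponential form is the \emph{output} of that induction, not an input. Your ``more concrete alternative'' (computing $F_f\qstar F_g$ via $\bq$ and Proposition \ref{pr:beta-on-linear-fields}) likewise only yields the kernel on pairs of linear functionals and leaves the higher--order structure unaddressed; it is essentially the base case of the paper's induction.

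A second, smaller omission: conditions (i) and (ii) determine the kernel only off the diagonal of $\mM^2$ (your ``unique symmetric bi--distribution agreeing with $u$ wherever $\De^A_2$ vanishes'' is unique only there, since a term supported on the diagonal is invisible to both the commutator and the causal--ordering condition away from coinciding points). The paper closes this by observing that the Steinmann scaling degree of the off--diagonal distribution towards the diagonal is smaller than the spacetime dimension $4$, so the extension to the diagonal is unique. You would need this, or an equivalent argument excluding diagonal counterterms, to complete the identification $u=\De^F_1-i\De^A_2\cinv$.
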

\begin{proof}
First of all we recall that the product $\qstar$ is homomorphic to $\star_1$. Hence, since $\star_1$ is associative, $\qstar$ is associative as well. Moreover we recall that a regular functional of order $n$ in $\phi$ can be seen as a series of pointwise products of $n$ linear functionals, and that a pointwise products of $n$ linear functionals can be written as a $\star$--product of $n$ linear functionals plus regular functionals of lower order in $\phi$ if the $\star$--product is defined by an exponential contraction formula. Hence, it is sufficient to prove that the $\qstar$--product of $n$ linear functionals is of the form stated in the proposition.

To this avail, let us then consider two linear functionals $F_j\doteq F_{f_j},\ j=1,2$ defined as in \eqref{def:linear functionals} with $f_j\in\mD_\bC(\mM)$ and using the measure $d\mu_{g_1}$. If $f_1\gtrsim f_2$ we immediately get $F_1\qstar F_g\stackrel{(ii)}{=}F_1\T[1] F_2=F_1\cdot F_2+\hbar\De^F_1(f_1,f_2)$, while in the opposite case we can
use $(i)$ to get $F_1\qstar F_2=F_2\qstar F_1+i\hbar \Delta_2(f_1,f_2)=F_1\cdot F_2+\hbar(\De^F_1(f_1,f_2)+i\Delta_2(f_1,f_2))$.
Summing up we find
\begin{gather}
F_1\qstar F_2 =
F_1\cdot F_2+
\hbar\De^+_{1,Q}(f_1,f_2),
\end{gather}
where $\De^+_{1,Q}(f,g)= \De^F_1(f,g)$ if $f\gtrsim g$ and $\De^+_{1,Q}(f,g)=\De^F_1(f,g)+i\Delta_2(f,g)$ in the opposite case; this defines $\De^+_{1,Q}$ everywhere up to the diagonal. Since the Steinmann scaling degree towards the diagonal of this distribution is smaller then the spacetime dimension $4$ the extension to the diagonal is unique, see e.g. Theorem 5.2 in \cite{BF00} for further details. The result is 
\[
\De^+_{1,Q}(f,g) = \Delta^F_1(f,g)-i\De^A_2(f,g),\qquad \forall f, g \in \mD_\bC(\mM).
\]

Now we proceed by induction. To outline the idea, we consider in detail the case with three linear functionals $F_i$, $i=1,2,3$ $f_i\in\mD_\bC(\mM)$ as a simplified example.
By the previous case and by associativity of $\qstar$, we know that
\begin{gather}
F_1\qstar
(F_2\qstar
F_3)=
F_1\qstar (F_2\cdot F_3)+
\hbar F_1\De^+_{1,Q}(f_2,f_3).
\end{gather}
We now need more information on the $\qstar$-commutator of the first summand on the right hand side: again, by associativity and $(i)$, we get
\begin{gather*}
F_1\qstar F_2\qstar F_3=
F_2\qstar F_1\qstar F_3+
i\hbar F_3\Delta_2(f_1,f_2)\\=
(F_2\qstar F_3)\qstar F_1+
i\hbar F_3\Delta_2(f_1,f_2)+
i\hbar F_2\Delta_2(f_1,f_3),
\end{gather*}
hence, using associativity and $(i)$ we obtain the commutator
\[
[F_1,F_2\cdot F_3]_{\qstar}=
i\hbar F_3\Delta_2(f_1,f_2)+
i\hbar F_2\Delta_2(f_1,f_3).
\]

With the same argument as above, if $f_1\gtrsim f_j\ j=2,3$ we get $F_1\qstar (F_2\cdot F_3) = F_1\cdot F_2 \cdot F_3+\contraction{\hbar\De^F_1(f_1,}{f_2}{)}{F_3}\hbar\De^F_1(f_1,f_2)F_3$ -- where the contraction $\contraction{}{a}{_i}{b}a_ib_j= a_ib_j+a_jb_i$ indicates a symmetrisation --, while in the opposite case the previous results give $F_1\qstar (F_2\cdot F_3) = F_1\cdot F_2\cdot F_3+\contraction{\hbar\De^F_1(f_1,}{f_2}{)}{F_3}\hbar\De^F_1(f_1,f_2)F_3+i\hbar F_3\Delta_2(f_1,f_2)+
i\hbar F_2\Delta_2(f_1,f_3)$.
Combining all together  we have
\begin{gather}
F_1\qstar F_2 \qstar F_3 =F_1F_2F_3+
\contraction{\hbar\De^+_{1,Q}(f_1,}{f_2}{)}{F_3}\contraction{\hbar\De^+_{1,Q}(}{f_1}{,}{f_2}\hbar\De^+_{1,Q}(f_1,f_2)F_3.
\end{gather}

The generic case is done by induction: suppose that $F_1\qstar\ldots\qstar F_n$ is given by the formula we are interested in with the correct $\De^+_{1,Q}$ whenever $n\leq N$; we would like to show that then a suitable formula holds in the case $F_1\qstar\ldots\qstar F_{N+1}$. To this avail, we observe that, if the hypothesis is true for $n\leq N$ it follows that the commutator
\begin{gather}
[F_1,F_2\cdot \ldots\cdot F_{k}]_{\qstar}=
i\hbar\sum_{j=2}^{k}\Delta_2(f_1,f_j)F_2\cdot\ldots \cdot\widehat{F_j}\cdot\ldots\cdot F_{k}\qquad k=1,\ldots,N+1
\end{gather}
where $\widehat{F_j}$ indicates that $F_j$ has been removed from the pointwise product.
For the case $k\leq N$ this follows directly from the induction hypothesis, whereas in the case $k=N+1$, it can be shown by using the inductive hypothesis and the associativity of $\qstar$ in a similar manner as before.

Once we have all these commutators we just need to argue by means of causality as above. In particular, if $f_1\gtrsim f_j\ j=2,\ldots,N+1$ we get 
\[
F_1\qstar \left(F_2\cdot \ldots\cdot  F_{N+1}\right)=F_1\cdot \ldots\cdot F_{N+1}+\hbar\sum_{k=2}^{N+1}\Delta^F_{1}(f_1,f_k)F_2\cdot\ldots\cdot\widehat{F_k}\cdot\ldots\cdot F_{N+1},
\]
while otherwise we have 
\[
F_1\qstar \left(F_2\cdot \ldots\cdot  F_{N+1}\right)=F_1\cdot \ldots\cdot F_{N+1}+\hbar\sum_{k=2}^{N+1}(\Delta^F_{1}+i\Delta_2)(f_1,f_k)F_2\cdot\ldots\cdot\widehat{F_k}\cdot\ldots\cdot F_{N+1}.
\]
Combining the two results and considering the associativity of $\qstar$ then imply the statement for the case $N+1$.
\end{proof}

By using the previous proposition we may prove the powerful result that the isomorphism $\bq:\tAreg_{1,Q}\to \mA_2$ is a deformation, which directly implies that the PPA holds for regular functionals. 
\begin{theorem}\label{th:beta-homomorphism}
The following statements hold for the isomorphism $\bq=\qmoller^{-1}\circ \bqmoller:\tAreg_{1,Q}\to \mA_2$.
\begin{itemize}
\item[(i)] $\bq$ is a deformation, i.e. 
\begin{equation}\label{eq:d-deformation}
\bq=\alpha_{d_{1,Q}} ,\qquad    d_{1,Q}(x,y) \doteq\De^+_2(x,y)-\De^+_{1,Q}(x,y) = \Delta^F_{2}(x,y)-\Delta^F_{1}(x,y)
\end{equation}
where $\alpha_{d_{1,Q}}$ is defined as in \eqref{def:alpha} and we recall Remark \ref{rem:operatordistributionconventions}.
\item[(ii)] $\bq$ intertwines the time--ordered products $\T[1]$ and $\T[2]$ on $\Freg$, i.e. for all $n$ and arbitrary $F_1,\ldots,F_n\in\Freg$,
\[
\bq\left(\bq^{-1}(F_1)\T[1]\,\ldots\,\T[1] \bq^{-1}(F_n)\right) = F_1\T[2]\,\ldots\,\T[2]F_n\,.
\]
\item[(iii)] For an arbitrary choice of renormalisation freedom, the time--ordered map satisfies the Principle of Perturbative Agreement in the sense of Definition \ref{def:PPA} for multilinear functionals, i.e. 
$$T_2(F_1,\ldots,F_n)=[\bq\circ T_1](F_1,\ldots,F_n)$$
for all linear $F_1,\ldots,F_n$.
\end{itemize}

\end{theorem}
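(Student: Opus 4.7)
The plan is to derive (i) first, as (ii) and (iii) then follow with little additional work. For (i), Proposition \ref{pr:structure-result} exhibits $\qstar$ on $\Freg$ as an exponential contraction of the form \eqref{eq:exp-product explicit} with kernel $\Delta^+_{1,Q}$, and $\star_2$ is by construction the analogous exponential contraction with kernel $\Delta^+_2$. Their difference
\[
d_{1,Q} = \Delta^+_2 - \Delta^+_{1,Q} = \Delta^F_2 - \Delta^F_1
\]
is smooth and symmetric, the second equality being read off directly from \eqref{eq:delta+1q}. Hence the deformation $\alpha_{d_{1,Q}}$ from \eqref{def:alpha} is a well--defined $\ast$--isomorphism $\tAreg_{1,Q}\to\Areg_2$. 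Both $\bq$ and $\alpha_{d_{1,Q}}$ are $\ast$--isomorphisms between the same pair of algebras (Proposition \ref{pr:qmoellermapsreg}(iv)), and both act as the identity on linear functionals: for $\bq$ by Proposition \ref{pr:beta-on-linear-fields}, for $\alpha_{d_{1,Q}}$ by inspection of \eqref{def:alpha}, since linear functionals have vanishing second and higher functional derivatives. I would then consider $\psi \doteq \alpha_{d_{1,Q}}^{-1}\circ \bq$, an algebra automorphism of $\tAreg_{1,Q}$ that fixes all linear functionals, and conclude $\psi = \mathrm{Id}$ by induction on the polynomial degree in $\phi$: any regular functional of degree $n$ is a $\qstar$--product of $n$ linear functionals modulo regular functionals of strictly lower degree, so multiplicativity of $\psi$ and the inductive hypothesis force $\psi$ to be the identity at each degree.

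For (ii), the essential observation is the dual interpretation $d_{1,Q} = \Delta^F_2 - \Delta^F_1$, which implies that the second--order functional differential operator $\Gamma_{d_{1,Q}}$ appearing in \eqref{def:alpha} equals $\Gamma_{\Delta^F_2} - \Gamma_{\Delta^F_1}$ when acting on $\Freg$. These differential operators mutually commute, so the intertwining identity $\alpha_{d_{1,Q}}(F \T[1] G) = \alpha_{d_{1,Q}}(F) \T[2] \alpha_{d_{1,Q}}(G)$ for $F, G\in\Freg$ is obtained by exactly the same formal manipulation as the one verifying that $\alpha_w$ intertwines $\star$--products, with $\Delta^+_i$ replaced throughout by $\Delta^F_i$. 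Iteration to an arbitrary $n$--fold time--ordered product is then immediate from the associativity of $\T[i]$ on $\Freg$ combined with (i).

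Finally, for (iii), the first axiom of the time--ordered map gives $T_i(F) = F$ for linear $F$, so by \eqref{def:timeorderedproductmap} we have $T_i(F_1,\ldots,F_n) = F_1\T[i]\cdots\T[i]F_n$ whenever all $F_j$ are linear. Combining this with Proposition \ref{pr:beta-on-linear-fields} and (ii) yields
\[
T_2(F_1,\ldots,F_n) = F_1\T[2]\cdots\T[2] F_n = \bq\bigl(F_1\T[1]\cdots\T[1] F_n\bigr) = [\bq\circ T_1](F_1,\ldots,F_n),
\]
as required. The main obstacle in this scheme is the uniqueness step in (i): making the degree induction rigorous requires verifying that $\tAreg_{1,Q}$ is generated as a unital $\qstar$--algebra by the linear functionals with a compatible degree filtration. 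This is expected to follow from the explicit form of $\qstar$ in Proposition \ref{pr:structure-result}, which ensures that $\qstar$--products of linear functionals agree with their pointwise products up to corrections of strictly lower polynomial degree, so the ordinary generation result for $\Freg$ under the pointwise product transfers verbatim to the $\qstar$--algebra.
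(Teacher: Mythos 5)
Your argument is correct and follows essentially the same route as the paper's proof: part (i) rests on exactly the same ingredients (the structure result of Proposition \ref{pr:structure-result}, the homomorphism property from Proposition \ref{pr:qmoellermapsreg}(iv), the fact that $\bq$ fixes linear functionals from Proposition \ref{pr:beta-on-linear-fields}, and generation of $\Freg$ by linear functionals modulo lower polynomial degree — a fact the paper already establishes inside the proof of Proposition \ref{pr:structure-result}, so your flagged ``obstacle'' is not one), with your uniqueness-of-extension phrasing via $\psi=\alpha_{d_{1,Q}}^{-1}\circ\bq$ being just a repackaging of the paper's direct computation of $\bq$ on pointwise products of linear functionals, and (ii), (iii) are handled identically. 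The one inaccuracy is your claim that $d_{1,Q}=\Delta^F_2-\Delta^F_1$ is smooth — it is not (Section \ref{sec:extensionnonlinear} points out that its coinciding-point limit diverges logarithmically, or quadratically when $g_2\neq g_1$) — but this is harmless here, since on $\Freg$ the functional derivatives are smooth and compactly supported, so $\alpha_{d_{1,Q}}$ is well defined for $d_{1,Q}$ merely a distribution.
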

\begin{proof}Proof of (i). The proof of this statement is an immediate consequence of the structure result on $\qstar$ found in Proposition \ref{pr:structure-result} and the fact that $\bq$ is the identity on linear functionals, cf. Proposition \ref{pr:beta-on-linear-fields}. As before, we may argue that it is sufficient to prove that $\bq$ is of the stated form for arbitrary pointwise products of linear functionals. To show the latter for the pointwise products of two arbitrary functionals $F_f,F_g$, we may compute
\[
\bq\left(F_f \cdot F_g\right)=\bq\left(F_f \qstar F_g\right)-\hbar \,\Delta^+_{1,Q}(f,g) = F_f\star_2 F_g -\hbar \,\Delta^+_{1,Q}(f,g) = F_f \cdot F_g +\hbar  \,d_{1,Q}(f,g)\,.
\]
Analogously one may show $\bq= \alpha_{d_{1,Q}}$ for arbitrary higher order pointwise products of linear functionals.

Proof of (ii). This statement follows immediately from (i) and $d_{1,Q}(x,y)=\De^F_2(x,y) - \De^F_1(x,y)$.

Proof of (iii). This follows directly from (ii) and Proposition \eqref{pr:qmoellermapsreg} (ii) as well as from $T_i(F_1,\ldots,F_n)=F_1\T[i]\ldots\T[i] F_n$ for all linear $F_1,\ldots,F_n$.
\end{proof}

Note that the maps $\qmoller:\mA_2\to \mA_1$ and $\bqmoller:\tAreg_{1,Q}\to\mA_2$ fail to intertwine time--ordered products although they are homomorphisms with respect to the $\star$--products. This is related to the failure of these maps to preserve causality relations among supports, but this failure cancels precisely in the combination $\bq=\qmoller^{-1}\circ \bqmoller$.

\subsection{Extension of the perturbative agreement to non--linear local functionals}
\label{sec:extensionnonlinear}

In this section we discuss the PPA for general non--linear local functionals and prove it for the case of quadratic $Q$ without derivatives, i.e. for 
\beq\label{def:puremassQ}Q=\frac12 \int_\mM M \phi^2 d\mu_{g_1}\,,\qquad  M\in\mD_\bR(\mM)\text{ arbitrary}.\eeq
 As argued in Section \ref{sec:formulation_PPA}, in this case the PPA can not hold in the strong sense that $\bq=\qmoller^{-1}\circ \bqmoller$ is a $*$--isomorphism between the algebra $\widetilde{\mA}_{1,Q}$, $\qstar$--generated by $F\in\Ftloc[1]$ and the subalgebra $\A0_2\subset\mA_2$ (cf. \eqref{def:A0}), because $\qstar$ is not well--defined on non--linear local functionals and thus $\widetilde{\mA}_{1,Q}$ is ill--defined from the outset. Notwithstanding, we know that  $\bqmoller$ is well--defined on $\Ftloc[1]$, and that $\qmoller^{-1}:\Fmuc[1]\to\Fmuc[1]$ is well--defined by Theorem \ref{th:classical homomorphism}. Consequently, $\bq=\qmoller^{-1}\circ \bqmoller$ is well--defined on $\Ftloc[1]$ and there is no obvious obstacle for satisfying the PPA in the weaker sense of Definition \ref{def:PPA}.

\begin{remark}\label{rem:HWproof} The following strategy is employed in \cite{HW05} in order  to prove the PPA for scalar fields and for $Q$ encoding metric changes (see also \cite{Zahn} for a similar proof for the case of Dirac fields in the presence of a classical gauge field and complementary details). First it is argued that the PPA $T_2 = \bq \circ T_1$ is satisfied if an only if it holds at the linearised level for all ``1--backgrounds". The linearised PPA is then proven by an induction over the total number of field factors in the arguments of the time--ordered map, by showing that at each induction step it is possible to redefine the time--ordered map $T$ in a way compatible with both the linearised PPA and the remaining axioms for $T$. Thereby the conservation of the free stress--energy tensor, i.e. $[T(\int f^b\nabla^aT_{ab}(\phi)d\mu_g )]=[0]\in\Aon$ for all compactly supported $f^a$ and with $T_{ab}(\phi)$ being the canonical stress--energy tensor plays an important role and is a necessary condition for the validity of the PPA.

In order to reabsorb the ``error term'' in the linearised PPA into a redefinition of $T$ one has to check that this term has the correct symmetry properties. In \cite[Section 6.2.6.]{HW05} it is argued that this is the case if the free stress--energy tensor is conserved. However, although conservation of the free stress--energy tensor holds only on--shell, the error term discussed in \cite[Section 6.2.6.]{HW05} is a \emph{constant} functional and thus vanishes on--shell if and only if it vanishes off--shell. Consequently, the proof of the PPA given in \cite{HW05} can be seen to hold also off--shell.

The proof strategy of \cite{HW05} outlined above can be used in order to prove the PPA also for $Q$ of the form \eqref{def:puremassQ}. Thereby the symmetry property discussed in \cite[Section 6.2.6.]{HW05} automatically holds (in spacetime dimensions $d\le 4$) due to the fact that $\phi^2$ has a lower ``engineering dimension'' than $T_{ab}$\footnote{We would like to thank Jochen Zahn for pointing this out to us.}. 
\end{remark}

Notwithstanding Remark \ref{rem:HWproof}, we develop in this section an alternative strategy to prove the PPA which is closer to the spirit of Section \ref{sec:perturbativerelations} and uses the results obtained there. To this end, we examine the precise action of $\bq$ on general local functionals. As we have seen in the previous section, $\bq=\alpha_{d_{1,Q}}$ on regular functionals where $d_{1,Q}(x,y)=\De^F_2(x,y)-\De^F_1(x,y)$. If we consider this as an exact expression, we observe that a direct extension of $\alpha_{d_{1,Q}}$ to non--linear local functionals by a limiting procedure is ill--defined, because the singular structures of $\De^F_2$ and $\De^F_1$ differ and thus the coinciding point limit of $d_{1,Q}(x,y)$ is ill--defined. In more detail, the integral kernels of the exact Feynman propagators have locally the form \eqref{def:Hamardplus} up to replacing $\sigma_i^{\epsilon+}$ by $\sigma_i+i\epsilon$. Thus, if $Q$ does not contain second derivatives, i.e. $g_1=g_2$, the coinciding point limit of the integral kernel of $d_{1,Q}$ is logarithmically divergent, whereas in the general case, the divergence is quadratic. However, we have to view $\De^F_2$ as a formal power series in $Q$ and its functional derivatives, because this is the only setting in which $\bq$ makes sense anyway. In the very same manner, $\bq=\alpha_{d_{1,Q}}$ may be extended to non--linear local functionals without problems, as $\bq=\qmoller^{-1}\circ \bqmoller$ is well--defined on $\Ftloc[1]$ and thus in particular on $\Floc$. Hereby, the renormalisation of the $\T[1]$--products appearing in $\bqmoller$ may be understood as effectively removing the divergencies in the coinciding point limit of $d_{1,Q}$ perturbatively.

\begin{proposition}\label{pr:betalocal}
The following statements hold for $\bq\doteq \qmoller^{-1}\circ\bqmoller:\Ftloc[1]\to \Fmuc[1]$.
\begin{itemize}
\item[(i)] To all orders in perturbation theory, the action of $\bq\doteq \qmoller^{-1}\circ\bqmoller$ on $\Ftloc[1]$ is given by the deformation $\bq=\alpha_{d_{1,Q}}$ where $d_{1,Q}(x,y)=\De^F_2(x,y)-\De^F_1(x,y)$ is understood as a formal power series in $Q$ and its functional derivatives, and for an arbitrary $F\in\Ftloc[1]$, all expressions of $\De^F_1$ in the formal expansion of $\alpha_{d_{1,Q}}(F)$ in terms of $\De^F_1$ and $\De^+_1$ are understood as being implicitly renormalised by the renormalisation of $\T[1]$.  Moreover, $\bq^{-1}$ is well--defined on $\Floc$ and is of the form $\bq^{-1}=\alpha_{-d_{1,Q}}$, understood in the sense mentioned above.
\item[(ii)] $\bq$ and $\bq^{-1}$ map local functionals to formal power series in $Q$ and its functional derivatives with values in $\Floc$.
\item[(iii)] $\bq$ is $\phi$--independent, i.e. $\bq(F)^{(1)}=\bq\left(F^{(1)}\right)$.
\end{itemize}
\end{proposition}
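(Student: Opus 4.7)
The plan is to bootstrap the exact identity $\bq=\alpha_{d_{1,Q}}$ already established on $\Freg$ in Theorem~\ref{th:beta-homomorphism}(i) to $\Ftloc[1]$ by working order by order in a formal expansion in $Q$. The key observation is that, at order $n$ in $Q$, the deformation $\alpha_{d_{1,Q}}$ is encoded by finitely many $\T[1]$-contractions of $F$ with $n$ copies of $Q$; these are finite on local $F$ thanks to the Epstein--Glaser renormalisation implicit in $T_1$, even though the coinciding-point limit of the exact kernel $d_{1,Q}(x,y)$ is divergent.

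For part~(i), I would expand $\bqmoller(F)=S^{-1}_{1,Q}\star_1(S_{1,Q}\T[1]F)$ as a formal series in $Q$ and its functional derivatives, using $\phi$-independence \eqref{eq:phi_independence_of_T} to rewrite every functional derivative $S^{(k)}_{1,Q}$ as a $\T[1]$-product of $k$ factors of $\q$ plus the tadpole corrections from $Q^{(2)}$ appearing in \eqref{eq:regproof1}, in the spirit of the proof of Proposition~\ref{pr:qmoellermapsreg}. I would then compose with $\qmoller^{-1}$, which is the pullback by the affine map $\qcmoller^{-1}=\Id+\De^R_1\circ\q$ from Proposition~\ref{pr:rq-inverse}, and therefore contributes its own finite formal series in $\q$ via the Taylor expansion of $F\circ\qcmoller^{-1}$. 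At order $n$ in $Q$ the combined expansion reduces to a universal combinatorial expression in $\De^F_1$, $\De^+_1$, $\De^R_1$, $\q$ and the derivatives $F^{(k)}$, which is finite whenever $F\in\Ftloc[1]$ because $T_1$ takes care of the coinciding-point limits. Since this combinatorial formula depends on $F$ only through its derivatives and is identical to the one met in the regular case, and since Theorem~\ref{th:beta-homomorphism}(i) identifies it for $F\in\Freg$ with the $n$-th order term of $\alpha_{d_{1,Q}}(F)$, the identity $\bq=\alpha_{d_{1,Q}}$ extends to $\Ftloc[1]$ as a formal power series. The corresponding $Q$-expansion of $d_{1,Q}=\De^F_2-\De^F_1$ is itself controlled by a Dyson--Schwinger-type recursion expressing $\De^F_2$ in terms of $\De^F_1$ and $\q$, together with its analogues for $\De^+_2$ and $\De^{R/A}_2$, all consequences of Lemma~\ref{pr: delta intertwined}. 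The statement $\bq^{-1}=\alpha_{-d_{1,Q}}$ is then a formal consequence: $\alpha_{d_{1,Q}}\circ\alpha_{-d_{1,Q}}$ telescopes to the identity order by order, while existence of $\bq^{-1}$ on $\Floc$ is guaranteed by the invertibility of $\bqmoller$ on $\Ftloc[1]$ (Proposition~\ref{pr:qmoellermapsreg}(i)) and of $\qmoller$ on $\Fmuc[1]$ (Theorem~\ref{th:classical homomorphism}(i)).

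Claim~(ii) is immediate from (i): at order $n$ in $Q$, both $\bq(F)$ and $\bq^{-1}(F)$ are finite $\T[1]$-products of $F$ with $n$ insertions of $Q$, which lie in $\Floc$ by the Epstein--Glaser construction of $T_1$ and the support properties imposed by the causality and locality axioms on the extension to the diagonal. For (iii), $\phi$-independence is inherited from the $\phi$-independence of $T_1$ (axiom~2 of Section~\ref{sec:interactingobservables}) and of the pullback by $\qcmoller^{-1}=\Id+\De^R_1\circ\q$, both of which commute with $\delta/\delta\phi$. Equivalently, one may observe that the kernel $d_{1,Q}$ in \eqref{eq:d-deformation} is manifestly $\phi$-independent, so $\alpha_{d_{1,Q}}$ commutes with $\delta/\delta\phi$ already at the level of the definition \eqref{def:alpha}.

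The main obstacle I anticipate is the combinatorial bookkeeping required in the order-by-order match of (i). One has to verify that the ``classical'' $\De^R_1$- and $\De^A_1$-contractions arising inside $\bqmoller$ combine with the pullback piece $\Id+\De^R_1\q+\cdots$ of $\qmoller^{-1}$ so as to cancel out, leaving only Feynman-difference contractions. A clean device is to decompose $\De^F_1$ in terms of $\De^+_1$ and $\De^R_1$ in every Feynman contraction of $S_{1,Q}\T[1]F$ and to isolate the $\De^R_1$ pieces, which under the action of $\qmoller^{-1}$ reassemble into a pullback annihilated by the inverse Neumann series \eqref{eq:Neumann series}. Once this cancellation is made manifest, direct comparison with the identity already verified on $\Freg$ in Theorem~\ref{th:beta-homomorphism} closes the induction.
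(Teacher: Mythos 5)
Your route to part (i) is genuinely different from the paper's. The paper does not attempt a direct order--by--order combinatorial match between the expansion of $\qmoller^{-1}\circ\bqmoller$ and that of $\alpha_{d_{1,Q}}$; instead it approximates a non--linear local functional by a series of time--ordered products of \emph{linear} functionals converging in the H\"ormander topology, cf.\ \eqref{eq:beta_local_proof}, applies the identity $\bq=\alpha_{d_{1,Q}}$, already established on $\Freg$, to each (regular) summand, and then exchanges the two sums, observing that the divergent coinciding--point limits of $\De^F_1$ produced by this exchange are precisely the expressions renormalised in the Epstein--Glaser construction of $\T[1]$. Your direct expansion is viable --- the paper itself carries out the relevant combinatorics in Remark \ref{rem:exactPPA}, where $d_{1,Q,n}$ is computed and the mixed $\De^+_1$/$\De^F_1$ terms are shown to cancel, leaving $i^n\De^F_1(\q\De^F_1)^n$ plus smooth remainders --- but the cancellation you flag as the ``main obstacle'' is the entire content of the argument and is left unperformed. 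Note also that the retarded/advanced contractions do not simply ``cancel out'': they reassemble the ``1''--kernels into ``2''--kernels via $\De^+_2=\qcmoller\circ\De^+_1\circ\qcmoller^\dagger$ and $\De^A_2=\De^A_1\circ\qcmoller^\dagger$, which is how $\De^F_2-\De^F_1$ emerges. The paper's approximation argument buys a wholesale reuse of the regular--functional result; your argument, if completed, would produce the explicit renormalised kernel directly. For (ii), your justification is off: a $\T[1]$--product of $F$ with insertions of $Q$ lies in $\Ftloc[1]$, not in $\Floc$; locality is preserved because $\alpha_{d_{1,Q}}$ acts by self--contractions of the diagonal--supported $F^{(n)}$ with (renormalised) coinciding--point limits of $d_{1,Q}$, so (ii) is indeed immediate from (i), but not for the reason you give.

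There is a genuine error in your first argument for (iii): the pullback by $\qcmoller^{-1}=\Id+\De^R_1\circ\q$ does \emph{not} commute with $\delta/\delta\phi$ --- one has $\qmoller^{-1}(F)^{(1)}=(\qcmoller^{-1})^\dagger\,\qmoller^{-1}\bigl(F^{(1)}\bigr)\neq\qmoller^{-1}\bigl(F^{(1)}\bigr)$, cf.\ \eqref{eq:beta_local_proof4} --- and $\bqmoller$ is not $\phi$--independent either, cf.\ \eqref{eq:beta_local_proof3}, where the extra terms $-\bqmoller(\q)\star_1\bqmoller(F)+\bqmoller(\q\T[1]F)$ appear. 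The $\phi$--independence of $T_1$ does not transfer to either factor of $\bq$ separately; the content of the paper's direct proof of (iii) is precisely that the two failures cancel in the composition, which is established by an induction over the perturbative order. Your second, ``equivalent'' argument --- that the kernel $d_{1,Q}$ is $\phi$--independent, so $\alpha_{d_{1,Q}}$ commutes with functional differentiation --- is correct and coincides with the paper's primary justification of (iii) as a consequence of (i), so the conclusion survives, but only through that route (and hence only once (i) has actually been proven).
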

\begin{proof}Proof of (i). First of all, we recall that $\bq= \qmoller^{-1}\circ\bqmoller$ is well--defined on $\Ftloc[1]$ as argued in the preceding paragraph, and that $\bq=\alpha_{d_{1,Q}}$ on $\Freg$.

We discuss the remainder of the statement for the special case of a quadratic local functional without derivatives, the general case follows by analogous arguments. To this avail, we note that a generic quadratic local functional $F$ without derivatives may be written as  
\beq\label{eq:beta_local_proof}
\int h(x) \phi^2(x)   d\mu_{g_1}(x) = \sum_{n=1}^\infty  \left( F_{f_n}\T[1] F_{g_n} - \hbar \,\Delta_1^{F}(f_n,g_n) \right)
\eeq
whenever the series of $\sum_n \left(f_n\otimes  g_n+g_n\otimes  f_n\right)/2$ converges to $h(x)\delta(x,y)$ in the H\"ormander topology, see e.g. \cite{BF00,FredenhagenRejzner2}. The application of $\bq=\alpha_{d_{1,Q}}$ with $d_{1,Q}=\De^F_2-\De^F_1$ to each summand in \eqref{eq:beta_local_proof} is well--defined if we use Proposition \ref{pr: delta intertwined} and our assumptions on $\De^+_2$ in order to consider $\De^F_2$ as the exact expression (in the sense of integral kernels, cf. Remark \ref{rem:operatordistributionconventions})
\beq\label{eq:beta_local_proof2}
\De^F_2=\De^+_2 + i\De^A_2 = \qcmoller \circ \De^+_1 \circ \qcmoller^{\dagger} + i \De^A_1\circ \qcmoller^{\dagger}
\eeq
with $\qcmoller = (\Id + \De^R_1\circ \q )^{-1}$. However, this also holds if we expand $\qcmoller$ in the formal Neumann series \eqref{eq:Neumann series}. Taking the latter point of view, we can express all appearing advanced and retarded propagators $\De^{R/A}_1$ in terms of $\De^F_1$ and $\De^+_1$. If we now switch the order of the sums in \eqref{eq:beta_local_proof} and in the contraction map $\alpha_{d_{1,Q}}$, recalling that the latter is exact since \eqref{def:alpha} contains at most two summands on account of the fact that $F$ is quadratic, we encounter in the limit expressions in $\De^F_1$ which are a priori ill--defined distributions. Yet, these expressions are replaced by well--defined distributions in the construction of $\T[1]$ as a renormalised time--ordered product on $\Floc$ (see e.g. \cite{BF00, HW02,BDF,FredenhagenRejzner,FredenhagenRejzner2} and Remark \ref{rem:exactPPA}. We recall that this renormalisation of $\T[1]$ is necessary in order to make the quantum M\o ller map $\bqmoller$, and thus $\bq\doteq \qmoller^{-1}\circ\bqmoller$, well--defined on $\Floc$ in the first place. Note that no such regularisation is necessary in order to have $\qmoller^{-1}$ well--defined, i.e. $\qmoller^{-1}$ ``does not introduce new loops'' in $\bq\doteq \qmoller^{-1}\circ\bqmoller$. For definiteness, one may think of replacing $\De^F_1$ in the formal expansion of $\bq(F)=\alpha_{d_{1,Q}}(F)$ by a regularised version $\De^{F,\lambda}_{1}$ which depends meromorphically on $\lambda$ and equals $\De^F_1$ for $\lambda=0$, cf. \cite{Hollands:2010pr, Keller:2010xq,Duetsch:2013xca, GHP}. By doing so and choosing different $\lambda$ for each individual $\De^{F,\lambda}_{1}$, one obtains that, at each order in perturbation theory, $\bq(F)=\alpha_{d_{1,Q}}(F)$ is a sum of terms which are meromorphic in $(\lambda_1,\ldots,\lambda_N)$ for a suitable $N$. The renormalisation of $\T[1]$ then consists of removing the poles of this meromorphic expression and taking the limit of $\lambda_i\to0$ in a particular order (ibid.), thus demonstrating that the form of $\bq=\alpha_{d_{1,Q}}$ is preserved by the extension to general local functionals.

Proof of (ii). This statement follows immediately from (i).

Proof of (iii). This statement follows immediately from (i) or alternatively by using directly the definition of $\bq$ as follows. Using $\left(S^{-1}_{1,Q}\right)^{(1)}=-S^{-1}_{1,Q}\star_1 S^{(1)}_{1,Q}\star_1 S^{-1}_{1,Q}$ and $\phi$--independence of $\T[1]$, we may compute for an arbitrary $F\in\Ftloc[1]$
\beq\label{eq:beta_local_proof3}
\bqmoller(F)^{(1)}=- \bqmoller\left(\q\right)\star_1 \bqmoller(F) + \bqmoller\left(\q\T[1] F\right) + \bqmoller\left(F^{(1)}\right).
\eeq
On the other hand, Definition \ref{def:classicalmoller} and \eqref{pr:rq-inverse} imply for an arbitrary $F\in\Fmuc[1]$
\beq\label{eq:beta_local_proof4}
\qmoller^{-1}(F)^{(1)} = \qmoller^{-1}\left(F^{(1)}\right)+\De^A_1 \q\qmoller^{-1}\left(F^{(1)}\right).
\eeq
We now observe and recall the following facts: (a) $\bq$ is the identity on linear functionals by Proposition \ref{pr:beta-on-linear-fields}, (b) $Q$ is quadratic, (c) $\star_2$ is related to $\star_1$ via $F\star_2 G=\qmoller^{-1}\left(\qmoller(F)\star_1\qmoller(G)\right)$ for all $F,G\in\Fmuc[2]$, cf. Theorem \ref{th:classical homomorphism}, (d) the time--ordered product $\T[2]$ corresponding to $\star_2$ is uniquely defined by \eqref{def:t2} if one factor is a regular functional, (e) in this case $\bq$ intertwines $\T[2]$ and $\T[1]$ since, for arbitrary $F\in\Freg$ and $G\in\Ftloc[1]$ with $F\gtrsim G$,
\begin{align}
 \bq\left(\bq^{-1}(F)\T[1]\bq^{-1}(G)\right)& = \qmoller^{-1}\left(\molh{1}{Q}\left(\bq^{-1}(F)\T[1]\bq^{-1}(G)\right)\right)\notag\\
&= \qmoller^{-1}\left(\molh{1}{Q}\left(\bq^{-1}(F)\right)\star_1 \molh{1}{Q}\left(\bq^{-1}(G)\right)\right)\label{eq:beta_local_proof5}\\
&=\qmoller^{-1}\left(\qmoller(F)\star_1 \qmoller(G)\right)=F\star_2 G\notag\,.
\end{align}
Here we used Lemma \ref{pr:timeorderedprodof1Q}, Theorem \ref{th:classical homomorphism}, and $\molh{1}{Q}\circ \bq^{-1}=\qmoller$, which can be proved by applying $\qmoller^{-1}$ to both sides. Moreover, we used $g_2>g_1$, which implies that if $F\gtrsim G$ in the sense of $g_2$, then $F\gtrsim G$ also in the sense of $g_1$, because every Cauchy surface for $g_2$ is a Cauchy surface for $g_1$. Using now \eqref{eq:beta_local_proof3}, \eqref{eq:beta_local_proof4}, and (a)-(e), we obtain
$$
\bq(F)^{(1)}=\left(\Id+\De^A_1\circ\q\right)\left(\bq\left(F^{(1)}\right)-\De^A_2\q\bq(F)^{(1)}\right)\,.
$$
The statement then follows from the last identity by an induction over the order of perturbation theory.
\end{proof}

Motivated by the previous result and the observation that $\bq$ intertwines $\T[1]$ and $\T[2]$ on $\Freg$, we now construct a time--ordered map $T=T(g,M)$ which satisfies the PPA w.r.t. changes of $M$ as follows (we do not spell out the dependence of $T$ on $\star$ in this paragraph). We set $T_1 = T(g,0)$ where $g$ is arbitrary and where we assume that $T(g,0)$ satisfies all axioms reviewed in Section \ref{sec:interactingobservables} including the PPA for changes of the metric. Based on this, we define the time--ordered map $T_2=T(g,M)$ for arbitrary $M\in\mD_\bR(\mM)$ by setting $T_2\doteq \bq\circ T_1$, where $Q$ is given by \eqref{def:puremassQ}. We then first prove that this definition of $T(g,M)$ satisfies all axioms but the PPA w.r.t to changes of $M$. This last property is then seen to follow from a cocycle condition for $\bq$ which holds by our definition of $T(g,M)$. This construction of $T(g,M)$ is by its very nature perturbative in $M$. While this is sufficient in the context of the PPA, we shall argue that also outside of this context the given construction of $T(g,M)$ is exact in the sense of fixing the $M$--dependent renormalisation freedom of $T$ because for any given multilocal functional this freedom is a polynomial of finite order in $M$ whose coefficients are themselves determined by the renormalisation freedom of finitely many graphs in the theory with $M=0$.

\begin{proposition}\label{pr:t2unique}Let us assume the following.
\begin{itemize}
\item[(i)] Let $\cS_1$ be an arbitrary quadratic action of the form \eqref{def:quadraticactions} with $M_1=0$, $Q$ of the form \eqref{def:puremassQ} and $\cS_2=\cS_1 + Q$.
\item[(ii)] Let $\bqmoller$ be the quantum M\o ller map defined in \eqref{def:quantum moller operator}, let $\qmoller$ be the classical M\o ller map defined in \eqref{eq:definition r}, and set $\bq\doteq \qmoller^{-1}\circ \bqmoller$.
\item[(iii)] Let $\star_1$ be a $\star$--product corresponding to $\cS_1$ and let $\star_2$ be the $\star$--product induced by $\star_1$ via $\qmoller$ as $F\star_2 G\doteq \qmoller^{-1}\left(\qmoller(F)\star_1\qmoller(G)\right)$ for arbitrary $F,G\in\Fmuc[1]=\Fmuc[2]$.
\item[(iv)] Let $T_1 = T(g_1,0,j_1,\star_1)$ where $T(g,0,j,\star)$ satisfies all axioms reviewed in Section \ref{sec:interactingobservables} except for those pertaining to $M$.
\end{itemize}
Then $T_2$, defined as 
\beq\label{def:t2loc}
T_2\doteq \bq \circ T_1
\eeq
and considered as $T_2=T(g_2=g_1,M,j_2=j_1,\star_2)$ satisfies, in the perturbative sense, all axioms for time--ordered maps reviewed in Section \ref{sec:interactingobservables}, but the PPA for changes of $M$.
\end{proposition}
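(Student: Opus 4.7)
The plan is to define $T_2\doteq\bq\circ T_1$ and verify each of the seven remaining axioms by composing the corresponding axiom for $T_1$ with a structural property of $\bq$. The tools are: (a) $\bq=\alpha_{d_{1,Q}}$ on $\Ftloc[1]$ with $d_{1,Q}(x,y)=\De^F_2(x,y)-\De^F_1(x,y)$, understood as a formal power series in $Q$ (Proposition~\ref{pr:betalocal}(i)); (b) $\bq$ maps $\Floc$ into $\Floc$-valued formal power series in $Q$ (Proposition~\ref{pr:betalocal}(ii)); (c) $\bq$ is $\phi$-independent (Proposition~\ref{pr:betalocal}(iii)); (d) $\bq$ intertwines $\T[1]$ and $\T[2]$ on $\Freg$ (Theorem~\ref{th:beta-homomorphism}(ii)); (e) $\bq$ is the identity on constants and linear functionals (Proposition~\ref{pr:beta-on-linear-fields}); (f) $\alpha_{d_{1,Q}}$ does not enlarge supports, as one sees directly from the contraction formula \eqref{def:alpha}. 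Because $g_1=g_2$ in the present setting, the causal orderings $\gtrsim$ for the two theories coincide. With these facts, several axioms follow essentially immediately: the value on constants and linear functionals (from (e)); symmetry and the Leibniz rule \eqref{def:Leibniz} (from the linearity of $\bq$); $\phi$-independence \eqref{eq:phi_independence_of_T} (composing (c) with the same property of $T_1$); and local covariance, scaling and smooth/analytic dependence on the background, because $d_{1,Q}$ is constructed covariantly from $(g,M)$ and the chosen Hadamard state.

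\textbf{Causal factorisation and microlocal spectrum condition.} For supports $\supp F_1,\ldots,\supp F_k$ not in the past of $\supp F_{k+1},\ldots,\supp F_n$, I would chain four equalities,
\begin{align*}
T_2(F_1,\ldots,F_n)
&=\bq\bigl(T_1(F_1,\ldots,F_k)\star_1 T_1(F_{k+1},\ldots,F_n)\bigr)\\
&=\bq\bigl(T_1(F_1,\ldots,F_k)\T[1] T_1(F_{k+1},\ldots,F_n)\bigr)\\
&=\bq(T_1(F_1,\ldots,F_k))\,\T[2]\,\bq(T_1(F_{k+1},\ldots,F_n))\\
&=T_2(F_1,\ldots,F_k)\star_2 T_2(F_{k+1},\ldots,F_n),
\end{align*}
using in order: causal factorisation of $T_1$; agreement of $\star_1$ with $\T[1]$ on causally ordered arguments; the intertwining (d); and agreement of $\star_2$ with $\T[2]$ on causally ordered arguments (combining (f) with $g_1=g_2$). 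The microlocal spectrum condition for $T_2$ is proved inductively in the order in $Q$: the Neumann expansion inside $d_{1,Q}$ rewrites the coefficient of $Q^n$ in $T_2(F_1,\ldots,F_n)|_{\phi=0}$ as a finite sum of graphs with $T_1$-vertices and internal lines $\De^+_1$, $\De^{R/A}_1$, whose wave--front sets satisfy the $\mV^T_n$ bound by the microlocal spectrum condition for $T_1$ together with the standard Hadamard wave--front calculus.

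\textbf{Unitarity and the main obstacle.} For unitarity \eqref{eq:unitarity} I would first observe that $\bq(F)^*=\alpha_{\overline{d_{1,Q}}}(F^*)$, since any deformation $\alpha_w$ intertwines complex conjugation with $\alpha_{\bar w}$, and analogously that the antitime--ordered map $\bar T_2$ (defined from $T_2$ by the unitarity formula itself) equals $\alpha_{\overline{d_{1,Q}}}\circ\bar T_1$, because $\overline{d_{1,Q}}=\overline{\De^F_2}-\overline{\De^F_1}$ is precisely the difference of the Dyson propagators governing $\overline{\T[i]}$ and the entire causal--factorisation argument applies verbatim with $\De^F_i$ replaced by $\overline{\De^F_i}$. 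Combining these two identities with the unitarity of $T_1$, $T_1(\ldots)^*=\bar T_1(\ldots^*)$, yields unitarity of $T_2$. The main obstacle throughout is the third equality in the causal--factorisation display: the intertwining of $\T[1]$ and $\T[2]$ by $\bq$ is established in Theorem~\ref{th:beta-homomorphism}(ii) only on $\Freg$, and it must be extended to outputs of $T_1$. I would do this by approximating the relevant elements of $T_1(\Floc^{\otimes n})$ by regular functionals in the H\"ormander topology in the spirit of \eqref{eq:beta_local_proof}, and by invoking Proposition~\ref{pr:betalocal}(i) to argue that the renormalization of $\T[1]$ implicit in $\bq$ is precisely what is needed for the intertwining to persist in the limit; the same approximation argument will be required to turn the formal identity $\bq=\alpha_{d_{1,Q}}$ into the support--preservation property (f) at the level of multilocal outputs of $T_1$.
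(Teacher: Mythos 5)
Your overall strategy --- define $T_2\doteq\bq\circ T_1$ and transfer each axiom from $T_1$ through a structural property of $\bq$ --- is the paper's strategy, and most of your individual verifications (symmetry, $\phi$--independence, the Leibniz rule, the graph expansion for the microlocal spectrum condition) match the paper's. The genuine gap is exactly at the step you flag as the main obstacle: you need $\bq\bigl(A\T[1]B\bigr)=\bq(A)\T[2]\bq(B)$ for $A=T_1(F_1,\ldots,F_k)$, $B=T_1(F_{k+1},\ldots,F_n)$ causally ordered, but you only have the intertwining on $\Freg$ (Theorem \ref{th:beta-homomorphism}(ii)), and your proposed fix --- approximate in the H\"ormander topology and assert that ``the renormalisation of $\T[1]$ implicit in $\bq$ is precisely what is needed for the intertwining to persist'' --- is not an argument; it is the statement to be proved, and controlling the interchange of the H\"ormander limit with the contraction exponential $\alpha_{d_{1,Q}}$ and with the extension of distributions to the diagonal is nontrivial. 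The paper closes this step without any limiting procedure: it runs the computation \eqref{eq:beta_local_proof5} with both factors in $\Ftloc[1]$, i.e. writes $\bq(A\T[1]B)=\qmoller^{-1}\bigl(\bqmoller(A\T[1]B)\bigr)$, applies Lemma \ref{pr:timeorderedprodof1Q} (causal factorisation of the quantum M\o ller map, valid on all of $\Ftloc[1]$, not just $\Freg$), and then uses $\bqmoller=\qmoller\circ\bq$ and Theorem \ref{th:classical homomorphism}(iv) to land on $\bq(A)\star_2\bq(B)$. In other words, for causal factorisation one should work from the definition $\bq=\qmoller^{-1}\circ\bqmoller$ rather than from the deformation formula; the latter is only needed to know that $\bq$ preserves $\Floc$, supports and $\phi$--independence.

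Two smaller points. First, your one--line treatment of local covariance and scaling is insufficient: $d_{1,Q}=\De^F_2-\De^F_1$ is \emph{not} a local covariant object, since it depends on the chosen Hadamard two--point function; the paper conjugates by $\alpha_{-W_i}$ (with $W_i$ the state--dependent smooth parts of the Hadamard expansion) so that $\bq$ is replaced by $\alpha_{H^F_2-H^F_1}$ built from the purely geometric Hadamard parametrices, and only then does covariance of $T_2$ follow from that of $T_1$. Second, you do not address the axiom that $T_2$ must still satisfy the PPA with respect to changes of the \emph{metric} (the proposition exempts only the PPA for changes of $M$); the paper handles this by an argument analogous to the cocycle Lemma \ref{pr:cocylebeta}. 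Your unitarity argument via the conjugate deformation $\alpha_{\overline{d_{1,Q}}}$ is a legitimate alternative to the paper's (which expands in orders of $\hbar$ and uses that the involution is unchanged because $\qmoller$ commutes with complex conjugation), but it again presupposes the deformation formula on non--regular functionals and so inherits the gap above.
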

\begin{proof}
We first note that $T_2$ is well--defined because $\bq$ maps $\Floc$ to itself by Proposition \ref{pr:betalocal} and because $\molh{1}{Q}$, and thus also $\bq$, are well--defined on $\Ftloc[1]$. Consequently, $\bq^{-1}$ is a well--defined map from $\Ftloc[2]$ to $\Ftloc[1]$. In order to demonstrate that $T_2$ is a time--ordered map for $\star_2$, we note that $T_2$ is symmetric because $T_1$ has this property. A computation analogous to \eqref{eq:beta_local_proof5} implies the causal factorisation property of $T_2$ w.r.t. $\star_2$.

 It is not difficult to check that $T_2$ satisfies the other axioms of  time--ordered maps reviewed in Section \ref{sec:interactingobservables}. In particular, $\phi$--independence of $T_2$ follows from the same property of $T_1$ and $\phi$--independence of $\bq$, cf. Proposition \ref{pr:betalocal} (iii), whereas the Leibniz rule for $T_2$ follows from the same property of $T_1$ and the fact that, if $F$ is of the form $F(\phi)=\int_\mM d B(\phi)$ for a three--form $B(\phi)$, then $\bq(F)$ and $\bq^{-1}(F)$ are of the same form, since $\bq$ and its inverse are given by a contraction exponential.

Moreover, unitarity of $T_2$ ensues from unitarity of $T_1$ as follows. Expanding the right hand side of $T_2(F_1,\ldots,F_n)=[\bq\circ T_1](F_1,\ldots,F_n)$ for arbitrary $n$ and arbitrary $F_1,\ldots,F_n\in\Floc$ perturbatively, one obtains an expression which, at each order in $\hbar$, equals a $\hbar$--order of $\star_1$--products of $\T[1]$--products of $T_1(F_i)$, $i=1,\ldots,n$ and $Q$. This observation, together with the fact that a) $T_1$ satisfies unitarity, b) unitarity holds if and only if it holds at each order in $\hbar$ and c) the involution on $\mA_1$ and $\mA_2$ is the same because $\qmoller$ commutes with complex conjugation, imply unitarity of $T_2$.

In order to check the microlocal spectrum condition, i.e. that $T_2(F_1,\ldots,F_n)|_{\phi=0}$, viewed as a distribution evaluated on the test--sections present in $F_i,i=1,\ldots,n$, has the wave front set reviewed in Axiom 4 of Section \ref{sec:interactingobservables}, we may expand this distribution perturbatively in terms of Feynman graphs $\Gamma$ with $\Delta_1^F$ and $\Delta^\pm_1$ propagators. The distributions $u_\Gamma$ corresponding to the integral kernel of each of these graphs have a wave front set of the wanted form, cf. \cite{BFK,BF00}. In order to obtain the distribution corresponding to $T_2(F_1,\ldots,F_n)|_{\phi=0}$, we have to integrate the ``inner vertices'' of the $u_\Gamma$ against the smooth and compactly supported function $M$. The wave front set of the resulting distribution may be seen to be of the correct type by an application of \cite[Theorem 8.2.13]{Hormander}.

The continuous respectively analytic dependence of $T_2$ on $g_2=g_1$ follows directly from the corresponding property of $T_1$ by again expanding $T_2$ in terms of ``1"--quantities and observing that each term in this expansion has this property. The fact that $T_2$ depends in a local and covariant fashion on the background fields $(g_2=g_1,M,j_2=j_1)$ may be seen as resulting from the local and covariant dependence of $T_1$ on the background fields $(g_1,M=0,j_1)$ as follows. In an arbitrary but fixed geodesically convex neighbourhood of $(\mM,g_2=g_1)$ we define $W_i$ to be the smooth parts in the local Hadamard expansion \eqref{def:Hamardplus} of $\De^F_i$ (up to replacing $\sigma_i^{\epsilon+}$ by $\sigma_i+i\epsilon$, and considering the same mass scale $\lambda$ in the logarithmic term for definiteness). We then define $\widetilde T_i\doteq \alpha_{-W_i}\circ T_i$ and $\wbq\doteq \alpha_{-W_2}\circ \bq \circ \alpha_{W_1}$ with $\alpha$ being defined as in \eqref{def:alpha}. By construction we have $\widetilde T_2 = \wbq \circ \widetilde T_1$ and $\wbq = \alpha_{H^F_2-H^F_1}$, where $H^F_i \doteq \Delta^F_i-W_i$ is the ``geometric part'' of $\Delta^F_i$ and this form of $\bq$ holds up to renormalisation of $\widetilde T_1$ in the sense of Proposition \ref{pr:betalocal}. However, as this renormalisation is done in a local and covariant way by our assumptions on $T_1$, we see that $\wbq$ preserves the local and covariant dependence of $\widetilde T_1$ on $g_1$ and $j_1$ and that the dependence of $\widetilde T_2 = \wbq \circ \widetilde T_1$ on $M$ is local and covariant as well. This also implies that $\widetilde T_2$ (and thus $T_2$) has the correct scaling behaviour w.r.t. to constant rescalings of the background fields and the correct analytic dependence on $M$.

Finally, the fact that $T_2$ satisfies the PPA w.r.t. to changes of the metric follows from the corresponding property of $T_1$ by an argument similar to the one used in the proof of the following Lemma \ref{pr:cocylebeta}.
\end{proof}

In order to prove that the time--ordered map $T(g,M,j,\star)$ defined as in the previous proposition also satisfies the PPA with respect to changes of $M$, we need the following cocycle condition for $\bq$.

\begin{lemma}\label{pr:cocylebeta}
Let $M_i, i=2,3$ be arbitrary elements of $\mD_\bR(\mM)$, set $Q_i\doteq \frac12 \int_\mM M_i \phi^2 d\mu_{g_1}, i=2,3$, $\delta Q\doteq Q_3-Q_2$ and define $T_i = T(g_1,M_i,j_1,\star_i)\doteq \beta_{1,Q_i}\circ T_1$ as in Proposition \ref{pr:t2unique}. Then
\beq\label{eq:cocyle}\beta_{1,Q_3} = \beta_{2,\delta Q}\circ \beta_{1,Q_2}\,.\eeq
\end{lemma}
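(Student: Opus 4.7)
The plan is to exploit Proposition \ref{pr:betalocal}(i), which identifies each of the three maps appearing in \eqref{eq:cocyle} as a contraction exponential of the form \eqref{def:alpha} whose generating bi--distribution is the difference of two Feynman propagators. Once this representation is in place, the cocycle reduces to the elementary additivity $\alpha_w\circ\alpha_{w'}=\alpha_{w+w'}$ together with a telescoping of three Feynman propagators.

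First I would apply Proposition \ref{pr:betalocal}(i) to the perturbations $Q_2$ and $Q_3$ separately to conclude $\beta_{1,Q_i}=\alpha_{d_{1,Q_i}}$ with $d_{1,Q_i}=\De^F_i-\De^F_1$ for $i=2,3$, where $\De^F_i$ denotes the Feynman propagator of $\cS_1+Q_i$ constructed from $\De^+_1$ through the classical M\o ller map (cf.\ Remark \ref{rem:operatordistributionconventions}). Next I would apply Proposition \ref{pr:betalocal}(i) once more, this time with $\cS_2$ in the role of the base theory and $\delta Q$ as the perturbation. This is legitimate because Proposition \ref{pr:t2unique} guarantees that $T_2=\beta_{1,Q_2}\circ T_1$ satisfies every axiom required of a time--ordered map; the outcome is $\beta_{2,\delta Q}=\alpha_{d_{2,\delta Q}}$ with $d_{2,\delta Q}=\widetilde{\De}^F_3-\De^F_2$, where $\widetilde{\De}^F_3$ is the Feynman propagator of $\cS_2+\delta Q=\cS_3$ obtained from $\De^+_2$ via $R_{2,\delta Q}$. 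To identify this with $d_{1,Q_3}$ I need the compatibility $\widetilde{\De}^F_3=\De^F_3$, which in turn follows from the classical cocycle $R_{1,Q_3}=R_{1,Q_2}\circ R_{2,\delta Q}$ on configurations: the composition $R_{1,Q_2}\circ R_{2,\delta Q}$ evidently satisfies $\cS_3^{(1,1)}\circ(R_{1,Q_2}\circ R_{2,\delta Q})=\cS_1^{(1,1)}$ and coincides with the identity sufficiently far in the past, so it must equal $R_{1,Q_3}$ by the uniqueness part of Proposition \ref{pr:rq-inverse}. With this identification in hand the three bi--distributions telescope,
\[
d_{2,\delta Q}+d_{1,Q_2}=(\De^F_3-\De^F_2)+(\De^F_2-\De^F_1)=\De^F_3-\De^F_1=d_{1,Q_3}.
\]

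Finally I would invoke the identity $\alpha_w\circ\alpha_{w'}=\alpha_{w+w'}$, which is an immediate consequence of \eqref{def:alpha} since the bi--contraction operators $\Gamma_w$ and $\Gamma_{w'}$ appearing in the exponent commute on $\Fmuc$; this delivers $\beta_{2,\delta Q}\circ\beta_{1,Q_2}=\alpha_{d_{1,Q_3}}=\beta_{1,Q_3}$, as desired. The main point requiring care is that on $\Floc$ (as opposed to $\Freg$) the contraction exponentials involve formal expressions in the Feynman propagators that are a priori ill--defined and acquire meaning only through the renormalisations implicit in $T_1$ and $T_2$. Since $T_2$ is \emph{by construction} $\beta_{1,Q_2}\circ T_1$, however, this renormalisation is automatically coherent under the composition $\beta_{2,\delta Q}\circ\beta_{1,Q_2}$, and the $\alpha$--picture of Proposition \ref{pr:betalocal} absorbs the subtleties transparently. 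This coherence is the only delicate step, and it has essentially already been settled in the previous section.
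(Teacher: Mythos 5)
Your proposal follows essentially the same route as the first half of the paper's own proof: the paper likewise observes that, by Proposition \ref{pr:betalocal}, all three maps are contraction exponentials in differences of Feynman propagators, so that the cocycle reduces to a telescoping identity, with the only delicate point being that the renormalisations implicit in the three $\alpha$'s match up. Two remarks. First, a small but genuine slip: the classical cocycle on configurations is $R_{1,Q_3}=R_{2,\delta Q}\circ R_{1,Q_2}$, not $R_{1,Q_2}\circ R_{2,\delta Q}$ as you wrote. Indeed $\cS_3^{(1,1)}\circ R_{2,\delta Q}=\cS_2^{(1,1)}$ and $\cS_2^{(1,1)}\circ R_{1,Q_2}=\cS_1^{(1,1)}$, so it is the composition $R_{2,\delta Q}\circ R_{1,Q_2}$ that intertwines $\cS_3^{(1,1)}$ with $\cS_1^{(1,1)}$; your order does not ``evidently'' satisfy this, and the two M\o ller operators do not commute in general. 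On functionals (pullbacks) the order reverses, giving $\mR_{1,Q_3}=\mR_{1,Q_2}\circ\mR_{2,\delta Q}$, which is the identity the paper records. The conclusion $\widetilde{\De}^F_3=\De^F_3$ (and hence the telescoping $d_{2,\delta Q}+d_{1,Q_2}=d_{1,Q_3}$) still goes through once the order is corrected.

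Second, the renormalisation--coherence step, which you flag as the only delicate point but then assert rather than argue, is exactly where the paper invests its effort. The paper argues that because $T_2$ and $T_3$ are \emph{defined} by expanding in renormalised ``1''--quantities, both sides of \eqref{eq:cocyle} applied to $F\in\Ftloc[1]$ expand into the same renormalised Feynman graphs in $\Delta^F_1$, $\Delta^\pm_1$, the combinatorics being controlled by the exponential form of the $\beta$'s; and, because this argument is somewhat informal, it backs it up with a direct computation using $S$--matrix identities ($S_{1,V+W}=S_{1,V}\T[1]S_{1,W}$, $\beta_{1,Q_2}(S_{1,V})=S_{2,\beta_{1,Q_2}(V)}$, the classical cocycle, and the fact that $\beta_{1,Q_2}^{-1}(\delta Q)-\delta Q$ is a constant functional since $\delta Q$ is quadratic and $\beta_{1,Q_2}$ is $\phi$--independent). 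Your argument would be strengthened either by spelling out the graph--combinatorial matching or by supplying such a direct computation; as it stands, the sentence ``the $\alpha$--picture absorbs the subtleties transparently'' is precisely the claim to be proved on $\Floc$, where the individual terms of $\alpha_{d}$ are a priori ill--defined.
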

\begin{proof}By Proposition \ref{pr:betalocal}, we know that this identity is satisfied up to renormalisation of the time--ordered product. However, by our definition of $T_i, i=2,3$, their renormalisation is uniquely fixed by expanding them in terms of renormalised ``1''--quantities. In other words, if we apply the left hand side of \eqref{eq:cocyle} to an arbitrary $F\in\Ftloc[1]$, we may expand the result in terms of Feynman graphs with $\Delta^F_1$ and $\Delta^\pm_1$ propagators. If we instead apply the right hand side of  \eqref{eq:cocyle} to the same $F$, we may first expand the result in terms of $\Delta^F_1$ and $\Delta^\pm_1$ and $\Delta^F_2$ and $\Delta^\pm_2$, and subsequently expand $\Delta^F_2$ and $\Delta^\pm_2$ in terms of $\Delta^F_1$ and $\Delta^\pm_1$ themselves. By the exponential form of all appearing $\beta$, we know that the combinatorics of this (iterated) expansion is such that we obtain the the same (renormalised) Feynman graphs in terms of $\Delta^F_1$ and $\Delta^\pm_1$ for both sides of \eqref{eq:cocyle}.
 
The statement can also be proven by a direct computation which encodes the above line of argument. To this end, we record a few basic identities.
$$ S_{1,V+W}=S_{1,V}\T[1]S_{1,W}$$
$$ \beta_{1,Q_2}\left(S_{1,V}\right)=S_{{2},\beta_{1,Q_2}(V)}\qquad \Leftrightarrow\qquad  \beta^{-1}_{1,Q_2}\left(S_{2,V}\right)=S_{{1},\beta^{-1}_{1,Q_2}(V)} $$
$$\mR_{1,Q_2}\left(S^{-1}_{2,V}\right) =\left(\mR_{1,Q_2}\left(S_{2,V}\right)\right)^{-1}= \left(\mR^\hbar_{1,Q_2}\left(S_{1,\beta_{1,Q_2}^{-1}(V)}\right)\right)^{-1} $$
$$\mR_{1,Q_3}=\mR_{1,Q_2}\circ\mR_{2,\delta Q}\qquad\Leftrightarrow\qquad \mR_{2,\delta Q}^{-1}=\mR_{1,Q_3}^{-1}\circ  \mR_{1,Q_2}$$
The first one holds for arbitrary $V$, $W\in\Floc$ because $S_{1,V}$ is an exponential w.r.t. a symmetric product, the second follows for arbitrary $V\in\Floc$ from $T_2=\beta_{1,Q_2}\circ T_1$ and the third one holds again for arbitrary $V\in\Floc$ because of the second and $\mR_{1,Q_2}(1)= 1$. Finally, the last identity follows from the defining properties of the classical M\o ller map. We omit the meaning of the $\star$-inverse of $S$--matrices, the canonical inverse is implied. Using these identities, we find for an arbitrary $F\in\Ftloc[1]$ with $G\doteq \beta_{1,Q_2}(F)$
\begin{align*}
&\left[\beta_{2,\delta Q}\circ \beta_{1,Q_2}\right](F)\\
=&\left[\mR_{2,\delta Q}^{-1}\circ \mR_{2,\delta Q}^{\hbar}\right](G)=\left[\mR_{1,Q_3}^{-1}\circ \mR_{1,Q_2}\right]\left(S^{-1}_{2,\delta Q}\star_2 \left(S_{2,\delta Q}\T[2] G\right)\right)\\
=&\mR_{1,Q_3}^{-1}\left(\mR_{1,Q_2}\left(S^{-1}_{2,\delta Q}\right)\star_1 \mR_{1,Q_2}\left(S_{2,\delta Q}\T[2] G\right)\right)\\
=&\mR_{1,Q_3}^{-1}\left(\left(\mR^\hbar_{1,Q_2}\left(S_{1,\beta_{1,Q_2}^{-1}(\delta Q)}\right)\right)^{-1}\star_1 \mR^\hbar_{1,Q_2}\left(S_{1,\beta_{1,Q_2}^{-1}(\delta Q)}\T[1] F\right)\right)\\
=&\mR_{1,Q_3}^{-1}\left(\left(S_{1,Q_2}\T[1]S_{1,\beta_{1,Q_2}^{-1}(\delta Q)}\right)^{-1}\star_1 S_{1,Q_2}\star_1 S_{1,Q_2}^{-1}\star_1 \left(S_{1,Q_2}\T[1]S_{1,\beta_{1,Q_2}^{-1}(\delta Q)}\T[1] F\right)\right)\\
=&\left[\mR_{1,Q_3}^{-1}\circ \mR^\hbar_{1,Q_2+\beta_{1,Q_2}^{-1}(\delta Q)}\right](F)=\beta_{1,Q_3}(F)\,.
\end{align*}
In the last step, we used that $\mR^\hbar_{1,V+C}=\mR^\hbar_{1,V}$ for any $V\in\Floc$ and any constant functional $C$ and that $\beta_{1,Q_2}^{-1}(\delta Q)-\delta Q$ is a constant functional, because $\delta Q$ is quadratic and $\beta_{1,Q_2}$ is $\phi$--independent.
\end{proof}

We may now combine Proposition \ref{pr:t2unique} and Lemma \ref{pr:cocylebeta} in order to obtain the wanted result.

\begin{theorem}\label{th:PPA} Under the assumptions and using the notation of Proposition \ref{pr:t2unique} and Lemma \ref{pr:cocylebeta}, let $T_1=T(g_1,0,j_1,\star_1)$ be an arbitrary but fixed time--ordered map which satisfies all axioms reviewed in Section \ref{sec:interactingobservables} except for those pertaining to $M$, and let $Q_i,i=2,3$ be of the form \eqref{def:puremassQ} for arbitrary $M_i\in\mD_\bR(\mM),i=2,3$. Then $T_2$, defined as $T_2\doteq \bq \circ T_1$ 
and considered as $T_2=T(g_2=g_1,M_2,j_2=j_1,\star_2)$ satisfies, in the perturbative sense, all axioms for time--ordered maps reviewed in Section \ref{sec:interactingobservables}. In particular, $T_3 \doteq \beta_{1,Q_3}\circ T_1$ satisfies
$$T_3 = \beta_{2,\delta Q}\circ T_2\,.$$
\end{theorem}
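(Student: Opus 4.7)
The proof will be a short assembly of results already in hand. The statement that $T_2 \doteq \beta_{1,Q_2} \circ T_1$, viewed as $T(g_1,M_2,j_1,\star_2)$, satisfies every axiom reviewed in Section \ref{sec:interactingobservables} with the possible exception of the PPA for further changes of $M$ is precisely the content of Proposition \ref{pr:t2unique}. What Theorem \ref{th:PPA} adds is therefore exactly the missing axiom, namely that $T_2$ also obeys the PPA when $M_2$ is further deformed to $M_3$; the concrete form of this requirement, in our setting, is the displayed identity $T_3 = \beta_{2,\delta Q}\circ T_2$ appearing in the ``in particular'' clause.

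To establish this identity I would simply invoke the cocycle relation $\beta_{1,Q_3} = \beta_{2,\delta Q}\circ \beta_{1,Q_2}$ proved in Lemma \ref{pr:cocylebeta} and apply both sides to $T_1$:
\[
T_3 \;=\; \beta_{1,Q_3}\circ T_1 \;=\; \bigl(\beta_{2,\delta Q}\circ \beta_{1,Q_2}\bigr)\circ T_1 \;=\; \beta_{2,\delta Q}\circ T_2.
\]
Here I would also remark briefly that the right-hand side is well-posed because, by construction in Proposition \ref{pr:t2unique}, the product $\star_3$ is induced from $\star_2$ via the classical M\o ller map associated with the perturbation $\delta Q$ on the base theory $\cS_2$ in exactly the same way that $\star_2$ was induced from $\star_1$ via $\mR_{1,Q_2}$; so $\beta_{2,\delta Q}$, which is defined purely in terms of the ``2''-theory data, lands in the algebra corresponding to $T_3$, and the composition on the right makes unambiguous sense as a map $\Ftloc[2]\to\Ftloc[3]$.

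The genuine difficulty lies not in Theorem \ref{th:PPA} itself but in its two inputs: Proposition \ref{pr:t2unique} required a term-by-term verification that each axiom for $T_2$ reduces to the corresponding property of $T_1$ under the perturbative expansion in $Q_2$, and Lemma \ref{pr:cocylebeta} required matching two a~priori different Feynman graph expansions of $\beta_{1,Q_3}$ and $\beta_{2,\delta Q}\circ \beta_{1,Q_2}$ once both are rewritten in terms of $\Delta^F_1$ and $\Delta^\pm_1$, with compatible renormalisations enforced by the very definitions of $T_2$ and $T_3$. Once these have been secured, Theorem \ref{th:PPA} poses no further obstacle and is obtained by the substitution displayed above.
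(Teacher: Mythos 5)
Your proposal is correct and follows exactly the route the paper takes: the paper gives no separate proof of Theorem \ref{th:PPA} beyond the remark that it is obtained by combining Proposition \ref{pr:t2unique} (all axioms except the PPA for changes of $M$) with Lemma \ref{pr:cocylebeta} (the cocycle identity), and composing the latter with $T_1$ yields $T_3=\beta_{2,\delta Q}\circ T_2$ precisely as in your display. Your closing observation that the substantive work lives in the two inputs rather than in the theorem itself matches the paper's presentation.
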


\begin{remark}\label{rem:exactPPA}
Our construction of a time--ordered map satisfying the PPA for quadratic $Q$ without derivatives is defined directly in terms of perturbative quantities which seems unsatisfactory at first glance. However, as anticipated, we shall now argue that the construction is exact regarding the determination of the renormalisation freedom of $T_M=T(g,M,j,\star)$ in the sense that for arbitrary but fixed $F_1,\ldots,F_n\in\Floc$, the $M$--dependent renormalisation freedom of $T_M(F_1,\ldots,F_n)$ is a polynomial in $M$ (and its derivatives) of finite order whose coefficients are fixed by fixing the renormalisation freedom of finitely many expressions in the theory with $M=0$. We shall illustrate this at the example of a quadratic local functional.

We start by analysing in more detail the perturbative structure of $\bq=\alpha_{d_{1,Q}}$, $d_{1,Q}(x,y)=\Delta^F_2(x,y)-\Delta^F_1(x,y)$, cf. Proposition \ref{pr:betalocal}. To this avail, we combine \eqref{eq:beta_local_proof2} and \eqref{eq:Neumann series} in order to realise that, at $n$--th order in perturbation theory with $n>0$, $d_{1,Q}$ equals $d_{1,Q,n}$ given by  
\begin{align}
d_{1,Q,n}&=
\sum_{p=0}^{n}
r^{p}\Delta^+_1 (r^\dagger)^{n-p}+
i\Delta_1^A(r^\dagger)^{n} \notag\\
&= (-1)^n\left(\sum_{p=0}^{n}
\left(\Delta_1^R \q\right)^{p} \Delta_1^+\left(\q\Delta_1^A\right)^{n-p}+
i\Delta_1^A\left(\q\Delta^A_1\right)^{n}\right)\label{eq:d-expansion}\\
&= i^n\left(\sum_{p=1}^{n}
\left(\left(\Delta_1^F-\De^-_1\right) \q\right)^{p} \Delta_1^+\left(\q\left(\Delta_1^F-\De^+_1\right)\right)^{n-p}+
\Delta_1^F\left(\q\left(\Delta_1^F-\De^+_1\right)\right)^{n}\right)\notag
\end{align}
where the appearing products and exponents indicate iterated compositions and not pointwise products of distribution kernels and where we recall that $\q$ induces a formally selfadjoint linear map on $\mE(\mM)$ which we denote by the same symbol. At zeroth order in perturbation theory $d_{1,Q,0}=0$.

We would now like to analyse the regularity of this expression. While this may be deducted from structural results in perturbative QFT on curved spacetimes, see e.g. \cite{BF00, HW02,BDF,FredenhagenRejzner,FredenhagenRejzner2}, it is instructive to derive it in detail. To this end, we make a preliminary observation. Consider two distributions $A,B\in\mD_\bC^{\prime}(\mM^2)$ whose wave front set is contained in $\WF(\De^{+/-/F}_1)$. Whenever their composition is done on a set of compact support, which is the case at hand because $\supp(Q)$ is compact, we can use \cite[Theorem 8.2.14]{Hormander} and the form of $\WF(\De^{+/-/F}_1)$ in order to realise that their composition is well--defined and has the following wave front set.
\begin{align*}
\WF(A\circ B) \subset \WF(\Delta_1^\pm)\,, &\qquad    \text{if } \WF(A),\WF(B) \subset \WF(\Delta_1^{\pm}), 
\\
\WF(A\circ B) \subset \WF(\Delta_1^F)\,, &\qquad    \text{if } \WF(A),\WF(B) \subset \WF(\Delta_1^{F}), 
\\
\WF(A\circ B) \subset \WF(\Delta_1^\pm)\,, &\qquad    \text{if } \WF(A) \subset \WF(\Delta_1^{F}), \WF(B) \subset \WF(\Delta_1^{\pm})  
\\
\WF(A\circ B) = \emptyset  \,,&\qquad \text{if } \WF(A) \subset \WF(\Delta_1^{-}), \WF(B) \subset \WF(\Delta_1^{+}) 
\end{align*}
With this in mind we observe that all contributions containing $\Delta_1^-$ in \eqref{eq:d-expansion} are smooth because they are of the form $\Delta_1^-\circ A$ with $\WF(A)\subset\WF(\De^+_1)$. Moreover, by an induction over $n$, one may show that the remaining terms in \eqref{eq:d-expansion} containing both $\De_1^+$ and $\De^F_1$ cancel exactly. Consequently we find $d_{1,Q,n}= i^n \De_1^F(\q\De^F_1)^{n} + A$, where $A$ has a smooth integral kernel. When applying $\bq = \alpha_{d_{1,Q}}$ to a non-linear local functionals, we encounter products of coinciding point limits of derivatives of $d_{1,Q}$, and consequently, products of coinciding point limits of derivatives of the integral kernel of the linear map $\De_1^F(\q\De^F_1)^{n}$ for arbitrary $n\in\bN$. In the language of Feynman diagrams, these expressions correspond just to ``big loops'' made of $n+1$ vertices joined by $n+1$ propagators $\Delta_1^F$ and with the operator $\q$ being applied in $n$ of this vertices, whereas the $n+1$--th vertex is $x$. These diagrams are renormalised by standard techniques while extending $\T[1]$ to local functionals, see e.g. \cite{BF00, HW02,BDF,FredenhagenRejzner,FredenhagenRejzner2}. However, in general not all $\De_1^F(\q\De^F_1)^{n}$, which are clearly monomials of $n$--th order in $Q$, need to be renormalised.

In order to analyse the interplay between the renormalisation freedom of $T_2\doteq \bq\circ T_1$ on local functionals and the renormalisation freedom of $T_1$ on multilocal functionals we consider the functional $\phi^2$, omitting the smearing for simplicity. The most general  definition of $T_1(\phi^2)\in\mA_1$ compatible with all axioms is \cite{HW01,HW05}
$$T_1(\phi^2(x)) = \phi^2(x) + W_1(x,x) + a R(x)$$
where $W_1(x,y)$ is the smooth part of $\Delta^+_1$ in the local Hadamard expansion \eqref{def:Hamardplus}, $R$ is the scalar curvature and $a$ is a dimensionless constant. In view of our construction in Theorem \ref{th:PPA} we consider $M_1=0$ and thus no corresponding term appears in $T_1(\phi^2)$. If we evaluate $T_2(\phi^2)$ based on $\bq = \alpha_{d_{1,Q}}$ with $d_{1,Q}(x,y)=\Delta^F_2(x,y)-\Delta^F_1(x,y)$ considered as an exact expression, we find
$$T_2(\phi^2(x)) = \phi^2(x) + W_2(x,x) + a R(x) + b M_2(x)$$
where the dimensionless constant $b$ corresponds to ``$\lim_{x\to y}\log(\lambda^2\sigma(x,y))$''. This expression is obviously divergent, but is regularised implicitly by expanding $d_{1,Q}$ perturbatively as explained above. In particular the value of $b$ is in one--to--one correspondence with the renormalisation freedom of the ``fish--graph'' $\Delta^F_1(x,y)^2$, which is in fact the only divergent graph contributing to $\bq(\phi^2)$.
\end{remark}

\begin{remark}\label{rem:PPAmetricchanges}
In principle one could try to use the same proof strategy as the one used in Theorem \ref{th:PPA} in order to prove the PPA for changes of the metric as well. An natural possibility, at least on topologically trivial subsets of $M$, would be to consider the Minkowski metric as a reference metric $g_0$ and to ``add'' the dependence of $T$ on non--trivial metrics in the same manner as in Theorem \ref{th:PPA}. The time--ordered product constructed in this way may be seen to satisfy all axioms for time--ordered products including the PPA, but one: whereas it will be local and covariant with respect to the reference metric $g_0$ and the metric perturbation $\delta g =\sqrt{\det(g_1)/\det{(g_0)}}g_1-g_0$, it will in general not be local and covariant (in the perturbative sense) with respect to $g_1$\footnote{We would like to thank Jochen Zahn for pointing this out to us.}. In particular one expects that demanding locality and covariance with respect to $g_1$ imposes conditions on the reference time--ordered product which can not be satisfied in the cases where the PPA is known to fail, e.g. in two spacetime dimensions. In addition to this issue the perturbative nature of such a construction in the case of metric changes is more severe than in the case of quadratic $Q$ without derivatives, because the discussion in Remark \ref{rem:exactPPA} implies by ``power counting'' that for $Q$ containing second derivatives in principle infinitely many loop graphs in the theory with the Minkowski metric contribute to the renormalisation freedom in the theory with a general metric.
\end{remark}

\begin{remark}\label{rm:qmoller_is_isomorphism}
The PPA implies that $\qmoller$ is a $*$--isomorphism between the subalgebra $\A0_{2}$ of $\mA_{2}(\Fmuc[2],\star_{2})$ which is $\star_{2}$--generated by elements of $\Ftloc[1]$ and the well--defined algebra of interacting observables $\mA_{1,Q}$, cf. Definition \ref{def:interactingalgebra}. In particular $\qmoller$ maps an element of the form $F\star_{2} G$ with $F,G\in\Floc$ to $\bqmoller(\bq^{-1}(F))\star_1 \bqmoller(\bq^{-1}(G))$. Moreover $\qmoller$ descends to a $*$--isomorphism between the on--shell algebras $\A0_{2}/\mI^0_{2}$ and $\Aon_{1,Q}$, where $\mI^0_{2}\doteq \mI_{2}\cap \A0_{2}$, cf. Definitions \ref{def:free_algebra} and \ref{def:interactingalgebra}.
\end{remark}

\section{The generalised Principle of Perturbative Agreement}
\label{sec:gPPA}
In the previous section, we have discussed the Principle of Perturbative Agreement for the case of quadratic actions, i.e. we have seen how quadratic interaction potentials can be treated either in perturbation theory or in an exact fashion, and in which precise sense these two possibilities are related.  In this section we aim to show how this analysis can be generalised to the case where an additional interaction potential is present which is in general of higher--than--quadratic order in the field. In this case, the question to be answered is whether treating the quadratic part of a general polynomial interaction potential either perturbatively or exactly gives the same results in the algebraic sense. We have seen in the previous section that, in the purely quadratic case, the perturbative agreement does not hold in the strong sense of a $*$--isomorphism between the algebras $\widetilde{\mA}_{1,Q}$ and $\mA_2$ if one is interested in algebras containing physically interesting observables, i.e. powers of the field at the same point (see however Remark \ref{rm:qmoller_is_isomorphism}). Yet, also in the generalised case it is instructive to first discuss the perturbative agreement in heuristic terms in order to grasp the essential ideas. To this avail, we consider once more a heuristic diagram.

\beq\label{diagramheuristic_gPPA}
\begin{tikzcd}[column sep=large, row sep=huge] 
\mA_1 & & \mA_2 \arrow{ll}[above]{\text{\normalsize $\qmoller$}}\\
& \widetilde{\mA}_{1,Q} \arrow[dashed]{ul}[above]{\text{\normalsize $\quad\quad\bqmoller$}} \arrow[dashed]{ur}[above]{\text{\normalsize $\bq\quad\quad\;$}} 	&  \\
\widetilde{\mA}_{1,Q+V} \arrow[dashed]{uu}{\text{\normalsize $\molh{1}{Q+V}$}} \arrow[dashed]{rr}{\text{\normalsize $\gamma_{1,Q,V}$}} & &  \widetilde{\mA}_{2,V} \arrow[dashed]{uu}[right]{\text{\normalsize $\;\molh{2}{V}$}}
\end{tikzcd}
\eeq

In this diagram, we use again dashed arrows to indicate that their sources are ill--defined and thus formal. $\mA_1=(\Fmuc[1],\star_1)$ is the exact algebra corresponding to an (at most) quadratic action $\cS_1$, $\qmoller$ is the (modified) classical M\o ller map and $\mA_2=(\Fmuc[2],\star_{2})$ is the exact algebra corresponding to the (at most) quadratic action $\cS_2 = \cS_1 + Q$, constructed in such a way that $\qmoller:\mA_2\to \mA_{1}$ is manifestly a $*$--homomorphism. Moreover, $\molh{X}{Y}$ indicates the quantum M\o ller map corresponding to the free action $X$ and the perturbation $Y$ and $\widetilde{\mA}_{X,Y}$ indicates the heuristic algebra of interacting observables constructed in such a way that $\molh{X}{Y}$ is formally a $*$--isomorphism between $\widetilde{\mA}_{X,Y}$ and a subalgebra of $\A0_X\subset\mA_X$. The upper triangle corresponds to the PPA in the quadratic case, whereas the {\bf generalised Principle of Perturbative Agreement (gPPA)} may be formally stated as to require that $\widetilde{\mA}_{1,Q+V}$ and $\widetilde{\mA}_{2,V}$ are isomorphic, the isomorphism being indicated by $\gamma_{1,Q,V}$. The PPA in the quadratic case implies that $\bq$ effectively intertwines between the two points of view that $Q$ is either a perturbation or part of the exact theory. In fact, the forthcoming analysis will show that this persists in the presence of an additional interaction $V$, i.e. that essentially  $\gamma_{1,Q,V}=\bq$.

In contrast to the quadratic case, we shall not analyse a rigorous version of this diagram given by the restriction to regular observables, because this is in general not possible if $V$ is of higher--than--quadratic order in $\phi$. Instead we shall directly use the PPA to prove the gPPA in a version which is useful for applications in perturbation theory. Namely, as discussed in Section \ref{sec:pAQFT}, one may construct well--defined algebras of interacting observables corresponding to a free action $X$ and a perturbation $Y$ by considering the algebras $\mA_{X,Y}$ which are generated by $\molh{X}{Y}(F)$, where $F$ is a time--ordered product of local or regular functionals. Thus, for applications it is sufficient to prove the gPPA in terms of a relation between the well--defined objects $\molh{1}{Q+T_1(V)}$ and $\molh{2}{T_2(V)}$, where the time--ordered maps appear because one would like to deal with interactions corresponding to local and covariant Wick polynomials, cf. Remark \ref{rem:PPAequivalent} and also Remark \ref{rem:Qrenfreedom}.

\begin{theorem}\label{th:gPPA} In addition to the notations and assumptions of Definition \ref{def:PPA}, we consider an arbitrary $V\in\Floc$ and denote for arbitrary quadratic actions $X$ and $X+Y$ of the form \eqref{def:quadraticactions} by $\molh{X}{Y}$ the quantum M\o ller map constructed by means of the products $\star_X$ and $\T[X]$ as in \eqref{def:quantum moller operator}. If the time--ordered map $T_X$ satisfies the PPA as in Definition \ref{def:PPA}, i.e. if $T_{X+Y} = \beta_{X,Y}\circ T_X$, then the following identity holds on $\Ftloc[1]$
\beq
\molh{1}{Q+T_1(V)}=\qmoller \circ \molh{2}{T_2(V)} \circ \bq\,.
\eeq
\end{theorem}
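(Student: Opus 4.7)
The strategy is to unpack both sides using the definition of the quantum M\o ller maps as relative $S$--matrices, and then reduce the identity to two ingredients: a cocycle--type relation for the quantum M\o ller map $\mR^\hbar_{1,\bullet}$ on the $\mA_1$ side, and the algebraic relations that $\qmoller$ and $\bq$ satisfy by the PPA on the $\mA_2 \leftrightarrow \mA_1$ side. Throughout, I will denote $W \doteq T_1(V)$ so that the statement becomes $\molh{1}{Q+W}=\qmoller \circ \molh{2}{\bq(W)} \circ \bq$, where I have used the PPA $T_2=\bq \circ T_1$, applied to $V$, to write $T_2(V)=\bq(W)$.

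\textbf{Step 1 (cocycle identity on $\mA_1$).} Using $S_{1,A+B}=S_{1,A}\T[1] S_{1,B}$ (which follows directly from the symmetry of $\T[1]$ and the exponential form of the $S$--matrix), together with $\molh{1}{Q}(S_{1,W}) = S^{-1}_{1,Q}\star_1 (S_{1,Q}\T[1] S_{1,W}) = S^{-1}_{1,Q}\star_1 S_{1,Q+W}=\mathscr{S}_{1,Q}(W)$, one obtains by a direct manipulation the cocycle
\begin{equation}\label{eq:coc}
\molh{1}{Q+W}(F) \;=\; \molh{1}{Q}(S_{1,W})^{-1}\star_1\molh{1}{Q}\bigl(S_{1,W}\T[1] F\bigr), \qquad F\in \Ftloc[1].
\end{equation}
This is a purely algebraic manipulation that I would perform first, as it does not use the PPA at all; it expresses $\molh{1}{Q+W}$ in terms of $\molh{1}{Q}$ and $W$.

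\textbf{Step 2 (transport across $\bq$).} Apply the definition of $\molh{2}{\bq(W)}$, then use that $\qmoller:\mA_2\to\mA_1$ is a $\ast$--homomorphism for $\star_2,\star_1$ (Theorem \ref{th:classical homomorphism}(iv)) to obtain
\[
\qmoller\bigl(\molh{2}{\bq(W)}(\bq(F))\bigr)
= \qmoller\bigl(S^{-1}_{2,\bq(W)}\bigr)\star_1 \qmoller\bigl(S_{2,\bq(W)}\T[2]\bq(F)\bigr).
\]
Now I invoke that $\bq$ intertwines $\T[1]$ and $\T[2]$ (Theorem \ref{th:beta-homomorphism}(ii) on regular functionals, extended to $\Ftloc[1]$ by Proposition \ref{pr:betalocal}, valid in the sense of formal power series in $Q$). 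Since $S_{1,W}$ is a formal $\T[1]$--exponential, this intertwining yields $S_{2,\bq(W)}=\bq(S_{1,W})$ and $S_{2,\bq(W)}\T[2]\bq(F)=\bq(S_{1,W}\T[1] F)$. Combined with $\qmoller\circ\bq = \bqmoller=\molh{1}{Q}$ (the defining relation of $\bq$, where I use Remark \ref{rem:Qrenfreedom} to equate $\molh{1}{Q}$ and $\molh{1}{T_1(Q)}$), this collapses the right--hand side to
\[
\qmoller\circ \molh{2}{\bq(W)}\circ \bq (F) \;=\; \molh{1}{Q}(S_{1,W})^{-1}\star_1 \molh{1}{Q}\bigl(S_{1,W}\T[1] F\bigr),
\]
which, by the cocycle \eqref{eq:coc} of Step 1, coincides with $\molh{1}{Q+W}(F)$.

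\textbf{Main obstacle.} The delicate part is to justify the intertwining $\bq(S_{1,W}\T[1] F)=\bq(S_{1,W})\T[2]\bq(F)$ at the level of formal power series when $W\in\Floc$ is not regular, so that $\T[2]$ is a priori only defined on $\Ftloc[2]$. The relevant well--definedness is guaranteed by the construction of $\bq$ on $\Ftloc[1]$ in Proposition \ref{pr:betalocal}(i)--(ii), combined with the fact that the PPA built into $T$ ensures that $\bq$ maps $\Ftloc[1]$ into $\Ftloc[2]$ and that $T_2=\bq\circ T_1$ propagates through $n$--fold $\T$--products; thus the two $S$--matrices $S_{1,W}$ and $S_{2,\bq(W)}$ are corresponding formal generating functionals under $\bq$. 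Once this point is carefully handled, the remaining steps are purely formal. The argument nowhere uses the special form \eqref{def:striclyquadratic} of $Q$ beyond what is already encoded in the statement of the PPA, so the result holds for any $Q$ for which a PPA--compatible time--ordered map has been constructed.
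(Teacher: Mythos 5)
Your proposal is correct and follows essentially the same route as the paper: both proofs unpack the quantum M\o ller maps as relative $S$--matrices, use the factorisation $S_{1,A+B}=S_{1,A}\T[1] S_{1,B}$, the $*$--homomorphism property of $\qmoller$, the PPA--derived intertwining $\bq(S_{1,W})=S_{2,\bq(W)}$, and the relation $\qmoller\circ\bq=\bqmoller$. Your Step 1 cocycle identity is precisely the last two lines of the paper's computation read in reverse, and the "main obstacle" you flag (intertwining of $\T[1]$ and $\T[2]$ on non--regular functionals) is, as you note, exactly what the hypothesis $T_2=\bq\circ T_1$ supplies, so no gap remains.
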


\begin{proof}
We recall that $\bq$ is a well--defined map between $\Ftloc[2]$ and $\Ftloc[1]$ and recall a few basic identities already used in the proof of Lemma \ref{pr:cocylebeta}.
$$ S_{1,V+Q}=S_{1,V}\T[1]S_{1,Q}$$
$$ \bq\left(S_{1,V}\right)=S_{{2},\bq(V)}\qquad \Leftrightarrow\qquad  \beta^{-1}_{1,Q}\left(S_{2,V}\right)=S_{{1},\bq^{-1}(V)} $$
$$\qmoller\left(S^{-1}_{2,V}\right) =\left(\qmoller\left(S_{2,V}\right)\right)^{-1}= \left(\bqmoller\left(S_{1,\bq^{-1}(V)}\right)\right)^{-1} $$
Using all of these basic identities, Theorem \ref{th:classical homomorphism}, the PPA and $\qmoller=\bqmoller\circ\bq^{-1}$, which can be proven by applying $\qmoller^{-1}$ to both sides, we compute for an arbitrary functional $F\in\Ftloc[1]$
\begin{flalign*}
&\phantom{=}\,\left[\qmoller\circ \molh{2}{T_2(V)}\circ  \bq\right](F)\\
&=\qmoller\left(S_{2,T_2(V)}^{-1}\star_{2}\left(S_{2,T_2(V)}\cdot_{T_{2}}\bq(F)\right)\right)\\
&=\qmoller\left(S_{2,T_2(V)}^{-1}\right)\star_1\qmoller\left(S_{2,T_2(V)}\cdot_{T_{2}}\bq(F)\right) \\
&=\left(\qmoller^\hbar\left(S_{1,T_1(V)}\right)\right)^{-1}\star_1 \left[\bqmoller\circ \bq^{-1} \right]\left(S_{2,T_2(V)}\cdot_{T_{2}}\bq(F)\right)\\
&=\left(\qmoller^\hbar\left(S_{1,T_1(V)}\right)\right)^{-1}\star_1 \bqmoller\left(S_{{1},T_1(V)}\cdot_{T_1} F\right) \\
&=\left[\left(S_{1,Q}\cdot_{T_1}S_{1,T_1(V)}\right)^{-1}\star_1 S_{1,Q}\right]\star_1\left[S^{-1}_{1,Q}\star_1 \left(S_{1,Q}\cdot_{T_1}S_{{1},T_1(V)}\cdot_{T_1} F\right)\right]\\
&=S^{-1}_{1,Q+T_1(V)}\star_1 \left(S_{1,Q+T_1(V)}\cdot_{T_1} F\right)\\
&=\molh{1}{Q+T_1(V)}(F)\,.
\end{flalign*}
\end{proof}

\begin{remark}\label{rm:qmoller_is_isomorphism_V}
Theorem \ref{th:gPPA} and Theorem \ref{th:classical homomorphism} imply that $\qmoller$ is a $*$--isomorphism between the well--defined interacting algebras of observables $\mA_{2,T_2(V)}$ and $\mA_{1,Q+T_1(V)}$ and their on--shell versions $\Aon_{2,T_2(V)}$ and $\Aon_{1,Q+T_1(V)}$, cf. Definition \ref{def:interactingalgebra}. This generalises the same relation for the case $V=0$, cf. Remark \ref{rm:qmoller_is_isomorphism}.
\end{remark}

\section{The thermal mass and KMS states for interacting massless fields in Minkowski spacetime}
\label{sec:KMS}
We shall now apply the generalised Principle of Perturbative Agreement in order to construct equilibrium states for interacting massless scalar fields in Minkowski spacetime. We will accomplish this task by extending the results obtained in \cite{FredenhagenLindner} for the massive case. In fact, in \cite{FredenhagenLindner}, the authors succeeded to construct a KMS state on the perturbatively constructed algebra of interacting observables for the case of a massive free field and an arbitrary local interaction $V$. This construction is carried out by generalising techniques of quantum statistical mechanics to the field--theoretic case. To this avail, the ill--defined Hamiltonian is replaced by a well--defined time--averaged Hamiltonian and the adiabatic limit is dealt with by using algebraic isomorphisms in order to restrict the discussion to a finite--time slab of Minkowski spacetime and by proving that the remaining adiabatic limit in the spatial directions is well--defined. In \cite{FredenhagenLindner}, it is not proved explicitly that the KMS states constructed hereby are independent of the finite--time slab chosen, however, we shall demonstrate in Section \ref{sec:independence} that this is indeed the case.

The analysis of the spatial adiabatic limit in \cite{FredenhagenLindner} relies heavily on the fact that  connected correlation functions of massive free fields in KMS states decay exponentially in spatial directions. For this reason, the results of \cite{FredenhagenLindner} can not be directly applied to the massless case. However, it is widely believed that the massless $\phi^4$--model in a thermal state shares at least some of the good infrared properties of its massive counterpart due to the occurrence of the {\bf thermal mass}. In the functional picture, this quantity can be understood as follows \cite{Lindner:2013ila}. We recall that a local functional such as $V(\phi)=\int_\mM f \phi^4 d\mu_g$, $f\in\mD(\mM)$, considered as an element of the algebra of free fields $\mA=(\Fmuc,\star)$, corresponds to a smeared field polynomial which is Wick--ordered with respect to the (symmetric part of) the bidistribution $\De^+$ defining the $\star$--product. Initially, the $\phi^4$--model in Minkowski spacetime is considered to be constructed based on the field monomial Wick--ordered w.r.t. the vacuum two--point function $\De^+_\infty$ of the free field, and thus as an element of the corresponding algebra $\mA_\infty=(\Fmuc,\star_\infty)$. However, for practical computations in a KMS state with inverse temperature $\beta$ it is more convenient to pass to the algebra $\mA_\beta=(\Fmuc,\star_\beta)$, in which the $\star$--product is induced by the two--point function $\De^+_\beta$ of the free field in the $\beta$--KMS state. As we have discussed in Section \ref{sec:pAQFT}, this is implemented by the isomorphism $\alpha_{d}:\mA_\infty\to\mA_\beta$, where $d=\De^+_\beta - \De^+_\infty$ and $\alpha_d$ is a contraction exponential of the form \eqref{def:alpha}. Under this isomorphism, $V$ transforms as $V\mapsto \alpha_d(V)=V+Q+C$, and thus picks up a quadratic term $Q=\frac12\int_\mM f m^2_\beta \phi^2 d\mu_g$ (and an irrelevant constant term $C$). The coefficient $m^2_\beta$ of this quadratic term is interpreted as the square of a thermal mass, and one may compute that it is proportional to $\beta^{-2}$ in the massless case.

Motivated by the thermal mass idea, we shall prove the existence of Minkowskian KMS states on the interacting algebra $\mA_{1,V}$ corresponding to the massless quadratic action $\cS_1$ and an arbitrary local interaction $T_1(V)=V\in\Floc$ as follows, where note that on Minkowski spacetime $T_1|_\Floc = \Id$ for mass $m=0$. We split $V$ as $V=Q+V-Q$, where $Q$ is an arbitrary non--trivial positive quadratic local functional corresponding to an arbitrary non--vanishing ``virtual mass''. We then consider the KMS state $\omega^\beta_{2,T_2(V-Q)}$ on the adiabatic limit of the interacting algebra $\mA_{2,T_2(V-Q)}$ constructed as in \cite{FredenhagenLindner}. Our previous analysis implies that the renormalisation freedom of the time--ordered product can be fixed in such a way that the algebras $\mA_{2,T_2(V-Q)}$ and $\mA_{1,V}$ are isomorphic, the isomorphism being the (modified) classical M\o ller map $\qmoller$. We shall argue that $\qmoller$ preserves the defining properties of KMS states in the adiabatic limit; this implies that $\omega^\beta_{1,V}\doteq\omega^\beta_{2,T_2(V-Q)}\circ\qmoller^{-1}$ is a well--defined KMS state on $\mA_{1,V}$ in this limit.

\subsection{KMS states for interacting massive fields in Minkowski spacetime}
\label{sec:KMSmassive}
In order to pursue the plan outlined above, we briefly review the construction of KMS states for interacting massive scalar fields in Minkowski spacetime as devised in \cite{FredenhagenLindner}. To this avail, we consider a free massive Klein--Gordon field on Minkowski spacetime $\mink=(\mM,g)$, i.e. $\mM=\bR^4$ and $g$ is the Minkowski metric. We denote by $\mA=(\Fmuc,\star)$ the algebra of observables of this free theory constructed as reviewed in Section \ref{sec:pAQFT}, suppressing the dependence of quantities on the field model (i.e. the mass) and the metric throughout this subsection. Here, the $\star$--product is constructed by means of a time--translation invariant Hadamard distribution $\De^+$. We further consider an arbitrary local interaction $V\in\Floc$, and denote this explicitly as $V(f)$ for $f\in\mD(\mM)$, in order to spell out the test function $f$ which cuts off the support of $V$ in spacetime. The algebra of interacting observables corresponding to this interaction will be denoted by $\mA_{V(f)}$, and we recall that this algebra is $\star$--generated by $\molhh{V(f)}(F)$ with $F\in\Ftloc$, i.e. $F$ is a time--ordered product of local and regular functionals. Here, $\molhh{V(f)}$ is the quantum M\o ller map, cf. \eqref{def:quantum moller operator}. Finally, we denote by $\Aon$ and $\Aon_{V(f)}$ the on--shell versions of $\mA$ and $\mA_V$, cf. Definitions \ref{def:free_algebra} and \ref{def:interactingalgebra}, and we denote by e.g. $\mA(\mO)$, $\mA_{V(f)}(\mO)$ the subalgebras of $\mA=\mA(\mink)$, $\mA_{V(f)}=\mA_{V(f)}(\mink)$ containing elements $F$ with $\supp(F)\subset \mO\subset \mink$.

We start by recalling the definition of KMS (Kubo--Martin--Schwinger) states.

\begin{definition} Let $(\mA,\star,\ast)$ be a $\ast$--algebra over $\bC$ and let $\{\alpha_t\}$ be a one--parameter group of automorphisms of $\mA$. A state $\kms$ on $\mA$ is called {\bf KMS state of inverse temperature $\beta$} (also {\bf $\beta$-KMS state}) with respect to $\{\alpha_t\}$ if the functions
\begin{gather*}
\bR^n\ni(t_1,\ldots,t_n)\to\kms\left(\alpha_{t_1}(F_1)\star\ldots\star\alpha_{t_n}(F_n)\right)\quad F_1,\ldots, F_n\in\mA,
\end{gather*}
admit an analytic continuation on
\begin{gather*}
\mI^\beta_n\doteq\{(z_1,\ldots,z_n)\in\bC^n|\ 0<\Im z_i-\Im z_j<\beta\,, 1\leq i < j \leq n\}\,,
\end{gather*}
which is bounded and continuous on the boundary of $\mI^\beta_n$, where it satisfies the boundary conditions
\begin{gather*}
\kms\left(\alpha_{t_1}(F_1)\star\ldots\star\alpha_{t_{k-1}}(F_{k-1})\star\alpha_{t_k+i\beta}(F_k)\star\ldots\star\alpha_{t_n+i\beta}(F_n)\right)\\=
\kms\left(\alpha_{t_k}(F_k)\star\ldots\star\alpha_{t_n}(F_n)\star\alpha_{t_1}(F_1)\star\ldots\star\alpha_{t_{k-1}}(F_{k-1})\right)
\end{gather*}
for all $k\in\{1,\ldots,n\}$.
\end{definition}

We would like to construct an interacting KMS state for the interaction $V(f)$ in the adiabatic limit $\lim f\to 1$ starting from a KMS state $\kms$ on $\mA$, i.e. formally we are interested in a state $\kms_{V(1)}$ on $\mA_{V(1)}$. The adiabatic limit $\lim_{f\to 1} \mA_{V(f)}$ may be understood algebraically as follows \cite{BF00}. Consider $F\in\Ftloc$ and $f,f^\prime\in\mD(\mM)$ s.t. $f=f^\prime$ on the causal completion of $\supp(F)$. Then there exists an $F$--independent unitary $U_{f,f^\prime}\subset \mA$ s.t. $\molhh{V(f^\prime)}(F) =  U_{f,f^\prime}\star\molhh{V(f)}(F)\star U^{-1}_{f,f^\prime}$. Thus, the algebra $\mA_{V(f)}(\mO)$, is uniquely determined by the form of $f$ on the causal completion of $\mO$ up to isomorphy, and, in this sense, one may consider $\lim_{f\to 1}\mA_{V(f)}(\mO)$ as being equal to $\mA_{V(f)}(\mO)$ with $f=1$ on the causal completion of $\mO$. 

In order to discuss KMS states on $\mA_{V(f)}$ in the adiabatic limit, we introduce a one--parameter isomorphism group on this algebra corresponding to time--translations. Let $\mE(\mM)\ni\phi\mapsto\phi_t\in\mE(\mM)$ be the map defined by $\phi_t(x)\doteq\phi(x-te_0)$, where $e_0$ is the time--direction of an arbitrary but fixed frame of $\mink$. By pullback we get a one--parameter group of automorphisms $\{\alpha_t\}$ of $\mA$ defined by $\alpha_t(F)(\phi)\doteq F(\phi_t)$, since, by assumption, the $\star$--product is implemented by a time--translation invariant $\De^+$. The time--translations $\{\alpha_t^{V(f)}\}$ on $\mA_{V(f)}$ in the algebraic adiabatic limit are defined by demanding that $\molhh{V(f)}$ intertwines the free and interacting dynamics, i.e. for $F\in\Ftloc$ s.t. $f=1$ on the causal completion of $\supp(F)\cup\supp(\alpha_t(F))$, 
\beq\label{def:interactingdynamics}\alpha_t^{V(f)}\left(\molhh{V(f)}(F)\right)\doteq\molhh{V(f)}\left(\alpha_t(F)\right)=\alpha_{t}\left(\molhh{\alpha_{-t}(V(f))}(F)\right).\eeq 
In this sense, one may think of $\{\alpha_t^{V(f)}\}$ with $f=1$ on the causal completion of $\mO$ to represent the interacting dynamics $\{\alpha_t^{V(1)}\}$  on $\mA_{V(f)}(\mO)$ in the adiabatic limit. 

The essential starting point of the construction of \cite{FredenhagenLindner} is to restrict both observables and interactions to an $\ep$-neighbourhood $\Si_\ep\doteq(-\ep,\ep)\times\Si$ of a Cauchy surface $\Si$ of $\mink$. For the former, one may use the time--slice axiom \cite{Chilian:2008ye} which implies $\Aon(\Sigma_{\epsilon})\simeq \Aon(\mink)=\Aon,\Aon_{V(f)}(\Sigma_\epsilon)\simeq \Aon_{V(f)}(\mink)=\Aon_{V(f)}$. In order to restrict the interaction to $\Si_\ep$, we consider a temporal cut--off $\chi\in\mD(\bR)$, i.e. an element of the set 
\beq\label{def:temporal_cutoff}
\cI_\ep\doteq \left\{\chi\in\mD(\bR)\,|\,\supp(\chi)\subset(-2\ep,2\ep)\,, \chi = 1 \text{ on } (-\ep,\ep)\right\}.
\eeq
In analogy to the discussion of the algebraic adiabatic limit, one may show that the algebras $\mA_{V(f)}(\Sigma_\epsilon)$ and $\mA_{V(\chi f)}(\Sigma_\epsilon)$ are isomorphic, where the isomorphism is implemented by unitaries in $\mA$. The limit $\lim_{f\to 1}\molhh{V(\chi f)}(F)$ is well--defined for all $F\in\Ftloc$ because $J^-(\supp(F))\cap \supp(\chi)$ is compact by the compact support of $F$. Consequently, the algebra $\Aon_{V(\chi)}(\Sigma_\epsilon)$ is well--defined and may be considered as a representation of the adiabatic limit of $\Aon_{V(f)}(\mink)$. In this representation, the interacting dynamics in the adiabatic limit is implemented by the one--parameter automorphism group $\{\alpha_t^{V(\chi)}\}$ satisfying \eqref{def:interactingdynamics} with $V(f)$ replaced by $V(\chi)$ if $\supp(F)\cup\supp(\alpha_t{F})\subset \Sigma_\epsilon$. We note that the elements of $\Aon_{V(\chi)}(\Sigma_\epsilon)$ are formal power series with values in $\Aon(\Sigma_{2\epsilon})$, in this sense $\Aon_{V(\chi)}(\Sigma_\epsilon)\subset \Aon(\Sigma_{2\epsilon})$.

The construction of a $\beta$--KMS state w.r.t. the $V(\chi)$--dynamics on $\Aon_{V(\chi)}(\Sigma_\epsilon)$ now proceeds as follows. We consider an $h\in\mD(\bR^3)$ s.t. $h=1$ on $B_r$, the sphere in $\bR^3$ with radius $r$ centred at $0$. We further consider $\mO\subset\Sigma_\epsilon$ s.t. the causal completion of $\mO$ is a subset of $(-\epsilon,\epsilon)\times B_r$. Then $\Aon_{V(\chi)}(\mO)=\Aon_{V(\chi h)}(\mO)$ and $\{\alpha_t^{V(\chi h)}\}=\{\alpha_t^{V(\chi)}\}$ on $\Aon_{V(\chi h)}(\mO)$, where $\{\alpha_t^{V(\chi h)}\}$ is defined as in \eqref{def:interactingdynamics}. One can show that \cite{FredenhagenLindner}, for sufficiently small $t$, the $V(\chi h)$--interacting and the free time--evolution in $\mO$ are intertwined by unitaries $U^\chi_h(t)\in\mA(\Sigma_{2\epsilon})$
\beq
\label{co-cycle}
\alpha_t^{V(\chi h)}\left(\molhh{V(\chi h)}(F)\right)=
U^\chi_h(t)\star\alpha_t\left(\molhh{V(\chi h)}(F)\right)\star U^{\chi}_h(t)^{-1}\,.
\eeq
The unitaries $U^\chi_h(t)$ satisfy the co-cycle condition $U^\chi_h(t+s)=U^\chi_h(t)\star\alpha_t(U^\chi_h(s))$ for sufficiently small parameters. From this one can infer that the infinitesimal generator of $U^\chi_h(t)$ is
\beq
\label{infinitesimal generator of the co-cycle}
K^\chi_h\doteq
\frac 1i\left.\frac{d}{dt}U^\chi_h(t)\right|_{t=0}=
\molhh{V(\chi h)}\left(V(h\dot{\chi}^-)\right),
\eeq
where $\dot{\chi}^-(t)\doteq\dot{\chi}(t)\Theta(-t)$, with $\Theta$ denoting the Heaviside step function.

Based on this, the authors of \cite{FredenhagenLindner} first construct a $\beta$--KMS state w.r.t. the $V(\chi h)$--dynamics on $\Aon_{V(\chi h)}(\Sigma_\epsilon)$, which is clearly a $\beta$--KMS state w.r.t. to the $V(\chi)$--dynamics on $\Aon_{V(\chi)}(\mO)$ with $\mO$ as above. In order to obtain a KMS state w.r.t. $V(\chi)$ rather then $V(\chi h)$ on $\Aon_{V(\chi)}(\Sigma_\epsilon)$, the adiabatic limit $h\to 1$ is taken in the following sense due to van Hove. A sequence $(a_h)_{h\in\mD(\bR^3)}$ admits van Hove--limit $\vhlim_{h\to 1}a_h$ if, for all possible choices of sequences $(h_n)_n\subseteq\mD(\bR^3)$ such that $0\leq h_n\leq 1, h|_{B_n=1},\ h|_{\bR^3\setminus B_{n+1}}=0$, it holds that $\lim_{n\to\infty}a_{h_n}$ is finite and does not depend on the sequence $(h_n)_n$.

After these preparatory considerations we now collect the subsidiary and final results of \cite{FredenhagenLindner} in the form of a single theorem for the sake of brevity.

\begin{theorem}[\cite{FredenhagenLindner}]\label{th:KMSmassive}
Let $\kms$ be the $\beta$-KMS state on $\Aon$ with respect to $\{\alpha_t\}$.
\begin{itemize}

\item[(i)] For $h\in\mD(\bR^3)$ s.t. $h|_{B_r}=1$ and $\mO\subset\Sigma_\epsilon$ s.t. the causal completion of $\mO$ is a subset of $(-\epsilon,\epsilon)\times B_r$, the linear functional $\kms_{V(\chi h)}:\Aon_{V(\chi h)}(\mO)\to\bC$
\beq\label{interacting KMS state}
\kms_{V(\chi h)}(F)\doteq
\frac{\kms\left(F\star U^\chi_h(i\beta)\right)}{\kms(U^\chi_h(i\beta))},
\eeq
is a $\beta$-KMS state with respect to $\{\alpha_t^{V(\chi h)}\}$.

\item[(ii)] The interacting KMS state (\ref{interacting KMS state}) can be perturbatively expanded in terms of the connected (also called truncated) correlation functions $\kms_c$ of $\kms$,  viz.
\beq
\label{formula for the interacting KMS state}
\kms_{V(\chi h)}(F)=
\sum^\infty_{n= 0}(-1)^n\int_{\beta\cS_n}\ \kms_c\left(F\star\alpha_{iu_1}(K^\chi_h)\star\ldots\star\alpha_{iu_n}(K^\chi_h)\right)
dU,
\eeq
where $dU\doteq du_1\ldots du_n$ and $\beta\cS_n\doteq\{(u_1,\ldots,u_n)\in\bR^n|\ 0\leq u_1\leq\ldots\leq u_n\leq\beta\}$.

\item[(iii)] For arbitrary $\chi,\chi^\prime\in\cI_\ep$, $\kms_{V(\chi h)}(F)=\kms_{V(\chi^\prime h)}(F)$ for all $F\in\Aon_{V(\chi h)}(\mO)$ with $\mO$ as in (i). Thus, prior to the adiabatic limit $h\to 1$, the interacting KMS state depends at most on $h$ and the size of the time--slab  $\Si_\ep$ of Minkowski spacetime.

\item[(iv)] In the adiabatic limit $h\to 1$, the generator $K^\chi_h=\mR^\hbar_{V(\chi h)}(V(h\dot{\chi}^-))$ \eqref{infinitesimal generator of the co-cycle} in \eqref{formula for the interacting KMS state} may be replaced by $\mR^\hbar_{V(h)}(V(h\dot{\chi}^-))$. By doing so, each term in the formal sum \eqref{formula for the interacting KMS state} is multilinear in $h$. Consequently, if for all $n\in\bN$ and $F_0,F_1,\ldots, F_n\in\mA(\Sigma_{2\epsilon})$ the functions
\begin{gather*}
\beta\cS_n\times\bR^{3n}\ni(u_1,\ldots,u_n;\bx_1,\ldots,\bx_n)\mapsto\ 
\kms_c\left(F_0\star
\alpha_{iu_1,\bx_1}(F_1)\star\ldots\star\alpha_{iu_n,\bx_n}(F_n)
\right),
\end{gather*}
are in $L^1(\beta\cS_n\times\bR^{3n})$, then the van Hove--limit
\begin{gather}\label{interacting KMS state in the adiabatic limit}
\kms_{V(\chi)}(F)\doteq
\vhlim_{h\to 1}\ \kms_{V(\chi h)}(F),
\end{gather}
exists and defines a $\beta$-KMS state on $\Aon_{V(\chi)}(\Sigma_\epsilon)$ with respect to $\{\alpha_t^{V(\chi)}\}$. Here $\alpha_{t,\bx}$ denotes a spacetime translation by $(t,\bx)$ implemented on $\Aon_{V(\chi)}(\Sigma_\epsilon)$ in analogy to the time translation $\alpha_t$ and the analytic continuation of $\alpha_{t,\bx}$ to imaginary $t$ is understood in the weak sense and well--defined on account of the KMS property of $\omega^\beta$.

\item[(v)] In the case of a massive Klein--Gordon field on Minkowski spacetime, the $\beta$--KMS states  on $\Aon$ satisfy the integrability condition in (v) for all $0<\beta\leq\infty$, including the vacuum for $\beta=\infty$. Consequently, in this case, \eqref{interacting KMS state} and \eqref{interacting KMS state in the adiabatic limit} define an interacting $\beta$--KMS state on $\Aon_{V(\chi)}(\Sigma_\epsilon)$.
\end{itemize}
\end{theorem}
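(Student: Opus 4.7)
The plan is to prove (i)--(v) in sequence, following Araki's cocycle construction of perturbed KMS states as adapted to perturbative AQFT. For (i), I start from the cocycle identity \eqref{co-cycle} together with $U^\chi_h(t+s)=U^\chi_h(t)\star\alpha_t(U^\chi_h(s))$, which expresses the $V(\chi h)$--dynamics as an inner perturbation of $\{\alpha_t\}$. The candidate state \eqref{interacting KMS state} makes sense because $t\mapsto U^\chi_h(t)$ admits an analytic continuation to the strip $0<\mathrm{Im}\,t<\beta$, viewed order by order in the Dyson expansion, once paired against $\kms$ which enjoys the KMS analyticity. The interacting KMS boundary condition then reduces to the free KMS boundary condition for $\kms$ after rearranging $U^\chi_h$ by repeated use of the cocycle relation; normalisation $\kms_{V(\chi h)}(1)=1$ is automatic.

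For (ii), I expand both numerator and denominator of \eqref{interacting KMS state} using the Dyson series
\begin{equation*}
U^\chi_h(i\beta)=\sum_{n\ge 0}(-1)^n\int_{\beta\cS_n}\alpha_{iu_1}(K^\chi_h)\star\ldots\star\alpha_{iu_n}(K^\chi_h)\,dU,
\end{equation*}
and organise the ratio via the standard linked--cluster identity relating generating series of moments to generating series of truncated moments; this yields \eqref{formula for the interacting KMS state}. For (iii), two cutoffs $\chi,\chi'\in\cI_\ep$ differ by a function supported in $(-2\ep,-\ep)\cup(\ep,2\ep)$, and by causal factorisation of the relative $S$--matrix together with the support condition on $F\in\Aon_{V(\chi h)}(\mO)$ the associated change of generators is implemented by an inner automorphism by unitaries localised in the complement of $(-\ep,\ep)$. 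This inner automorphism commutes through the trace--like expression \eqref{formula for the interacting KMS state} by the KMS property of $\kms$, leaving $\kms_{V(\chi h)}(F)$ unchanged.

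Step (iv) is the subtlest. To justify replacing $\molhh{V(\chi h)}(V(h\dot\chi^-))$ by $\molhh{V(h)}(V(h\dot\chi^-))$ inside \eqref{formula for the interacting KMS state} I use the retardation of the quantum M\o ller map: the two interactions differ by $V((1-\chi)h)$, whose temporal support lies in $\bR\setminus(-\ep,\ep)$, the future part is removed by causal factorisation against $\supp(h\dot\chi^-)\subset(-2\ep,-\ep)\times\bR^3$, and the remote past part only produces an inner transformation absorbed exactly as in (iii). With the M\o ller map now depending on $h$ alone, its perturbative expansion in $V(h)$ makes each coefficient in \eqref{formula for the interacting KMS state} a polynomial of finite degree in $h$ whose integrand is, by spacetime--translation invariance of $\kms$, a translate of $h$--independent truncated correlation functions evaluated on imaginary time and real space arguments. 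The $L^1$ hypothesis then permits dominated convergence in the van Hove sense as $h\to 1$ and the KMS property descends termwise to the limit state on $\Aon_{V(\chi)}(\Si_\ep)$.

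Finally, for (v) the main technical obstacle is verifying $L^1$ integrability of the truncated correlation functions. In the massive case the two--point function $\kms(\phi(x)\phi(y))$ decays exponentially in spatial separation, uniformly for $0\le u\le\beta$ in the imaginary--time shift, at a rate controlled below by the physical mass; since $\kms$ is quasi--free, its truncated $n$--point functions are finite sums over pair--contractions of two--point functions and therefore inherit the required $L^1$ bound on $\beta\cS_n\times\bR^{3n}$. This is precisely the mechanism that fails for $m=0$ and that we shall bypass in the next subsection by exploiting the generalised PPA proved in Section \ref{sec:gPPA} to trade a virtual mass $Q$ between the free and interacting parts of the theory.
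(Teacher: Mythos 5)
The paper does not actually prove this theorem: it is stated as a compilation of results imported wholesale from \cite{FredenhagenLindner} (``we now collect the subsidiary and final results of \cite{FredenhagenLindner} in the form of a single theorem''), so there is no in--paper argument to compare against line by line. Your sketch reproduces the architecture of the Fredenhagen--Lindner construction faithfully: Araki's cocycle perturbation of KMS states for (i), the linked--cluster rearrangement of the Dyson series for (ii), causal factorisation of the relative $S$--matrix combined with the KMS (cyclicity) property for (iii) and for the replacement of the generator in (iv), and exponential clustering of the massive thermal two--point function for (v). At that level the proposal is the same route as the cited proof.

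Two steps are imprecise enough to flag. In (v) you assert that, $\kms$ being quasi--free, ``its truncated $n$--point functions are finite sums over pair--contractions of two--point functions''; for a quasi--free state the truncated $n$--point functions of the \emph{field} vanish identically for $n\neq 2$, so this is not the object that needs to be controlled. What must be bounded are the connected correlation functions of the \emph{composite} observables $F_i\in\mA(\Sigma_{2\epsilon})$ (Wick polynomials, their $\star$-- and time--ordered products, and the M\o ller--map insertions), which are sums over \emph{connected graphs} whose lines carry $\Delta^{+,\beta}$ at complex times; the $L^1$ bound follows because connectedness forces, for any spatial separation of the vertices, at least one line with large spacelike argument, giving decay of order $e^{-mr/\sqrt{k}}$ in the total separation. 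This is precisely \cite[Theorem 3]{FredenhagenLindner}, refined in Appendix \ref{sec:clustering} of the present paper to make the bound uniform in the spatial cutoffs. Likewise, in (iv) the replacement of $\mR^\hbar_{V(\chi h)}$ by $\mR^\hbar_{V(h)}$ cannot be disposed of purely by ``causal factorisation plus an inner transformation absorbed as in (iii)'': the past part of $(1-\chi)h$ has non--compact temporal support, and making the absorption rigorous is exactly where the cocycle composition law of Araki's theory and statement (iii) must be used in tandem. These are sketch--level gaps in the places where the actual work of \cite{FredenhagenLindner} is concentrated, not wrong ideas.
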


\subsection{Independence of the interacting KMS state on the temporal cut--off}
\label{sec:independence}
As argued in the previous subsection, the algebra $\Aon_{V(\chi)}(\Sigma_\epsilon)$ may be considered as a representation of the algebra $\Aon_{V(f)}(\mink)$ in the adiabatic limit $f\to 1$. Consequently, by pullback, the KMS state on $\Aon_{V(\chi)}(\Sigma_\epsilon)$, constructed in \cite{FredenhagenLindner} as reviewed above, may be considered as a KMS state on $\Aon_{V(f)}(\mink)$ in the adiabatic limit. One expects that, in the absence of phase transitions, a $\beta$--KMS state is uniquely determined by $\beta$ and the one--parameter automorphism group. Consequently, $\kms_{V(\chi)}$ should be independent of both $\chi\in\cI_\epsilon$ and $\epsilon$. The question whether this holds was left open in \cite{FredenhagenLindner}, cf. the comments at the end of Section 4 in \cite{FredenhagenLindner}. In the following, we shall prove that this is indeed the case.

\begin{proposition}\label{pr:chi_independence} Using the notation of Section \ref{sec:KMSmassive}, consider an arbitrary but fixed $\epsilon>0$, the corresponding finite--time slab $\Sigma_\epsilon$ of Minkowski spacetime, and an arbitrary but fixed $\chi\in\cI_\epsilon$. Moreover, let $F_1,\ldots,F_n$ be arbitrary elements of $\Ftloc$ with $\supp(F_i)\subset \Sigma_\epsilon$ for all $i\in\{1,\ldots,n\}$ and let $\kms_{V(\chi)}$ be the $\beta$--KMS state on $\Aon_{V(\chi)}(\Sigma_\epsilon)$ constructed as in Theorem \ref{th:KMSmassive}. Then, the following statements hold for the expectation value
$$E_{\chi}\doteq \kms_{V(\chi)}\left(\mR^\hbar_{V(\chi)}(F_1)\star\ldots\star\mR^\hbar_{V(\chi)}(F_n)\right)\in\bC.$$ 

\begin{itemize}
\item[(i)] $E_\chi$ is independent of $\chi$, i.e. $E_\chi=E_{\chi^\prime}$ for all $\chi,\chi^\prime\in\cI_\epsilon$.
\item[(ii)] $E_\chi$ is independent of $\epsilon$, i.e. for all $\epsilon^\prime>\epsilon$ and all $\chi\in\cI_\epsilon$, $\chi^\prime\in\cI_{\epsilon^\prime}$, $E_\chi=E_{\chi^\prime}$.
\end{itemize}
\end{proposition}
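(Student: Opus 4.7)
My plan is to reduce both parts of the statement to a single claim on a larger convex domain of cut--off functions. Both $\cI_\epsilon$ and $\cI_{\epsilon'}$ (for any $\epsilon'\ge \epsilon$) lie in the convex set
\[
\widehat\cI_\epsilon\doteq\{\tilde\chi\in\mD(\bR)\mid \tilde\chi\equiv 1\text{ on }(-\epsilon,\epsilon)\},
\]
so it suffices to prove that $E_\chi$ is well--defined and independent of $\chi\in\widehat\cI_\epsilon$; both (i) and (ii) follow as corollaries. Given $\chi_0,\chi_1\in\widehat\cI_\epsilon$ I would interpolate by $\chi_s\doteq (1-s)\chi_0+s\chi_1$, $s\in[0,1]$, set $\eta\doteq\chi_1-\chi_0$ (which is supported in $\{|t|\ge\epsilon\}$ and satisfies $\eta(0)=0$), and reduce the claim to verifying $\frac{d}{ds}E_{\chi_s}\equiv 0$.

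First, I would exploit causal factorisation to isolate the $s$--dependence inside the M\o ller operators. Pick a smooth partition of unity $\psi_-+\psi_0+\psi_+\equiv 1$ on $\bR$ adapted to the time--projections of the $\supp(F_i)$, with $\supp(\psi_0)\subset(-\epsilon,\epsilon)$, $\supp(\psi_\pm)\subset\{\pm t>\epsilon-\delta\}$ and $\psi_0\equiv 1$ on a neighbourhood of $\bigcup_i\supp(F_i)$. Then $V(\chi_s)=V(\chi_s\psi_-)+V(\psi_0)+V(\chi_s\psi_+)$ with $V(\psi_0)$ independent of $s$ (using $\chi_s\equiv 1$ on $\supp(\psi_0)$). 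Writing $V_\pm^{\chi_s}\doteq V(\chi_s\psi_\pm)$, the causal factorisation of the $S$--matrix (Appendix \ref{sec:factorisation}) and the support property of $F_i$ imply
\[
\mR^\hbar_{V(\chi_s)}(F_i)=S_{V_-^{\chi_s}}^{-1}\star \mR^\hbar_{V(\psi_0)}(F_i)\star S_{V_-^{\chi_s}},
\]
and consequently
\[
E_{\chi_s}=\kms_{V(\chi_s)}\bigl(S_{V_-^{\chi_s}}^{-1}\star G_0\star S_{V_-^{\chi_s}}\bigr),\qquad G_0\doteq\mR^\hbar_{V(\psi_0)}(F_1)\star\ldots\star\mR^\hbar_{V(\psi_0)}(F_n),
\]
with $G_0$ manifestly $s$--independent. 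All $s$--dependence is now confined to the state $\kms_{V(\chi_s)}$ and to the past--supported cocycle $S_{V_-^{\chi_s}}$, both of which depend on $\chi_s$ only through its restriction to $\{t\le -\epsilon+\delta\}$.

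The heart of the argument would then be to differentiate $E_{\chi_s}$ in $s$ using the perturbative expansion \eqref{formula for the interacting KMS state} of $\kms_{V(\chi_s)}$ with generator $K^{\chi_s}=\mR^\hbar_{V(h)}(V(h\dot\chi_s^-))$ (in the van Hove limit $h\to 1$, as permitted by Theorem \ref{th:KMSmassive}(iv)). Two kinds of $s$--derivatives appear: (a) those acting on the inner automorphism implemented by $S_{V_-^{\chi_s}}$, which bring down a commutator of $G_0$ with $V(\eta\psi_-)$; and (b) those acting on each factor $K^{\chi_s}$, which bring down insertions involving $V(\dot\eta^-)$ conjugated by $\alpha_{iu_j}$ inside the KMS integral. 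The crucial identity is that $\dot\eta^-$ is the distributional time--derivative of $\eta\cdot\Theta(-t)$, which is smooth precisely because $\eta(0)=0$, so upon integrating by parts (exchanging the derivative on $\eta\cdot\Theta(-t)$ for a time--derivative of the Lagrangian density in $V$, re--expressed as a commutator with the free Hamiltonian via Noether), the resulting terms recombine via the KMS boundary conditions and the time--translation invariance of $\kms$ into precisely the negative of the contribution (a). Thus $\frac{d}{ds}E_{\chi_s}=0$.

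The main technical obstacle will be to justify interchanging this $s$--differentiation with the van Hove adiabatic limit $h\to 1$, the integration over the KMS simplex $\beta\cS_n$, and the summation over $n$ in the perturbative expansion of $\kms_{V(\chi_s)}$. This requires a quantitative control of the spatial decay of the free connected correlators $\kms_c$, which is precisely the strong clustering property established in Appendix \ref{sec:clustering}. Once this control is available, the manipulations above are rigorous, and the proposition follows: part (i) because any $\chi,\chi'\in\cI_\epsilon$ can be joined by a segment in $\cI_\epsilon\subset\widehat\cI_\epsilon$, and part (ii) from $\cI_{\epsilon'}\subset\widehat\cI_\epsilon$ for every $\epsilon'\ge\epsilon$.
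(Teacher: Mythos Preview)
Your strategy differs substantially from the paper's. The paper does not differentiate in $\chi$ at all; instead it leans on the fact---already established by Fredenhagen and Lindner and recorded as Theorem \ref{th:KMSmassive}(iii)---that the interacting KMS state is $\chi$--independent \emph{before} the spatial adiabatic limit, i.e.\ $f_n(h,h,h,\chi)=f_n(h,h,h,\chi')$ for fixed compactly supported $h$. The whole proof of (i) then reduces to showing that this pre--limit identity survives the van Hove limit, which amounts to controlling the difference $f_n(1,h_k,1,\chi)-f_n(h_k,h_k,h_k,\chi)$ as $k\to\infty$; this is where the strong clustering of Lemma \ref{pr:clustering} enters, and it is the only new analytic ingredient. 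Part (ii) is then an elementary chain argument: for $\epsilon<\epsilon'<2\epsilon$ one has $\cI_\epsilon\cap\cI_{\epsilon'}\neq\emptyset$, so one passes from $\chi\in\cI_\epsilon$ to $\chi'\in\cI_{\epsilon'}$ through a common element, and larger $\epsilon'$ are reached in finitely many such steps.

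Your approach, by contrast, attempts a homotopy argument directly \emph{after} the adiabatic limit, and its heart---the claimed cancellation $\tfrac{d}{ds}E_{\chi_s}=0$---is asserted rather than demonstrated. Contribution (b) is an insertion of $\mR^\hbar_{V(h)}(V(h\dot\eta^-))$ at imaginary time $iu_j$ inside a connected correlator; contribution (a) is a commutator insertion at real time. Your bridge between them (write $\dot\eta^-=\partial_t(\eta\,\Theta(-\cdot))$, integrate by parts, reinterpret $\partial_t$ as a commutator with a Hamiltonian, invoke KMS) glosses over real obstacles: there is no Hamiltonian in the algebra, so one must work with the automorphism $\alpha_t$ and its analytic continuation instead; integration by parts in the simplex variables $u_j$ produces boundary terms at $u_j=u_{j\pm 1}$ and at $u=0,\beta$ that have to be tracked through the hierarchy in $n$; and the insertion in (b) sits inside the interacting M\o ller map $\mR^\hbar_{V(h)}$, which does not commute with $\alpha_t$. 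None of this is carried out, so the cancellation remains a conjecture. (There is also a minor slip: since $\supp\psi_0$ and $\supp\psi_-$ overlap in time, the conjugating unitary in your factorisation is the relative $S$--matrix $S_{V(\psi_0)}^{-1}\star S_{V(\psi_0)+V_-^{\chi_s}}$, not $S_{V_-^{\chi_s}}$ itself.) Since you need the clustering lemma in either approach to justify the limit interchanges, the paper's route---which outsources the algebraic cancellation to the already--proven pre--limit statement---is both shorter and complete.
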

\begin{proof}
Proof of (i). We prove the statement for a single generator $\mR^\hbar_{V(\chi)}(F)$, the general case follows analogously. To this avail, we set
$$
f_n(h_1,h_2,h_3,\chi)\doteq (-1)^n\int_{\beta S_n}\!\!\!dU\ \kms_c\left(\mR^\hbar_{V(h_1\chi)}(F)\star
\prod^\star_{j\in\{1,\ldots,n\}}\alpha_{iu_j}\left(\mR^\hbar_{V(h_3\chi)}(V(h_2\dot{\chi}^-))\right)\right)\,.
$$
By Theorem \ref{th:KMSmassive} (iv), we have $E_\chi=\vhlim_{h\to 1}\sum_{n} f_n(1,h,1,\chi)$. We would like to show that this van Hove--limit is independent of $\chi$. We shall demonstrate this for the $n=1$ contribution to the formal sum, the case of general $n$ can be shown by entirely analogous arguments. To this end, we consider a van Hove--sequence $(h_k)_k$ such that $\vhlim_{h\to 1}f_1(1,h,1,\chi)=\lim_{k\to\infty}f_1(1,h_k,1,\chi)$ and that the derivatives of $h_k$ are uniformly bounded in $k$. We analyse the difference $f_1(1,h_k,1,\chi)-f_1(h_k,h_k,h_k,\chi)$ by splitting it as
$$f_1(1,h_k,1,\chi)-f_1(h_k,h_k,h_k,\chi)=R_k(\chi) + S_k(\chi)\,,$$
$$R_k(\chi)\doteq f_1(1,h_k,1,\chi)-f_1(1,h_k,h_k,\chi)\,,\qquad S_k(\chi)\doteq f_1(1,h_k,h_k,\chi)-f_1(h_k,h_k,h_k,\chi)\,.$$
For sufficiently large $k$, $S_k(\chi)$ vanishes because $V(\chi h_k)-V(\chi)$ is causally disjoint from $\supp(F)$ for $k\gg 1$. The same argument does not hold for $R_k(\chi)$, because the support of $V(h_k\dot{\chi}^-)$ is increasing in line with the support of $V(\chi h_k)$. However, for $k\gg 1$, 
the supports of $\chimoller(V(h_k\dot{\chi}^-))-\chikmoller(V(h_k\dot{\chi}^-))$ and $\chimoller(F)$, considered as elements of the free algebra $\mA(\Sigma_{2\epsilon})$, are spacelike separated for large $k$, with the spacelike separation monotonically increasing in $k$. 

We would like to deduce from this fact that $\lim_{k\to\infty}R_k(\chi)=0$. Yet, as discussed at the end of Section 4 in \cite{FredenhagenLindner} the clustering properties of $\kms_c$ recalled in Theorem \ref{th:KMSmassive} (v) and proved in \cite{FredenhagenLindner} are not sufficient to control the limit $\lim_{k\to\infty}f_1(1,h_k,h_k,\chi)$, but stronger clustering properties are necessary. In fact, one needs to prove that for all $A_i \in \mA(\Sigma_{2\epsilon})$ and all $\bm\in\bN^k$, the function
$$
F^{\bm}_{k}(u_1,\bx_1,\dots, u_k,\bx_k)(g_0,\dots, g_k) \doteq 
\omega_c^\beta\left([A_0]_{g_0\chi}^{(m_0)}\star \alpha_{u_1,\bx_1}\left([A_1]_{g_1\chi}^{(m_1)}\right) \star \ldots \star \alpha_{u_n,\bx_k}\left([A_k]^{(m_k)}_{g_k\chi}\right) \right)
$$
is uniformly bounded by an absolutely integrable function for all $g_i$ in some bounded set of $\mD(\mM)$. Here, $[A]_g^{(m)}$ denotes the $m$--th perturbation order of $\mR^\hbar_{V(g)}(A)$. We prove in Lemma \ref{pr:clustering} that, as conjectured in \cite{FredenhagenLindner}, this clustering property indeed holds for the massive Klein--Gordon field on Minkowski spacetime, whence $\lim_{k\to\infty}R_k(\chi)=0$.

 As the final ingredient of our proof, we note that, by Theorem \ref{th:KMSmassive} (iii), $f_1(h_k,h_k,h_k,\chi) = f_1(h_k,h_k,h_k,\chi^\prime)$ for all $\chi,\chi^\prime\in\cI_\epsilon$ and $k$ large enough so that the causal completion of $\supp(F)$ is contained in $((-\epsilon,\epsilon)\times B_k)$, where we recall that $B_k\subset \bR^3$ is the sphere of radius $k$ centred at $0$ and that, by assumption, $h_k=1$ on $B_k$. Collecting all these observations, we may compute for arbitrary $\chi,\chi^\prime\in\cI_\epsilon$
\begin{align*}
&\lim_{k\to\infty}\left(f_1(1,h_k,1,\chi)-f_1(1,h_k,1,\chi^\prime)\right)\\
=&\lim_{k\to\infty}\left(f_1(h_k,h_k,h_k,\chi)-f_1(h_k,h_k,h_k,\chi^\prime)+R_k(\chi)+S_k(\chi)-R_k(\chi^\prime)-S_k(\chi^\prime)\right)\\
=&\,\, \,0\,.
\end{align*}

Proof of (ii). In order to simplify notation, we define an equivalence relation on $\mD(\bR)$ by
$$\chi\sim_\ep\chi^\prime\quad\Leftrightarrow\quad\chi,\chi^\prime\in \cI_\ep\,.$$
With this notation, we can rephrase the result of (i) by saying that $E_\chi = E_{\chi^\prime}$ if $\chi\sim_\ep\chi^\prime$. In order to prove the statement, we first consider $\ep^\prime\in(\ep,2\ep)$. Then, for all $\chi\in \cI_\ep$ and all $\chi^{\prime}\in \cI_{\ep^{\prime}}$, we can find a $\chi^{\prime\prime}\in \cI_\ep\cap \cI_{\ep^\prime}$ such that $\chi\sim_\ep\chi^{\prime\prime}\sim_{\ep^{\prime}}\chi^{\prime}$. If $\ep^\prime>2\ep$ we can choose a \textit{finite} sequence $\ep_1,\ldots,\ep_n$, with $\ep<\ep_{1}<2\ep$, $\ep_j<\ep_{j+1}<2\ep_j$ for all $j\in\{1,\ldots,n-1\}$,  $\ep_{n}<\ep^\prime<2\ep_{n}$, and $\chi_1,\ldots,\chi_{n+1}\in\mD(\bR)$ such that $\chi_1\in\cI_\ep\cap \cI_{\ep_1}$, $\chi_{j+1}\in\cI_{\ep_j}\cap \cI_{\ep_{j+1}}$ for all $j\in\{1,\ldots,n-1\}$, $\chi_{n+1}\in\cI_{\ep_n}\cap \cI_{\ep^\prime}$. With this choice of $\chi_j$, we obtain
$\chi\sim_\ep \chi_1 \sim_{\ep_1}\ldots\sim_{\ep_n}\chi_{n+1}\sim_{\ep^\prime} \chi^\prime$,
which proves the statement on account of (i).
\end{proof}

\subsection{KMS states for interacting massless fields in Minkowski spacetime}
\label{sec:KMSmassless}

We shall now combine the generalised Principle of Perturbative agreement (gPPA) with the construction of interacting massive KMS states reviewed in Section \ref{sec:KMSmassive} in order to construct KMS states for massless interacting fields on Minkowski spacetime. To this avail, we combine the notation used in  \ref{sec:KMSmassive} and in the earlier sections of this paper. Quantities indexed by ``$1$" shall refer to the free massless Klein--Gordon field. For definiteness we choose the $\star$--product on the free algebra $\mA_1 = (\Fmuc,\star_1)$ which is induced by the two--point function of the quasifree massless vacuum state $\Delta^+_1(f,g)\doteq\omega^\infty_1(f,g)$, $f,g\in\mD(\mM)$. We consider an arbitrary local interaction $V(f)\in\Floc$, spelling out the test function $f$ which localises the support of the interaction to a compact region of spacetime. Finally, we consider a quadratic local functional $Q(f)$ of the form $[Q(f)](\phi)\doteq \frac12 m^2_Q \int_\mM f \phi^2 d\mu_g$, where $m_Q > 0$ is an arbitrary but fixed ``virtual mass'' and $f$ is a positive test function.

In order to apply the gPPA, we recall the relevant objects for the convenience of the reader, where we use a slightly different notation than in Section \ref{sec:PPA} and \ref{sec:gPPA} in order to spell out the dependence of quantities on $f$. $\mR_{1,Q(f)}:\Fmuc\to\Fmuc$ is the classical M\o ller map defined as the pullback of $$R_{1,Q(f)}:\mE(\mM)\to\mE(\mM)\,,\qquad R_{1,Q(f)} \doteq \Id + \De^R_{1+Q(f)} \circ Q(f)^{(1)} = \Id + \De^R_{1+Q(f)} m^2_Q f\,,$$ where $\De^R_{1+Q(f)}$ is the advanced Green's operator of the Klein--Gordon operator $P_{1+Q(f)}\doteq P_1+Q(f)^{(1)}= -\Box_g + m^2_Q f$. We define \beq\label{eq:kmsdelta2}\De^+_{1+Q(f)} = R_{1,Q(f)} \circ \De^+_1 \circ R_{1,Q(f)}^\dagger\,\eeq which is of Hadamard form w.r.t. $P_{1+Q(f)}$, and a corresponding $\star$--product $\star_{1+Q(f)}$ such that $\mR_{1,Q(f)}:\mA_{1+Q(f)}\doteq(\Fmuc,\star_{1+Q(f)})\to\mA_1$ is a $*$--isomorphism, cf. Theorem \ref{th:classical homomorphism}. Note that we use the same Minkowski metric for both field theoretic models such that we do not have to deal with two different spaces of microcausal functionals and two different integration measures. By Theorem \ref{th:PPA} we can fix the renormalisation freedom of the time--ordered map in such a way that $T_{1+Q(f)}=\beta_{1,Q(f)}\circ T_1$ for all multilocal functionals. Under these conditions, Theorem \ref{th:gPPA} implies that the classical M\o ller map ${\mR}_{1,Q(f)}$ is a $*$--isomorphism between the algebras of interacting observables $\mA_{1+Q(f),T_{1+Q(f)}(V(f)-Q(f))}$ and $\mA_{1,V(f)}$, cf. Remark \ref{rm:qmoller_is_isomorphism_V}.

We now fix $\epsilon>0$ and consider the $\epsilon$--neighbourhood of a Cauchy surface $\Sigma$ of Minkowski spacetime $\Sigma_\epsilon\doteq(-\epsilon,\epsilon)\times\Sigma$. We further choose a temporal cut--off $\chi\in\cI_\epsilon\subset \mD(\bR)$, cf. \eqref{def:temporal_cutoff}. By the above considerations and arguments reviewed in Section \ref{sec:KMSmassive}, we have for all $\mO\subset\Sigma_\epsilon$ the following $*$--isomorphisms 
\beq\label{eq:isos1}
\mA_{1,V(f)}(\mO)\;\simeq \;\mA_{1+Q(f),T_{1+Q(f)}(V(f)-Q(f))}(\mO)\;\simeq\;\mA_{1+Q(f),T_{1+Q(f)}(V(\chi f)-Q(\chi f))}(\mO)
\eeq
induced by 
\begin{align}\label{eq:isos1_generators}
\molh_{1,V(f)}(F)&= \left[{\mR}_{1,Q(f)}\circ\molh{1+Q(f)}{T_{1+Q(f)}(V(f)-Q(f))}\circ {\beta_{1,Q(f)}}\right](F)\\
&=U\star_1\left[{\mR}_{1,Q(f)}\circ\molh{1+Q(f)}{T_{1+Q(f)}(V(\chi f)-Q(\chi f))}\circ {\beta_{1,Q(f)}}\right](F)\star_1 U^{-1}\notag\,,
\end{align}
where $U\in\mA_{1}$ is a unitary depending on $f$ and $\chi$ but not on $F$, $F$ is an arbitrary element of $\Ftloc[1]$ with $\supp(F)\subset \mO$, and where we used that ${\mR}_{1,Q(f)}$ maps unitaries to unitaries and that ${\beta_{1,Q(f)}}$ preserves the support.

We recall that changing the support of $f$ in $\mA_{1,V(f)}(\mO)$ outside of the causal completion of $\mO$ gives an isomorphic algebra. By isomorphy, this applies also to the other two algebras in \eqref{eq:isos1}. Consequently, we may consider the algebraic adiabatic limit $f\to 1$ for all three algebras in \eqref{eq:isos1}. In this limit, we define a one--parameter automorphism group $\{\alpha^{1,V(f)}_t\}$ on $\mA_{1,V(f)}$ by means of the free time--evolution $\{\alpha_t\}$ on $\mA_1$ as reviewed in Section \ref{sec:KMSmassive}. By isomorphy, this induces one--parameter automorphism groups also on the other two algebras in \eqref{eq:isos1}. 

We would now like to identify the algebraic adiabatic limit of the rightmost algebra in \eqref{eq:isos1} with the limit $f\to 1$ in the strict sense. To this avail, we need to check whether ${\beta_{1,Q(f)}}$, $T_{1+Q(f)}$ and $\star_{1+Q(f)}$ are well--defined in the limit $f\to 1$. For the latter, we note that it is not difficult to check that the corresponding limit of $\De^+_{1+Q(f)}$ defined in \eqref{eq:kmsdelta2} is well--defined and gives the two--point function of the (quasifree) vacuum state for the mass $m_Q$ since $\De^+_1$ is the corresponding correlation function of the massless state, cf. Lemma \ref{pr:adiabaticvacuum}. Similarly $T_{1+Q(f)}$ is well--defined in the adiabatic limit because $\De^F_{1+Q(f)}$ is well--defined in this limit and the renormalisation in $T_{1+Q(f)}$ is performed in a way which is analytic in $Q(f)$ as required by the axioms for $T$.

This implies that the limit of $\beta_{1,Q(f)}=\mR_{1,Q(f)}^{-1}\circ\molh{1}{Q(f)}$ is well--defined and corresponds to the contraction exponential $\alpha_d$ with $d$ given by the difference of the massive and massless Feynman propagator of the corresponding (quasifree) vacuum states; we refer to Proposition \ref{pr:betalocal} for the interpretation of this contraction exponential on general local functionals.

The above considerations imply that $\beta_{1,Q(1)}$ and $T_{1+Q(1)}$ commute with the free dynamics $\{\alpha_t\}$, which in turn entails that the one--parameter automorphism group on the rightmost algebra in \eqref{eq:isos1} induced via isomorphy by the interacting dynamics on $\mA_{1,V(f)}(\mO)$ in the adiabatic limit is in fact that interacting dynamics on the massive interacting algebra $\mA_{1+Q(1),T_{1+Q(1)}(V(\chi)-Q(\chi))}(\mO)$. We may construct a $\beta$--KMS state w.r.t. to this dynamics as reviewed in Theorem \ref{th:KMSmassive}, where we recall that this state does not depend on $\chi$ and $\epsilon$ by Proposition \ref{pr:chi_independence}. By \eqref{eq:isos1} and the above discussion, this induces a $\beta$--KMS state w.r.t. to the interacting dynamics on $\mA_{1,V(f)}(\mO)$ in the adiabatic limit.

A natural question is whether the $\beta$--KMS state for zero mass and interaction $V$ constructed as above depends on the virtual mass $m_Q$ used in the construction. We expect that this is not the case by uniqueness of the $\beta$--KMS state in the absence of phase transitions\footnote{In the massless case, we have to add the condition that the free KMS state is quasifree in order to exclude zero modes.}. In fact, if we choose two different $m_Q$, $m_{Q^\prime}$, then we obtain two chains of isomorphisms of the form \eqref{eq:isos1}. This and the above discussion imply that the corresponding algebras in the adiabatic limit $\mA_{1+Q(1),T_{1+Q(1)}(V(\chi)-Q(\chi))}(\mO)$ and $\mA_{1+Q^\prime(1),T_{1+Q^\prime(1)}(V(\chi)-Q^\prime(\chi))}(\mO)$ are isomorphic. Let us denote this isomorphism by $i_{Q,Q^\prime}$. We know that $i_{Q,Q^\prime}$ intertwines the interacting dynamics on the two algebras and that the construction of the $\beta$--KMS states on these algebras performed as in Theorem \ref{th:KMSmassive} is based on the generators of cocycles induced by the corresponding one--parameter automorphism groups. Consequently, $i_{Q,Q^\prime}$ intertwines these KMS states and thus the massless KMS state for the interaction $V$ induced by $m_Q$, $m_{Q^\prime}$ is the same.

For definiteness, we may subsume the above considerations in form of a proposition which we formulate only for local observables for simplicity.
\begin{proposition}
Consider an arbitrary local interaction $V$, arbitrary $\epsilon>0$ and $\chi\in\cI_\epsilon$ and an arbitrary $Q(f)$ of the form $[Q(f)](\phi)\doteq \frac12 m^2_Q \int_\mM f \phi^2 d\mu_g$ with $m_Q>0$. Moreover let $\star_1$ be a $\star$--product for the massless Klein--Gordon field which is induced by a time--translation invariant $\De^+_1$ of Hadamard form and denote by $\kms_{1+Q(1),T_{1+Q(1)}(V(\chi)-Q(\chi))}$ the KMS state for mass $m_Q$ and interaction $T_{1+Q(1)}(V-Q)$ constructed as reviewed in Theorem \ref{th:KMSmassive}. Then the renormalisation freedom of the time--ordered map can be fixed in such a way that, for all $F_1,\ldots,F_n\in\Floc$ with $\cup_i\supp(F_i)\subset \Sigma_\epsilon$
\begin{align*}&\phantom{=}\;\lim_{f\to 1}\kms_{1,V(f)}\left(\molh{1}{V(f)}(F_1)\star_1 \ldots\star_1\molh{1}{V(f)}(F_n)\right)\\
&\doteq \kms_{1+Q(1),T_{1+Q(1)}(V(\chi)-Q(\chi))}\left(\molh{1+Q(1)}{T_{1+Q(1)}(V(\chi)-Q(\chi))}(F_1)\star_{1+Q(1)}\ldots\right.\\
&\qquad\qquad\qquad\qquad\qquad\qquad\left.\ldots\star_{1+Q(1)} \molh{1+Q(1)}{T_{1+Q(1)}(V(\chi)-Q(\chi))}(F_n)\right)
\end{align*}
defines a $\beta$--KMS state for the massless Klein--Gordon field with interaction $V$ which is independent of $\chi$, $\epsilon$ and $m_Q$.
\end{proposition}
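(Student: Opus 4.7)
The plan is to combine the generalised PPA (Theorem 4.1) with the Fredenhagen--Lindner construction (Theorem 5.1) by using the virtual mass $m_Q$ to reduce the massless problem to the massive one, and then pull the resulting state back via the classical M\o ller map. First I would fix the renormalisation freedom of the time--ordered map in accordance with Theorem 3.3 so that $T_{1+Q(f)}=\beta_{1,Q(f)}\circ T_1$ for all multilocal functionals, so that Theorem 4.1 applies and yields the $*$--isomorphism $\mR_{1,Q(f)}:\mA_{1+Q(f),T_{1+Q(f)}(V(f)-Q(f))}\to \mA_{1,V(f)}$. Combined with the usual cut--off localisation $V(f)\rightsquigarrow V(\chi f)$ implemented by unitaries in the free algebra (as recalled before \eqref{eq:isos1_generators}), this gives the chain of isomorphisms in \eqref{eq:isos1}, so the expectation value in the statement equals, via \eqref{eq:isos1_generators}, an expectation value of the massive interacting algebra with interaction $T_{1+Q(f)}(V(\chi f)-Q(\chi f))$.

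Next I would analyse the adiabatic limit $f\to 1$ of the right--hand side. The point is that since $Q(f)$ converges in the relevant sense to a constant mass term, $\De^+_{1+Q(f)}\to \omega^\infty_{m_Q}$, the massive quasifree vacuum two--point function, by Lemma \ref{pr:adiabaticvacuum}; hence $\De^F_{1+Q(f)}$ has a well--defined limit and the renormalised $T_{1+Q(f)}$ is well--defined in the adiabatic limit because the renormalisation was set up to depend analytically on $Q$. In turn $\beta_{1,Q(f)}=\mR_{1,Q(f)}^{-1}\circ \molh{1}{Q(f)}$ converges to the contraction exponential $\alpha_d$ with $d$ the difference of the massive and massless Feynman propagators, interpreted on local functionals as in Proposition 3.2. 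Since $\De^+_1$ and the massive vacuum two--point function are both time--translation invariant, $\beta_{1,Q(1)}$ and $T_{1+Q(1)}$ commute with the free time--translations $\{\alpha_t\}$, so the dynamics transported through \eqref{eq:isos1} to $\mA_{1+Q(1),T_{1+Q(1)}(V(\chi)-Q(\chi))}$ coincides precisely with the massive interacting dynamics on that algebra.

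With the dynamics aligned, I would apply Theorem 5.1 to obtain the interacting $\beta$--KMS state $\kms_{1+Q(1),T_{1+Q(1)}(V(\chi)-Q(\chi))}$ for the massive free theory (whose free KMS states satisfy the integrability hypothesis by part (v) of Theorem 5.1 and the stronger clustering of Lemma \ref{pr:clustering}), and pull it back through the isomorphism chain to a $\beta$--KMS state for $\mA_{1,V(f)}$ in the adiabatic limit. The KMS identity is preserved because the isomorphisms intertwine the two dynamics. Independence of $\chi$ and $\ep$ on the massive side is exactly Proposition 4.1; the pullback is chi--independent because a change of $\chi$ on the massless side gets absorbed into a change of $\chi$ on the massive side together with the intertwining unitaries appearing in \eqref{eq:isos1_generators}, both of which Proposition 4.1 and the construction already handle.

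Finally I would show independence of the virtual mass $m_Q$. Given two choices $m_Q\ne m_{Q'}$, the gPPA yields two chains \eqref{eq:isos1} and one may either view $\mA_{1+Q(1),T_{1+Q(1)}(V(\chi)-Q(\chi))}$ and $\mA_{1+Q'(1),T_{1+Q'(1)}(V(\chi)-Q'(\chi))}$ as intermediate presentations of the same massless interacting algebra $\mA_{1,V(1)}$ in the adiabatic limit, giving a canonical isomorphism $i_{Q,Q'}$ that intertwines the two massive interacting dynamics. Since the Fredenhagen--Lindner state is characterised on the generators by the cocycle formula \eqref{interacting KMS state in the adiabatic limit} in terms of its one--parameter group and a reference KMS state which is quasifree, and since $i_{Q,Q'}$ intertwines these structures, both pulled back states coincide on $\mA_{1,V(1)}$. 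The main obstacle in this program is the control of the adiabatic limit: one needs that the strong clustering property of the massive KMS state (Lemma \ref{pr:clustering}) is preserved through the transport by $\mR_{1,Q(f)}$ and $\beta_{1,Q(f)}$, so that the van Hove limit defining $\kms_{1,V(f)}$ exists; this is the place where the positivity $m_Q>0$ of the virtual mass is essential and where the combination of gPPA with the thermal--mass idea actually pays off.
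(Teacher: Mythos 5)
Your proposal is correct and follows essentially the same route as the paper: fixing the renormalisation via Theorem \ref{th:PPA} so that the gPPA yields the isomorphism chain \eqref{eq:isos1}, controlling the adiabatic limit of $\De^+_{1+Q(f)}$, $T_{1+Q(f)}$ and $\beta_{1,Q(f)}$ via Lemma \ref{pr:adiabaticvacuum}, invoking Theorem \ref{th:KMSmassive} and Proposition \ref{pr:chi_independence} on the massive side, and arguing $m_Q$--independence through the intertwining isomorphism $i_{Q,Q'}$. The only slight misplacement of emphasis is in your closing remark: the van Hove limit and the clustering estimates live entirely on the massive side, and the massless state is then \emph{defined} by the stated equality, so no preservation of clustering under the transport by $\mR_{1,Q(f)}$ and $\beta_{1,Q(f)}$ needs to be established.
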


\begin{remark}We would like to stress that we use the construction above only for the case where the mass of the field for which we would like to construct a KMS state is zero and the virtual mass is positive. In particular we do not make any claims about the case of a tachyonic (imaginary) mass.

Even so one may doubt the above construction because treating a massless field as a massive one with the negative mass square as a perturbation seems inconsistent as it is known that massless and massive (linear) theories have different properties. However Lemma \ref{pr:neumann series} shows that the perturbative expansion of the classical M\o ller map converges for localised mass perturbations of arbitrary sign and magnitude on Minkowski spacetime. This result can presumably be generalised by proving that in a theory on Minkowski spacetime with interaction $W=\mu Q + \lambda V$ with $Q\in\Floc$ quadratic and $V\in\Floc$ arbitrary, expressions which are perturbative in $\mu$ converge at each order in $\lambda$. Finally Lemma \ref{pr:adiabaticvacuum} shows that the ``resummed propagators'' obtained from the abovementioned convergent perturbation series converge in the adiabatic limit as long as both the ``initial'' and ``final'' mass are either vanishing or positive (in four spacetime dimensions), see also \cite{Aste:2007ii} for related and essentially equivalent considerations. Consequently, we may consistently ``add'' or ``remove'' masses perturbatively as long as we avoid the tachyonic regime. In fact, Lemma \ref{pr:adiabaticvacuum} does not hold for tachyons, because the the ``naive tachyonic vacuum'' does not exist due to infrared problems.
\end{remark}

\begin{remark} In the vacuum case $\beta=\infty$ there is no thermal mass, but our construction can be still applied and yields a ground state for zero mass and arbitrary interaction $V$ because we may introduce the virtual mass $m_Q>0$ at will. In this context, we would like to stress that the convexity properties of the total (``free'' + ``interacting'') potential are not altered by the introduction of the virtual mass. For this reason the virtual mass has no effect on stability properties of the theory. Furthermore, in $\lambda \phi^4$ theory in a thermal state, the convexity of the potential is even reinforced by the presence of the thermal mass. 
\end{remark}

\begin{remark} We do not see any obstruction to apply the same construction as above in order to construct KMS states for the Klein--Gordon field with arbitrary mass and arbitrary interaction $V$ on static spacetimes $(\mN,g_\mathrm{us})$, $\mN\subset \bR^4$, perturbatively over Minkowski spacetime under the conditions that a) the metric $g_\mathrm{us}$ differs from the Minkowski metric $g_\mink$ only on a compact set b) $g_\mathrm{us}>g_\mink$, cf. \eqref{def:metriccondition} and c) $g_\mathrm{us}$ and $g_\mink$ have a common time--like Killing vector field. An example would be a compact diamond in the static patch of de Sitter spacetime.
\end{remark}

\section*{Acknowledgments}
The authors would like to thank Klaus Fredenhagen, Stefan Hollands and Jochen Zahn for interesting and helpful discussions. The work of T.-P.H. has been supported by a Research Fellowship of Deutsche Forschungsgemeinschaft (DFG).

\appendix

\section{Factorisation property of the \texorpdfstring{$S$}{S}--matrix}
\label{sec:factorisation}

In this section we would like to show that for $F,G\in\Ftloc$, $F\gtrsim G$ it holds
\begin{gather}\label{factorisation}
S_{F+G+V}=S_{F+V}\star S_V^{-1}\star S_{V+G},
\end{gather}
for any $V\in\Floc$; here the inverse is meant w.r.t. to the $\star$-product.

To prove this factorisation property we need a

\begin{lemma}\label{lemma}
Let $F$, $G\in\Fmuc$ be arbitrary functionals whose functional derivatives $F^{(n)}$ are supported on the diagonal $\sD_n\subset \mM^n$ but do not necessarily satisfy $\WF(F^{(n)})\perp T\sD_n$ and assume further that $\supp(F)\cap\supp(G)$ is a subset of a Cauchy surface $\Sigma$, whereas $\supp(F)\subset J^+(\Sigma)$ and $\supp(G)\subset J^-(\Sigma)$. In this case $F\T G$ is given by $F\star G+R_\Sigma(F,G)$ where $R_\Sigma(F,G)$ is supported on $\Sigma$ but not uniquely defined.
\end{lemma}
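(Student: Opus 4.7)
The obstacle to applying causal factorisation directly is that $\supp F\cap\supp G$ may be a non-empty subset of $\Sigma$, in which case the strict hypothesis $\supp F\cap J^-(\supp G)=\emptyset$ of the causal factorisation axiom fails. My plan is to use a partition of unity to separate the offending near-$\Sigma$ pieces from the rest, apply strict causal factorisation to the well separated pieces, and isolate the discrepancy as a contribution localised on $\Sigma$.

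First I would choose auxiliary Cauchy surfaces $\Sigma_+,\Sigma_-$ with $\Sigma_+\subset J^+(\Sigma)\setminus\Sigma$ and $\Sigma_-\subset J^-(\Sigma)\setminus\Sigma$, together with a smooth function $\chi$ on $\mM$ that vanishes in a neighbourhood of $\Sigma$ but equals one outside the slab bounded by $\Sigma_\pm$. Since $F^{(n)}$ is compactly supported on $\sD_n$, the product $\chi^{\otimes n}F^{(n)}$ remains compactly supported on $\sD_n$, and reconstructing a functional from this modified sequence of functional derivatives yields a splitting $F=F_++F_0$ with $\supp F_+\subset J^+(\Sigma_+)$ (using the assumption $\supp F\subset J^+(\Sigma)$) and $\supp F_0$ contained in the slab between $\Sigma_-$ and $\Sigma_+$. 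An analogous decomposition $G=G_-+G_0$ places $\supp G_-\subset J^-(\Sigma_-)$ and $\supp G_0$ in the same slab. I would then expand bilinearly and apply strict causal factorisation, now valid because $\Sigma_+$ and $\Sigma_-$ cleanly separate the respective supports, to the three pairs $(F_+,G_-)$, $(F_+,G_0)$, $(F_0,G_-)$, obtaining
\[
F\T G - F\star G = F_0\T G_0 - F_0\star G_0.
\]
The right-hand side is supported in $\supp F_0\cup\supp G_0$, a set that can be made contained in any prescribed open neighbourhood of $\Sigma$ by taking $\Sigma_\pm$ close enough to $\Sigma$. Since the left-hand side is independent of the splitting, $\supp R_\Sigma(F,G)$ must lie in the intersection of the closures of all such neighbourhoods, which equals $\Sigma$.

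The non-uniqueness of $R_\Sigma$ reflects the renormalisation freedom of the time-ordered map $T$ on coincident points: because $\supp F_0$ and $\supp G_0$ overlap on $\Sigma$, the value of $F_0\T G_0$ depends on the choice of extension of $T$ to the diagonal, and different admissible choices yield different but $\Sigma$-supported values of $R_\Sigma$. The main obstacle I anticipate is realising the functional-level splitting $F\mapsto(F_+,F_0)$ rigorously when $\WF(F^{(n)})$ is not perpendicular to $T\sD_n$; however, since multiplication by a smooth compactly supported function preserves microcausality by standard wave front set estimates, and since in the factorisation of the $S$-matrix $F$ and $G$ are typically built from time-ordered products of local functionals that can be split factor by factor, this technicality should be handled without altering the strategy.
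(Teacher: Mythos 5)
Your proof is correct in substance but takes a genuinely different route from the paper. The paper works directly with the explicit contraction formula for $F\T G$ in terms of $\left(\De^F\right)^{\otimes n}\left(F^{(n)},G^{(n)}\right)$: since $\De^F=\De^++i\De^A$ and the advanced part forces $x\in J^-(y)$, the support hypotheses confine the $\De^A$--contribution to the diagonal of $\Sigma\times\Sigma$, so off that set the time--ordered contraction literally coincides with the $\star$--contraction, and the only ambiguity is the extension of the integrand to the diagonal, which is the concrete identification of $R_\Sigma$. Your argument instead splits $F=F_++F_0$, $G=G_-+G_0$ by a cutoff near $\Sigma$, uses bilinearity and strict causal factorisation on the three well--separated pairs to reduce everything to $F_0\T G_0-F_0\star G_0$, and then shrinks the slab; this is cleaner and works for any bilinear, causally factorising extension of $\T$, but it is less informative about what $R_\Sigma$ actually is, and it tacitly assumes such an extension exists and is bilinear on these singular functionals, which is precisely what the paper's explicit computation exhibits. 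One technical point to repair: the sequence $\chi^{\otimes n}F^{(n)}$ is in general \emph{not} the derivative sequence of any functional (e.g.\ for $F=\int f\phi^2$ one gets $\chi^2 f\delta$ at second order but $\chi f\phi$ at first order, violating the compatibility $\left(F_+^{(1)}\right)^{(1)}=F_+^{(2)}$), so the splitting should instead be performed on the distributional densities living on the diagonal, using the additivity of functionals whose derivatives are supported on $\sD_n$; this yields the desired $F_\pm$, $F_0$ with the stated support properties and leaves the rest of your argument intact.
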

\begin{proof}
By a direct computation we get up to renormalisation
\begin{gather}
F\T G =F\cdot G+
\ldots+
\frac{\hbar^n}{n!}\left(\Delta^F\right)^{\otimes n}\left(F^{(n)},G^{(n)}\right)+\ldots
\end{gather}
We first consider $n=1$: in this case we have
\begin{gather}\label{eq:Blemma1}
\Delta^F\left(F^{(1)},G^{(1)}\right)=\int_{\mM^2} F^{(1)}(x)\Delta^F(x,y)G^{(1)}(y)\,d\mu_g(x)d\mu_g(y)\,.
\end{gather}
The above integral is supported on $x\gtrsim y$ and the diagonal $x=y$. In the latter case it is actually ill--defined in general, while in the former case its value is
$$
\int_{x\in J^+(y)\setminus\{y\}} F^{(1)}(x)\Delta^+(x,y)G^{(1)}(y)\,d\mu_g(x)d\mu_g(y)\,,
$$
because $\Sigma$ is a Cauchy surface and the singularities of $F^{(1)}(x)$ and $G^{(1)}(y)$ are at most space--like so that the pointwise product in the integrand is a well--defined distribution of compact support. The diagonal contribution of \eqref{eq:Blemma1} may be evaluated by extending the distribution $F^{(1)}(x)\Delta^F(x,y)G^{(1)}(y)$ to the diagonal by a direct application of the theorems for extensions of distributions. Altogether we find
\begin{gather}
\Delta^F\left(F^{(1)},G^{(1)}\right)=
\Delta^+\left(F^{(1)},G^{(1)}\right)+
R_\Sigma^1(F,G),
\end{gather}
where the last summand is supported on $\Sigma$. We may proceed similarly in the case $n>1$ and find that $R_\Sigma(F,G)\doteq\sum_n R^n_\Sigma(F,G)$ is of the stated form.
\end{proof}

We can now prove the factorisation property of the $S$-matrix.
\begin{proposition}
For any $F,G,V\in\Floc$ such that $F\gtrsim G$, \eqref{factorisation} holds.
\end{proposition}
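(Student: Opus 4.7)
The plan is to reduce the factorisation identity to the causal factorisation axiom of the time--ordered product, making essential use of the Lemma just proved above to handle the boundary contributions arising when one localises $V$ with respect to a Cauchy surface separating $F$ and $G$.

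First I would exploit the multiplicative structure of the time--ordered exponential. By the multinomial theorem applied to $(W_1+W_2)^{\T n}$, together with the commutativity and associativity of $\T$, one obtains $S_{W_1+W_2}=S_{W_1}\T S_{W_2}$ for any $W_1,W_2\in\Floc$. This rewrites
\[
S_{F+G+V}=S_V\T S_F\T S_G,\qquad S_{F+V}=S_V\T S_F,\qquad S_{G+V}=S_V\T S_G,
\]
reducing the target identity to
\[
S_V\T S_F\T S_G \;=\; (S_V\T S_F)\star S_V^{-1}\star (S_V\T S_G).
\]
Moreover, since $\supp(S_F)\subset\supp(F)$, $\supp(S_G)\subset\supp(G)$ and $F\gtrsim G$, the defining property \eqref{def:time ordered product} of $\T$ gives $S_F\T S_G=S_F\star S_G$ immediately.

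Next I would introduce a Cauchy surface $\Sigma$ separating $\supp(F)$ from $\supp(G)$ and decompose $V=V_++V_-$ at the level of its test function by multiplying with a partition of unity $\chi_++\chi_-=1$ chosen so that $\supp(V_+)\subset J^+(\Sigma)$, $\supp(V_-)\subset J^-(\Sigma)$ and $\supp(V_+)\cap\supp(V_-)\subset\Sigma$. Even though $\chi_\pm$ may be taken with sharp jumps on $\Sigma$, the resulting $V_\pm$ still have functional derivatives supported on the diagonal and thus fall within the hypotheses of the Lemma above (which explicitly does not require $\WF(V_\pm^{(n)})\perp T\sD_n$). Writing $S_V=S_{V_+}\T S_{V_-}$ and reorganising,
\[
S_V\T S_F\T S_G=(S_{V_+}\T S_F)\T(S_{V_-}\T S_G),
\]
one has $\supp(S_{V_+}\T S_F)\subset J^+(\Sigma)$ and $\supp(S_{V_-}\T S_G)\subset J^-(\Sigma)$ with supports touching at most on $\Sigma$, so the Lemma, applied order by order in the formal power series in $V,F,G$, yields
\[
(S_{V_+}\T S_F)\T(S_{V_-}\T S_G)=(S_{V_+}\T S_F)\star(S_{V_-}\T S_G)+R_\Sigma,
\]
where $R_\Sigma$ collects the boundary contributions supported on $\Sigma$. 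A parallel computation of the right--hand side using the same decomposition produces the same principal term $(S_{V_+}\T S_F)\star(S_{V_-}\T S_G)$ plus analogous boundary contributions arising from contractions of $V_+$ with $V_-$ across $\Sigma$ hidden inside $S_V$ and $S_V^{-1}$.

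The hard part will be to verify that the boundary terms $R_\Sigma$ produced on the two sides coincide exactly. The structural reason one expects this is that the $\star$--factor $S_V^{-1}$ appearing on the right--hand side is precisely what compensates the spurious contractions between $V_+$ and $V_-$ factors across $\Sigma$ that would otherwise spoil the identity: the contractions between $V_+$ and $V_-$ factors on the left--hand side which come from turning outer $\T$--products into $\star$--products along $\Sigma$ are exactly the contractions generated by the $S_V^{-1}$ on the right when one inserts the splitting $V=V_++V_-$. Making this cancellation explicit requires a careful combinatorial bookkeeping of all $\T$-- and $\star$--contractions at each perturbative order, and once it is carried out one also obtains independence of the identity from the choice of partition of unity, completing the proof.
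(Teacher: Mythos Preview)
Your setup---splitting $V=V_++V_-$ across a Cauchy surface $\Sigma$ and invoking the Lemma to control the boundary contributions---is the same as the paper's. But you stop short of the key closing step and instead propose ``careful combinatorial bookkeeping'' of all contractions, which you yourself flag as the hard part and do not actually carry out. This is unnecessary and would be painful.

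The paper avoids this entirely by a variational argument. Once you have shown that
\[
S_{F+G+V}=S_{F+V}\star S_V^{-1}\star S_{V+G}+\widetilde{\widetilde{R}}_\Sigma
\]
with $\widetilde{\widetilde{R}}_\Sigma$ supported on $\Sigma$, you are essentially done: the left--hand side is manifestly independent of $\Sigma$, and so is the first term on the right (neither involves the splitting $V_\pm$). Hence $\widetilde{\widetilde{R}}_\Sigma$ is independent of $\Sigma$ as well. But since $F\gtrsim G$ strictly, $\Sigma$ can be moved freely in an open time--slab between $\supp(F)$ and $\supp(G)$, and a functional supported on a single Cauchy surface that does not change as that surface is moved must vanish. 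The same reasoning applies verbatim to your formulation: both $S_V\T S_F\T S_G$ and $(S_V\T S_F)\star S_V^{-1}\star(S_V\T S_G)$ are $\Sigma$--independent, so their difference, which by your computation is supported on $\Sigma$, is forced to be zero. No bookkeeping of individual contractions is needed.
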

\begin{proof}
Let be $\Sigma$ a Cauchy surface such that $\supp(F)\subset J^+(\Sigma)\setminus\Sigma$ and $\supp(G)\subset J^-(\Sigma)\setminus\Sigma$.
Decompose $V=V_++V_-$, with $V_\pm$ being such that $\supp(V_\pm)\subset J^\pm(\Sigma)$.
We now recall 
$$
S_{A+B}=S_A\star S_B\qquad\textrm{if }A\gtrsim B.
$$
Using this, we have
\begin{eqnarray*}
S_{F+V+G}&=&
S_{F+V_+}\star S_{V_-+G}+R_\Sigma\\ &=&
S_{F+V_+}\star S_{V_-}\star S_{V_-}^{-1}\star
S_{V_+}^{-1}\star S_{V_+}\star S_{V_-+G}+R_\Sigma \\&=&
S_{F+V}\star S_{V_-}^{-1}\star
S_{V_+}^{-1}\star S_{V+G}+\widetilde{R}_\Sigma\\ &=&
S_{F+V}\star S_{V}^{-1}\star S_{V+G}+\widetilde{\widetilde{R}}_\Sigma.
\end{eqnarray*}
where we used the above Lemma (\ref{lemma}) and $R_\Sigma$, $\widetilde{R}_\Sigma$ and $\widetilde{\widetilde{R}}_\Sigma$ are contributions supported on $\Sigma$ which are a priori not uniquely defined. However, since $\Sigma$ is to a large extent arbitrary, and the left hand side is independent of $\Sigma$, the only possible definition is $\widetilde{\widetilde{R}}_\Sigma=0$.
\end{proof}

\section{Convergence of the Neumann series of the classical M\o ller map on Minkowski spacetime}
\label{sec:convergence}
\begin{lemma}\label{pr:neumann series}
Suppose that $(\mM,g_1)$ is Minkowski spacetime, consider a Klein--Gordon field with action $\cS_1$ of the form \eqref{def:quadraticactions} with $A_1=0$, $j_1=0$, and $Q\in\Floc$ of the form \eqref{quadratic potential} with $G=0$, $A=0$ and $j=0$. Then 
 the Neumann series in \eqref{eq:Neumann series} converges to $\qcmoller$ in the topology of $\mE(\mM)$.
\end{lemma}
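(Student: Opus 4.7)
The plan is to estimate $r^n\phi$ in the $C^k$-seminorms on compact subsets of $\mM$ by quantities decaying like $C^n/n!$, so that the Neumann series converges absolutely in the Fréchet topology of $\mE(\mM)$; the identification of the limit with $\qcmoller\phi$ then follows from the continuity of $\Id - r$ on $\mE(\mM)$ together with the uniqueness established in Proposition \ref{pr:rq-inverse}. First I would note that in the present Minkowski setting $\q$ is multiplication by the compactly supported smooth function $M$, so that $r = -\De^R_1 M$ is of Volterra type: the iterates $u_n \doteq r^n\phi$ vanish to the past of $K \doteq \supp M$ for $n \geq 1$ and satisfy the forward wave problem $\Box u_n = Mu_{n-1}$ with trivial Cauchy data in the past of $K$.

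The core step is an iterated energy estimate. Fixing an inertial time $t$, setting $T_- \doteq \inf\{t(y) : y \in K\}$ and applying the standard $H^s$ energy inequality for the forward wave equation on Minkowski, one obtains for every integer $s \geq 0$ and every $n \geq 1$
\begin{equation*}
\|u_n(t,\cdot)\|_{H^s(\bR^3)} \leq \int_{T_-}^{t}\|(Mu_{n-1})(\tau,\cdot)\|_{H^s(\bR^3)}\,d\tau \leq C_{M,s}\int_{T_-}^{t}\|u_{n-1}(\tau,\cdot)\|_{H^s(L_\tau)}\,d\tau,
\end{equation*}
where $L_\tau$ is a fixed compact spatial region containing the spatial projection of $K$ and of $\supp u_{n-1}(\tau,\cdot)$. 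Iterating $n$ times and invoking the smoothness of $\phi$ on the compact set $[T_-,T_+]\times L_{T_+}$ for the base case yields the factorial bound
\begin{equation*}
\|u_n(t,\cdot)\|_{H^s(\bR^3)} \leq \frac{\bigl(C_{M,s}(t-T_-)\bigr)^n}{n!}\,\|\phi\|_{C^s([T_-,T_+]\times L_{T_+})}.
\end{equation*}
A Sobolev embedding on $\bR^3$ converts these estimates into $C^k$-bounds on any compact $\Omega \subset \mM$ once $s > k + 3/2$, and time derivatives of $u_n$ are treated by inductively commuting $\partial_t$ with $\Box u_n = Mu_{n-1}$; summing the resulting bounds over $n$ gives absolute convergence in every seminorm of $\mE(\mM)$.

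The main obstacle I expect is the uniform-in-$n$ bookkeeping required to pass from $H^s$ to $C^k$ estimates: at each induction step the source $Mu_{n-1}$ must remain controlled in a Sobolev class, and the auxiliary compact regions $L_\tau$ must be chosen large enough to contain $\supp u_{n-1}(\tau,\cdot)$, which grows linearly in $\tau - T_-$ by finite propagation speed. Both issues are purely technical and can be handled on any bounded time interval. A conceptually cleaner alternative, which bypasses Sobolev embeddings altogether, is to express $u_n$ as an $n$-fold iterated retarded integral and estimate it directly, exploiting that on Minkowski $\De^R_1$ is an explicit delta-measure on the forward light cone with an integrable $1/r$ spatial profile: the factorial $1/n!$ then drops out from the volume of the time-ordered simplex $\{T_- \leq y_n^0 \leq \cdots \leq y_1^0 \leq t\}$ restricted to $K^n$.
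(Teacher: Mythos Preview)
Your argument is correct. Both you and the paper obtain the $1/n!$ from the volume of the time-ordered simplex and identify the limit via Proposition \ref{pr:rq-inverse}, but the routes differ.

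The paper works in the spatial Fourier picture: it bounds $\|\widehat{r^n\phi}(t,\cdot)\|_{L^1(d^3\mathbf{k})}$ using the explicit retarded kernel $\theta(t-s)\sin(\omega(\mathbf{k})(t-s))/\omega(\mathbf{k})$ together with Young's inequality for the convolution $\widehat{M\psi}=\widehat{M}*\widehat{\psi}$, and then controls sup-norms (and derivatives) via $\|\psi\|_\infty\lesssim\|\widehat\psi\|_1$ and multiplication by polynomials in $\mathbf{k}$. Your energy-estimate/Sobolev-embedding approach is more robust---it would survive on curved backgrounds where no explicit Fourier representation is available---at the price of the bookkeeping you mention. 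One simplification worth noting: since $M$ has compact support, $Mu_{n-1}$ is supported in a \emph{fixed} spatial region at every time and for every $n$, so the growing-support issue you flag as the main obstacle does not actually arise; only the base case $u_0=\phi\notin H^s(\bR^3)$ forces a localisation, and there $\|M\phi\|_{H^s}\leq C\|\phi\|_{C^s(K)}$ suffices. Finally, your equation should read $P_1 u_n=-Mu_{n-1}$ rather than $\Box u_n=Mu_{n-1}$, since the free action may carry a mass $M_1$; the energy inequality you invoke holds equally for Klein--Gordon, so this is only a notational slip.
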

\begin{proof}
It is sufficient to prove that, for an arbitrary $\phi\in\mE(\mM)$ and denoting by $r^n$ the $n$--fold composition of $r$ \eqref{eq:Neumann series}, the series $\sum_n r^n(\phi)$ converges uniformly to $\qcmoller(\phi)$ in some generic compact set $O\subset{\mM}$ and that the same happens to its derivatives.

To this end, let us indicate by $K$ the support of $\q(\phi)(x)=M(x)\phi(x)$. We use standard Minkowski coordinates $x=(t,{\bf x})$ and restrict our attention to a compact region  $I\times N$ which contains $K$ and the compact set $O$ where we want to ensure convergence of the series $\sum_n r^n(\phi)$. In more detail, $I=[t_0,t_1]$ is an interval of time and $N$ is a compact region of a spacelike Cauchy hypersurface $\Sigma=\mathbb{R}^3$, which is chosen in such a way that $J^+(K)\cap(I\times \Sigma)$ is properly contained in $I\times N$.
For any spacelike--compact smooth function $\psi$ and its spatial Fourier transform $\widehat{\psi}(t,{\bf k})$ we introduce the norm 
\[
\|\widehat{\psi}\|_{1,\infty} =\sup_{t\in I} \|\widehat{\psi}(t,\cdot)\|_1.
\]

Let us analyse $r(\phi)$ for an arbitrary, but fixed smooth function $\phi$. First of all notice that $r(\phi)=r(\chi \phi)$ for all spacelike--compact smooth functions $\chi$ which equal $1$ on $J^+(K)$. Later on, the function $\chi$ is needed in order to ensure the existence of the spatial Fourier transform of $\chi\phi$. Observe that
\[
\widehat{r(\phi)}(t,{\bf k}) = c_0 \int_{t_0}^t ds \;\theta(t-s)\frac{\sin(\omega({\bf k})(t-s))}{\omega({\bf k})} \widehat{M \chi\phi}(s,{\bf k}) 
\]
where $c_0$ is a suitable power of $2\pi$, $\theta$ is the Heaviside step function and $\omega({\bf k})={\bf k}^2+M_1$ is the spectral function associated with the background equation of motion. Hence
\[
\|\widehat{r(\phi)}(t,\cdot)\|_{1} \leq c_0\int d^3{\bf k} \int_I ds \;\theta(t-s)\left|\frac{\sin(\omega({\bf k})(t-s))}{\omega({\bf k})} \right|   \left|\widehat{ M \chi\phi}(s,{\bf k})\right|.
\]
Since the function $\omega({\bf k})$ is real and positive
\[
\|\widehat{r(\phi)}(t,\cdot)\|_{1}
\leq  c_1
(t_1-t_0) \int_{t_0}^t ds \|\widehat{ M} \|_{1,\infty}    \|  \widehat{\chi\phi}(s,\cdot)\|_{1}\,,
\]
where $c_1$ is a suitable power of $2\pi$.
In a similar way, one can show that for every $n>0$
\[
\|\widehat{r^n (\phi)}\|_{1,\infty}
\leq 
c_n\frac{(t_1-t_0)^{2n}  \|\widehat{ M} \|_{1,\infty}^n}{n!}  \|  \widehat{\chi\phi}\|_{1,\infty},
\]
where the factor $n!$ arises because of an integration over an $n-$dimensional simplex and $c_n$ is a suitable power of $2\pi$.
Since for every spacelike compact smooth function $\|\psi\|_\infty$ is controlled by the norm $\|\hat\psi\|_{1,\infty}$, which in turn is controlled by the seminorms defining the topology of $\mE(\mM)$, we find that 
 the series $\sum_{n\geq 1} r^n(\phi)$ converges uniformly. Following a similar path, we can prove that also the derivatives of 
 $\sum_{n\geq 1} r^n(\phi)$ converge uniformly, concluding the proof.
\end{proof}
\section{Clustering properties of free massive KMS states in Minkowski spacetime}
\label{sec:clustering}

\begin{lemma}\label{pr:clustering} We denote by $[A]_g^{(m)}$ the $m$--th perturbation order of $\mR^\hbar_{V(g)}(A)$ and use the notation of Section \ref{sec:KMS}. The connected correlation functions $\omega_c^\beta$ of the $\beta$--KMS state of the Klein--Gordon field on Minkowski spacetime with mass $m>0$ satisfies the following property. For all $A_i \in \mA(\Sigma_{2\epsilon})$ and all $\bm\in\bN^k$, the function
$$
F^{\bm}_{k}(u_1,\bx_1,\dots, u_k,\bx_k)(g_0,\dots, g_k) \doteq 
\omega_c^\beta\left([A_0]_{g_0\chi}^{(m_0)}\star \alpha_{u_1,\bx_1}\left([A_1]_{g_1\chi}^{(m_1)}\right) \star \ldots \star \alpha_{u_n,\bx_k}\left([A_k]^{(m_k)}_{g_k\chi}\right) \right)
$$
is uniformly bounded by an absolutely integrable function for all $g_i$ in some bounded set of $\mD(\mM)$.
\end{lemma}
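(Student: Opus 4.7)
The strategy is to reduce the bound on $F^{\bm}_{k}$ to bounds on connected Wick graphs with propagators given by the free two--point function $\omega^\beta_2$ of the massive KMS state, and then to exploit the exponential spatial decay of $\omega^\beta_2$ along a spanning tree of each such graph.

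First I would unfold $[A_i]_{g_i\chi}^{(m_i)}$ as an explicit finite iterated $\star$--product/ $\cdot_T$--product of $A_i$ with $m_i$ copies of $V(g_i\chi)$. The result is a polynomial in $\phi$ smeared against a finite list of distributions whose supports sit inside a compact set $K_i\subset\Sigma_{2\epsilon}$ depending only on $\supp(A_i)$, $\supp(\chi)$ and a fixed compact set containing the supports of all admissible $g_i$, and whose relevant norms are uniformly bounded when $g_i$ ranges in a bounded subset of $\mD(\mM)$. Thus each factor in $F^{\bm}_{k}$, after applying the translation $\alpha_{u_i,\bx_i}$, is a polynomial of bounded degree $N$ in $\phi$ smeared against distributions whose supports have been shifted by $(u_i,\bx_i)$ and are contained in $\alpha_{u_i,\bx_i}(K_i)$, with uniformly bounded smearing norms.

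Next I would apply the standard connected/truncated Wick expansion for quasifree KMS states to
$$\omega_c^\beta\!\left(B_0\star\alpha_{u_1,\bx_1}(B_1)\star\cdots\star\alpha_{u_k,\bx_k}(B_k)\right),$$
where each $B_i$ is the polynomial just described. Because $\omega^\beta$ is quasifree, the truncated function is a finite sum, indexed by \emph{connected} Wick graphs $G$ on the vertex set $\{0,\dots,k\}$ with edges carrying $\omega^\beta_2$, of integrals over the internal variables of the $B_i$ of products of two--point functions $\omega^\beta_2(y_a-y_b)$, where $y_a$ is of the form $(u_i,\bx_i)+z_a$ with $z_a$ in $K_i$. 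The total number of graphs and the number of edges per graph are bounded in terms of $k$ and $N$ only.

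For each such connected graph $G$, I would choose a spanning tree $T\subset G$ on the vertex set $\{0,\dots,k\}$ and use the following exponential clustering estimate for the free massive KMS two--point function: there exist constants $c,\mu>0$ (depending on $m$, $\beta$ and the compact time--interval into which $u_i$ and the $z_a$--times are confined) such that
$$
\bigl|\omega^\beta_2\!\bigl((t,\mathbf{p}),(s,\mathbf{q})\bigr)\bigr|\ \le\ c\, e^{-\mu|\mathbf{p}-\mathbf{q}|}
$$
uniformly for $t-s$ in any fixed strip around the real axis of width less than $\beta$ and in particular on the contour $u_i\in [0,\beta]$ used in Theorem \ref{th:KMSmassive}(iv). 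This estimate is standard from the momentum representation of $\omega^\beta_2$; the massive gap ensures the bound, and the strict inequality $0<u_i<\beta$ keeps us away from the KMS branch cuts. Bounding the remaining non--tree propagators uniformly by a constant (for bounded time arguments) and carrying out the $z_a$ integrations against the compactly supported and uniformly $L^1$--bounded smearing densities, one obtains a pointwise estimate
$$
|F^{\bm}_{k}(u_1,\bx_1,\dots,u_k,\bx_k)(g_0,\dots,g_k)|\ \le\ C \sum_{T} \prod_{\{i,j\}\in T} e^{-\mu'|\bx_i-\bx_j|}
$$
with constants $C,\mu'>0$ depending only on $k,N,\bm,\beta,\epsilon,\chi$ and the chosen bounded set of test functions, and with the sum running over a finite collection of spanning trees. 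Since each spanning tree is connected on $\{0,\dots,k\}$, a product of $k$ such exponentials is integrable in $(\bx_1,\dots,\bx_k)\in\mathbb{R}^{3k}$ (translating so that $\bx_0=0$), and integration over $u_i\in[0,\beta]$ is bounded. This gives the required uniform integrable majorant.

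The main technical obstacle is establishing the exponential clustering of $\omega^\beta_2$ uniformly for all time arguments appearing in the truncated expectation, in particular along the imaginary--time contour of Theorem \ref{th:KMSmassive}(iv). This is where the massive gap $m>0$ is essential: one has to move the momentum contour by an amount controlled by the spatial separation, and the condition $0<\operatorname{Im} u_i<\beta$ together with the mass shell staying off the origin ensures that the deformation remains inside the analyticity domain of $(1-e^{-\beta\omega_k})^{-1}$ and $(e^{\beta\omega_k}-1)^{-1}$, yielding the exponential decay rate $\mu$ strictly positive. Every other step is routine once this clustering bound is in hand.
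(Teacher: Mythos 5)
Your overall strategy (connected Wick expansion of $\omega^\beta_c$, spanning tree, exponential decay from the mass gap) is the right intuition, but it essentially re-derives Theorem 3 of Fredenhagen--Lindner, which the paper's proof takes as given; the genuinely new content of this lemma is the \emph{uniformity in the $g_i$}, which the paper obtains by factoring the $g_i$--dependence out of the vertex distribution multilinearly, $\Psi=\langle t, g_1^{\otimes m_1}\otimes\cdots\otimes g_n^{\otimes m_n}\rangle$ with $t$ a compactly supported distribution independent of the $g_i$, applying the Paley--Wiener--Schwartz theorem to $t$ alone, and letting the $g_i$ enter only through finitely many $\sup$--norms of their derivatives, which are bounded on bounded subsets of $\mD(\mM)$. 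You address this point only with the phrase ``uniformly bounded smearing norms'', which does not yet identify the mechanism.

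The genuine gap is in your key estimate. The pointwise bound $|\omega^\beta_2((t,\bx),(s,\by))|\le c\,e^{-\mu|\bx-\by|}$ cannot hold uniformly on the configurations that actually occur: the simplex $\beta\cS_n$ is closed, so the imaginary time differences $u_i-u_j$ degenerate to $0$ (and to $\pm\beta$) on its boundary, and there the two--point function has its Hadamard/light--cone singularity whenever the translated supports $\alpha_{u_i,\bx_i}(K_i)$ overlap; it is a distribution, not a locally bounded function. For the same reason you cannot ``bound the remaining non--tree propagators uniformly by a constant'' and then ``carry out the $z_a$ integrations against \dots $L^1$--bounded smearing densities'': the vertex kernels are functional derivatives of Wick polynomials, i.e.\ compactly supported \emph{distributions} concentrated on diagonals, and the pairing of a product of propagators with them is well defined only by wave--front--set arguments, not as an absolutely convergent integral --- e.g.\ multiple edges between two vertices produce powers $(\omega^\beta_2)^n$ whose kernels are not locally integrable, so taking absolute values inside the integral destroys the argument. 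This is precisely why the paper (following Fredenhagen--Lindner) works instead with the Fourier--Laplace transform of the compactly supported vertex distribution evaluated at the complex momenta $p_0=i\sqrt{\bp^2+m^2}$: the Paley--Wiener--Schwartz bound $C(1+\|P\|_1)^N e^{R\|\mathrm{Im}(P)\|_1}$ supplies the polynomial control that replaces your (unavailable) pointwise bounds, while the mass gap still yields the exponential spatial decay. Your plan would need to be reformulated in this smeared, momentum--space form to close.
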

\begin{proof}
The authors of \cite{FredenhagenLindner} have already shown, that, for fixed $g_i$, $F^{\bm}_{k}$ satisfies the strong clustering property, namely that $F^{\bm}_{k}$ decays exponentially for large $\bx_i$
\[
|F^{\bm}_{k}(u_1,\bx_1,\dots, u_k,\bx_k)| \leq c e^{-\frac{m}{\sqrt{k}}r},
\qquad
r = \sqrt{\sum_{i=1}^k\left( u_i^2+|\bx_i|^2 \right)}.
\]
The point which is left open is to control how this estimate (and in particular the constant $c$), depends on the functions $g_i$.

With this in mind we follow the proof of \cite[Theorem 3]{FredenhagenLindner} and track the appearance of the $g_i$. To this avail, we consider an arbitrary oriented and connected graph $G$ joining $n+1$ vertices. We denote by $E(G)$ the edges of this graph, and for any $l\in E(G)$ we denote by $s(l)$ and $r(l)$ the source respectively range of this edge. Denoting by $X\doteq(x_0,\ldots,x_n)$ the vertices of $G$, we set 
\begin{equation}\label{eq:def-psi}
\Psi(X,Y) = \left. \prod_{l\in E(G)}\frac{\delta^2}{\delta\phi_{s(l)}(x_l)   \delta \phi_{r(l)}(y_l) } (B_0\otimes \dots\otimes B_n) \right|_{(\phi_0,\dots,\phi_n)=0},
\end{equation}
and we have to analyse how this quantity depends on $g_i$, where $B_i\doteq[A_i]^{(m_i)}_{g_i}$ and $A_i$ are supported in a neighbourhood of the origin. We notice that $A_i$ is of compact support and $\chi$ is also of compact support in time, hence, the support of 
$B_i$ is contained in $(\supp \chi \cap J^-(\supp A_i))$ and thus compact and contained in a ball of sufficiently large radius $R$. As a consequence of this fact, $\Psi(X,Y)$ is a distribution of compact support. This quantity is of interest because $F^{\bm}_{k}(u_1,\bx_1,\dots, u_k,\bx_k)(g_0,\dots, g_k)$ is a sum of terms of the form 
$$\int dX\,dY\,\prod_{l\in E(G)}\Delta^{+,\beta}(\overline{x_l}-\overline{y_l})\Psi(X,Y)$$
with $\overline{x}_l \doteq (x^l_0 + iu_{s(l)}, \bx_l + \bz_{s(l)})$, $\overline{y}_l \doteq (y^l_0 + iu_{r(l)}, \by_l + \bz_{r(l)})$ and $\Delta^{+,\beta}$ denoting the two--point function of $\omega^\beta$.

In the proof of \cite[Theorem 3]{FredenhagenLindner}, the need of estimating the Fourier transform
$\widehat\Psi(-P,P)$ for $P=((p^{0}_0,\bp^0),\dots, (p^{n}_0,\bp^n))$ with $p_0^k = i \sqrt{{\bp^k}^2+m^2}$ arises. In particular, since the original $\Psi$ is a distribution of compact support, the Paley--Wiener--Schwartz theorem permits to control the exponential growth of $\widehat\Psi(-P,P)$ for $P$ of the above--mentioned form and large values of $\bp^l$ in terms of $R$. We shall now analyse how these estimates depend on the various  $g_i$. 

We start by observing that the perturbation potential $V(g_i\chi)$ depends linearly on $g_i\chi$. We can thus write
\[
B_i=[A_i]^{(m_i)}_{g_i\chi} =  \int_{\mM^{m_i}} dz_1 \dots dz_{m_i}\,  g_i (z_1)\dots  g_i(z_{m_0}) G(z_1,\dots z_{m_i})
\]
where $G$ is a distribution of compact support with values in $\mA(\mO)$, $\mO$ being a neighbourhood of the origin. This implies that the coupling functions $g_i$ appear in $\Psi$ as
\[
\Psi(X,Y)  =  \left\langle t (X,Y,Z),  \underbrace{g_1(z_{1,1})\otimes \ldots \otimes g_1(z_{1,{m_1}})}_{m_1 \text{ times}} \otimes    \ldots \otimes \underbrace{g_n(z_{n,1})\otimes \ldots \otimes g_n(z_{n,{m_k}})}_{m_n \text{ times}}     \right\rangle 
\]
where we used the compact notation $Z\doteq(z_{1,1},\ldots,z_{n,m_n})$. Here, $t$ is suitable distribution of compact support in $X$, $Y$ and $Z$ with values in $\mA(\mO)$ which does not depend on the $g_i$. We may thus readily apply the Paley--Wiener--Schwartz theorem to obtain that, the Fourier--Laplace transform $\widehat{t}$ of $t$ is an entire analytic function which satisfies
\begin{equation}\label{eq:pw}
|\widehat{t} (P_1,P_2,K)| \leq C(1+\|P_1\|_1+\|P_2\|_1+\|K\|_1)^N e^{R \|\text{Im}(P_1)\|_1}e^{R \|\text{Im}(P_2)\|_1}e^{R \|\text{Im}(K)\|_1}
\end{equation}
where $R$ is the radius of the ball $B_R$ centered in the origin which contains the support of $t$ and $C$ and $N$ are suitable constants. 

From this estimate we can now obtain a similar estimate for $\Psi$, namely,
\[
|\widehat{\Psi} (-P,P)| =     \int_{\bR^{4w}}  dK  \left|\widehat{t} (-P,P,K)\right|    \widehat{g}_i(k_1) \dots \widehat{g}_n(k_{w}) 
\]
where $w=\sum_i m_i$ is the dimension of $Z$ and hence of $K$.
Since $N$ in \eqref{eq:pw} is bounded, since the integral is performed over the real numbers and since $g_i$ are smooth compactly supported functions, the integral can be easily performed to get 
\[
|\widehat{\Psi} (-P,P)| \leq 
C(1+\|P\|_1)^N e^{2R \|\text{Im}(P)\|_1}  \prod_i \sum_{|\alpha_i| \leq M} \|  D^{\alpha_i} g_i\|_\infty^{m_i}.
\]
Where now the constant $C$ does not depend on the $g_i$ anymore and $M$ is further suitable finite constant.
Using this more refined estimate in the proof of \cite[Theorem 3]{FredenhagenLindner}, the desired uniform bound can be found. 
\end{proof}

\section{Convergence of the state induced by the classical M\o ller map in the adiabatic limit on Minkowski spacetime}

\begin{lemma}\label{pr:adiabaticvacuum}
Let $\Delta^+_1(x,y)$ be the two--point function of the vacuum state with mass $m_1\ge 0$ and let $Q(f)\doteq \frac12(m_2^2-m_1^2)\int_\mM f \phi^2 d\mu_g$ with $m_2\ge 0$ (but not necessarily $m_2>m_1$). Then the integral kernel of $\Delta^+_{1+Q(f)} \doteq  R_{1,Q(f)}\circ \Delta^+_1 \circ R^\dagger_{1,Q(f)}$ converges to the integral kernel of the two--point function of the vacuum state with mass $m_2$ in the adiabatic limit $f\to 1$.
\end{lemma}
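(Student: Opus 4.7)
My plan is to reduce the claim to a one--dimensional problem via spatial Fourier transform and to apply an adiabatic--theorem (WKB) argument. Since the convergence is in $\mathcal{D}'(\mink^2)$, it suffices to consider cutoffs of product form $f(t,\bx) = \chi(t) h(\bx)$ and to take the limits $h \to 1$ and $\chi \to 1$ successively. In the spatial adiabatic limit, the translation invariance of $\Delta^+_1$ together with the convergent Neumann expansion of $R_{1,Q(f)}$ (Lemma \ref{pr:neumann series}) implies that $\Delta^+_{1+Q(\chi h)}$ converges in $\mathcal{D}'(\mink^2)$ to a spatially translation--invariant bidistribution $\Delta^+_{1+Q(\chi)}$, for which the spatial Fourier transform in the difference variable $\bx - \bx'$ is well defined.

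In each spatial Fourier sector, the problem becomes a one--dimensional time--dependent harmonic oscillator: writing $\widetilde\Delta^+_{1+Q(\chi)}(t,t';\bk) = \psi_\bk(t) \overline{\psi_\bk(t')}/(2\omega_1(\bk))$, the mode $\psi_\bk$ solves
$$\ddot\psi + \bigl(\bk^2 + m_1^2 + (m_2^2-m_1^2)\chi(t)\bigr)\psi = 0$$
with past--asymptotic behaviour $\psi_\bk(t) \sim e^{-i\omega_1(\bk) t}$ as $t \to -\infty$. For a standard adiabatic family $\chi_T(t) = F(t/T)$, the WKB expansion yields, on time intervals where $\chi_T = 1$,
$$\psi_\bk(t) = \sqrt{\omega_1(\bk)/\omega_2(\bk)}\; e^{i\Phi_T(\bk)}\; e^{-i\omega_2(\bk)t}\,\bigl(1 + O(T^{-1})\bigr)\,,$$
where the cutoff--dependent phase $\Phi_T(\bk)$ diverges as $T\to\infty$ but cancels in the bilinear $\psi_\bk(t)\overline{\psi_\bk(t')}$, leading to
$$\frac{\psi_\bk(t)\overline{\psi_\bk(t')}}{2\omega_1(\bk)} \;\longrightarrow\; \frac{e^{-i\omega_2(\bk)(t-t')}}{2\omega_2(\bk)} \qquad (T \to \infty)\,,$$
the right--hand side being exactly the spatial Fourier transform of $\Delta^+_2(x,y)$. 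Pairing with test functions whose supports eventually lie inside the $\chi_T = 1$ region and inverting the spatial Fourier transform then establishes $\Delta^+_{1+Q(f)} \to \Delta^+_2$ in $\mathcal{D}'(\mink^2)$.

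The main obstacle will be to make this adiabatic analysis rigorous uniformly in $\bk$, and in particular to handle the infrared region $|\bk| \to 0$ in the massless cases ($m_1 = 0$ or $m_2 = 0$), where the instantaneous frequency $\omega(t;\bk)^2 = \bk^2 + m_1^2 + (m_2^2-m_1^2)\chi(t)$ may degenerate and the naive adiabatic bounds fail. Since integration against smooth compactly supported test functions provides effective decay in $\bk$ and the measure $d^3\bk/\omega(\bk)$ is integrable near the origin in four spacetime dimensions for any $m \geq 0$, the infrared contribution can be controlled; this is precisely the four--dimensional non--tachyonic condition implicit in the hypotheses. An alternative, more structural route would be to establish existence of the limit in $\mathcal{D}'(\mink^2)$ independently (e.g., from uniform bounds coming from the Neumann expansion) and then identify the limit by uniqueness of the translation--invariant pure quasifree Hadamard state of mass $m_2$, but this would require separately verifying translation invariance, positivity and purity of the limit, each of which carries its own technical difficulties.
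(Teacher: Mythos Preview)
Your approach is essentially the same as the paper's: spatial Fourier decomposition reduces to a one--dimensional time--dependent oscillator, and an adiabatic/WKB comparison with the instantaneous modes gives convergence away from $\bk=0$, with separate infrared care in the massless cases. Two points of execution differ and are worth noting. First, the paper avoids your two--step limit $h\to 1$ then $\chi\to 1$ by observing that $R_{1,Q(f)}$ is already well--defined for $f$ with merely past--compact support; one may therefore take $f$ spatially constant from the outset, $f(t)=\int_{-\infty}^t\chi(\tau)\,d\tau$, and dispense with the spatial adiabatic limit entirely. Second, and more substantively, your infrared argument is the weakest link: integrability of $d^3\bk/\omega(\bk)$ near the origin is not by itself enough, because the WKB error term degenerates as $\bk\to 0$ and you need a bound on the modes that is \emph{uniform in the cutoff}. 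The paper obtains this by choosing $f(t)$ monotonically increasing and using the elementary energy monotonicity $\frac{d}{dt}(E_k/\omega^2)\le 0$ for $E_k=|\dot T_k|^2+\omega^2|T_k|^2$, which yields $|T_k(t)|^2\le 1/k$ uniformly in the adiabatic parameter; this is the missing concrete mechanism behind your assertion that ``the infrared contribution can be controlled''.
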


\begin{proof}We prove this statement in the spirit of QFT on cosmological spacetimes. To this avail we note that $\Delta^+_{1+Q(f)}$ is well--defined for all $f$ which have past--compact support and thus in particular for $f$ which are constant in space and have a time dependence of the form $f(t)=\int^t_{-\infty} \chi(\tau)d\tau$ for a $\chi\in\mD(\bR)$ which is positive and has unit integral. In this case, we may decompose the integral kernel of $\Delta^+_{1+Q(f)}$ as

$$\Delta^+_{1+Q(f)}(t_1,\bx_1, t_2, \bx_2)=\lim_{\epsilon\downarrow 0}\frac{1}{8\pi^2}\int \overline{T_k(t_1)} T_k(t_2) e^{i\bk\left(\bx_1-\bx_2\right)} e^{-\epsilon k}d\bk \,,$$
where $k=|\bk|$ and the temporal modes $T_k(t)$ satisfy the differential equation 
\beq\label{eq:adiabaticvacuumproof1}\left(\partial^2_t + \omega^2(t) \right)T_k(t)=0\,,\qquad \omega^2(t)\doteq k^2 +m^2_1+(m^2_2-m^2_1)f(t)\eeq
and the normalisation condition
$$\overline{\dot{T}_k(t)}T_k(t)-\overline{T_k(t)}\dot{T}_k(t)=i\,.$$
Here $\dot{\,}$ denotes a derivative w.r.t. $t$. Moreover, by construction the modes $T_k(t)$ equal the $m_1$--vacuum modes 
$$T_{1,k}\doteq  \frac{1}{\sqrt{2 \omega_1}} e^{-i\omega_1 t}\,,\qquad \omega^2_1 \doteq k^2 + m^2_1$$
in the past of $\supp f$ (up to an irrelevant constant phase).

We would like to show that in the adiabatic limit $f\to 1$ the modes $T_k$ converge to the $m_2$--vacuum modes $T_{2,k}$ defined in analogy to $T_{1,k}$. We do this by comparing both $T_k$ and $T_{2,k}$ to the adiabatic modes 
$$T_{a,k}(t) \doteq  \frac{1}{\sqrt{2 \omega(t)}} e^{-i\int^t_{t_0}\omega(\tau) d\tau}$$
in the adiabatic limit, cf. \eqref{eq:adiabaticvacuumproof1} for the definition of $\omega(t)$. Obviously we have $\lim_{f\to 1}T_{a,k} = T_{2,k}$. In order to check that $\lim_{f\to 1}(T_{a,k}-T_k) = 0$ as well, we observe that the adiabatic modes $T_{a,k}$ satisfy the differential equation 
\beq\label{eq:adiabaticvacuumproof2}\left(\partial^2_t + \omega^2(t) + \lambda(t) \right)T_{a,k}(t)=0\,,\qquad 
\lambda = \frac{1}{2} \frac{\ddot{\omega}}{\omega} - \frac{3}{4} \left(\frac{\dot\omega}{\omega}\right)^2  =
\frac{1}{4}\frac{(\omega^2)\ddot{\;}}{\omega^2} - \frac{5}{16} \left(\frac{{(\omega^2)}\dot{\;}}{\omega^2}\right)^2\,.
\eeq
and also equal the $m_1$--vacuum modes $T_{1,k}$ in the past of $\supp f$. We may thus construct the modes $T_k$ by means of a perturbation series over $T_{a,k}$ by treating $-\lambda$ as a perturbation potential. This gives 
\beq\label{eq:adiabaticvacuumproof3}
T_k = \sum_{n=0}^\infty   R_\lambda^n (T_{a,k})\,,\qquad R_\lambda(h) \doteq \int_{-\infty}^t \frac{\sin (\int_\tau^{t} \omega(\tau_1) d\tau_1 )}{\sqrt{\omega(t)\omega(\tau)}} \lambda(\tau)  h(\tau) d\tau\,, 
\eeq
where $R_\lambda|_{\lambda=1}$ is the retarded propagator for the differential equation \eqref{eq:adiabaticvacuumproof2}. Using \eqref{eq:adiabaticvacuumproof3} we may estimate the difference $T_{a,k}-T_k$ as
$$
|T_{a,k}-T_k| \leq \frac{1}{\sqrt{2\omega}}\left| -1  + \exp{ \int_{-\infty}^{\infty} \frac{|\lambda|}{\omega}   d\tau } \right|\,,
$$
Using the precise form of $\lambda$, we may further estimate the argument of the exponential as
\beq
\int_{-\infty}^{\infty} \frac{|\lambda|}{\omega}   d\tau  \leq C \left(\int_{-\infty}^{\infty}  |m_2^2-m_1^2||{(\omega^2)\ddot{\;}}| d\tau + \int _{-\infty}^{\infty}  (({\omega^2})\dot{\;})^2 d\tau    \right)
\eeq
for a suitable (dimensionful) constant $C$. 

Let us now consider $\chi_\mu(t) \doteq \frac{1}{\mu}\chi\left(\frac{t}{\mu}\right)$ in place of $\chi$ in the definition of the cutoff function $f$. For this special class of cutoffs, the adiabatic limit corresponds to $\lim_{\mu\to \infty}$. Indeed, with this parametrisation we have
$$
\int_{-\infty}^{\infty} \frac{|\lambda|}{\omega}   d\tau  
\leq C (m_2^2-m_1^2)^2  \left(\int_{-\infty}^\infty   |\dot{\chi}| d\tau + \int_{-\infty}^\infty  \chi^2 d\tau    \right) \frac{1}{\mu}\,.
$$
This proves $\lim_{f\to 1}(T_{a,k}-T_k) = 0$ and thus $\lim_{f\to 1}(T_{2,k}-T_k) = 0$.

The argument above fails in the case $m_1=0$ or $m_2=0$ on account of singularities at $k=0$. In order to see that these singularities are irrelevant, we assume without loss of generality $m_1=0$ and $m_2>0$ and recall that we have taken $f(t)$ to be monotonically increasing. Consequently, $\omega^2(t)=k+m^2_1+(m_2^2-m_1^2)f(t)$ is monotonically increasing as well and introducing the energy per ``retarded mode''
$$E_k\doteq |\dot{T}_k|^2 + \omega^2 |T_k|^2$$
we find
$$\frac{d}{dt}\left(\frac{E_k}{\omega^2}\right)= - |T_k|^2 \frac{(\omega^2)\dot{\;}}{\omega^4}\le 0\,.$$
For an arbitrary $t=t_0$ such that $f(t_0)=0$ we have $E_k(t_0)=\omega(t_0)=k$. From the temporal behaviour of $E_k/\omega^2$ we may thus infer
$$
|T_k(t)|^2 \le \frac{E_k(t)}{\omega^2(t)} \le \frac{E_k(t_0)}{\omega^2(t_0)}  = \frac{1}{k}\,. 
$$
Note that this bound is uniform in $f$ and thus holds in the adiabatic limit. Moreover the same bound holds also for the (constant) energy per mode of the $m_2$--vacuum modes $T_{2,k}$. Using these bounds, we may investigate the adiabatic limit as follows. We have (omitting the $\epsilon$--prescription)
\begin{align*}\Delta^+_{2}(x_1, x_2)-\Delta^+_{1+Q(f)}(x_1, x_2)=&\frac{1}{8\pi^2}\int_{\bR^3} \left(\overline{T_{2,k}(t_1)} T_{2,k}(t_2)-\overline{T_k(t_1)} T_k(t_2)\right) e^{i\bk\left(\bx_1-\bx_2\right)} d\bk \\
\doteq &\int_{\bR^3} I(\bk,x_1,x_2) d\bk = \int_{B_\epsilon} I(\bk,x_1,x_2) d\bk + \int_{\bR^3\setminus B_\epsilon} I(\bk,x_1,x_2) d\bk
\,,
\end{align*}
where $B_\epsilon\subset \bR^3$ is the solid sphere of radius $\epsilon$. The $k>\epsilon$ contribution to the integral converges in the adiabatic limit by the arguments used above.  The $k<\epsilon$ contribution can be estimated using the bounds previously discussed. Indeed $|I(\bk,x_1,x_2)|\le (4\pi^2 k)^{-1}$ and thus the integral over $B_\epsilon$ is $\cO(\epsilon^2)$ and negligible in the limit $\epsilon\to 0$ uniformly in $f$.

 Note that the proof fails in the tachyonic case $m^2_1<0$ or $m^2_2<0$ because the infrared singularity in this case is more severe than in the one occurring in the massless case, in particular the singularity in $\omega^{-1}$ is not integrable anymore.

\end{proof}

\end{document}